\documentclass[11pt, article]{IEEEtran}
\onecolumn


\topmargin=-0.5in \headsep=0.5in \oddsidemargin=0.0in
\textwidth=6.5in
\textheight=9.0in
\parskip=1.5ex
\parindent=2ex

\footnotesep=3.0ex
\usepackage{amsmath,amsfonts,amssymb,amscd,amsthm,dsfont}
\usepackage{graphicx,epsfig}
\usepackage{color}
\usepackage{mathtools}
\usepackage{bbm}
\usepackage{url}
\usepackage{balance}
\usepackage{epstopdf}
\usepackage{subfigure,enumitem}
\usepackage{cite,url,hyperref}
\usepackage{verbatim}
\usepackage{bm}
\usepackage{multicol}
\usepackage{booktabs}
\usepackage{multirow}
\usepackage{multicol}
\usepackage{algorithm}
\usepackage{algpseudocode}

\newcommand{\mE}{\mathbb{E}}

\newcommand{\nn}{\nonumber}

\newtheorem{theorem}{\textbf{Theorem}}

\newtheorem{lemma}{\textbf{Lemma}}

\newtheorem{definition}{\textbf{Definition}}
\newtheorem{remark}{\textbf{Remark}}

\usepackage{color}

\allowdisplaybreaks

%

\begin{document}
\title{Quickest Change Detection in Autoregressive Models
}
\author{Zhongchang Sun \quad Shaofeng Zou
	\thanks{Zhongchang Sun and Shaofeng Zou are with the Department of Electrical Engineering, University at Buffalo, Buffalo, NY 14228 USA (e-mail: \href{mailto:zhongcha@buffalo.edu}{zhongcha@buffalo.edu}, \href{mailto:szou3@buffalo.edu}{szou3@buffalo.edu}).}
}
\maketitle
\begin{abstract}
The problem of quickest change detection (QCD) in autoregressive (AR) models is investigated. A system is being monitored with sequentially observed samples. At some unknown time, a disturbance signal occurs and changes the distribution of the observations. The disturbance signal follows an AR model, which is dependent over time. Before the change, observations only consist of measurement noise, and are independent and identically  distributed (i.i.d.). After the change, observations consist of the disturbance signal and the measurement noise, are dependent over time, which essentially follow a continuous-state hidden Markov model (HMM). The goal is to design a stopping time to detect the disturbance signal as quickly as possible subject to false alarm constraints. Existing approaches for general non-i.i.d.\ settings and discrete-state HMMs cannot be applied due to their high computational complexity and memory consumption, and they usually assume some asymptotic stability condition. In this paper, the asymptotic stability condition is firstly theoretically proved for the AR model by a novel design of forward variable and auxiliary Markov chain. 
A computationally efficient Ergodic CuSum algorithm that can be updated \textit{recursively} is then constructed and is further shown to be asymptotically optimal. 
The data-driven setting where the disturbance signal parameters are unknown is further investigated, and an online and computationally efficient gradient ascent CuSum algorithm is designed. The algorithm is constructed by iteratively updating the estimate of the unknown parameters based on the maximum likelihood principle and the gradient ascent approach. The lower bound on its average running length to false alarm is also derived for practical false alarm control. Simulation results are provided to demonstrate the performance of the proposed algorithms.
\end{abstract}
\begin{IEEEkeywords}
	Hidden Markov model, forward variable, sequential change detection, asymptotic optimality, non-i.i.d..
\end{IEEEkeywords}

\section{Introduction}
The problem of quickest change detection (QCD) has been widely studied in the literature \cite{poor-hadj-qcd-book-2009,veeravalli2013quickest,tartakovsky2014sequential,xie2021sequential}, where the goal is to detect an abrupt change in the data-generating distribution as quickly as possible, subject to false alarm constraints. For the basic setting where the observations are independent over time, algorithms, e.g.,  cumulative sum (CuSum) algorithm and Shiryaev-Roberts algorithm have been proposed and have been shown to be optimal  \cite{page1954continuous,pollak1985optimal,moustakides1986optimal,ritov1990decision,tartakovsky1995asymptotic}. However, for a wide range of practical applications, observations may not be independent over time. For example, in power systems,
the faulty signal is dependent in time \cite{gu2000armodel} and such dependency is usually exploited by an autoregressive (AR) model \cite{sidorov2010non, li2013monitor, li2016cooperative, kurt2018distributed, chen2018quickest}.

In this paper, we investigate the problem of QCD in AR models. Specifically, before the change, the observed signal only consists of Gaussian measurement noise. After the change, a non-i.i.d. disturbance signal occurs in the system, and the observation consists of the disturbance signal and the measurement noise. The disturbance signal follows an AR model, and thus the observation equivalently follows a hidden Markov model (HMM) \cite{cappe2007inference}. The goal is to detect the occurrence of the disturbance signal as quickly as possible subject to false alarm constraints.

\subsection{Related Works}
The problem in this paper is closely related to QCD in Markov chains \cite{Yakir1994optimal,polansky2007detecting, raghavan2010quickest, darkhovsky2011change, xian2016online, chen2022change, ford2023exactly,lai1998information,zhang2023data1}, where the pre- and/or post-change samples follow the Markov model. In \cite{Yakir1994optimal}, optimal change detection algorithms were proposed for finite-state Markov chains under the Bayesian setting and the minimax setting. 
The continuous state Markov chain was studied in \cite{lai1998information}, where the CuSum algorithm was proved to be asymptotically optimal and its performance was characterized. In \cite{raghavan2010quickest}, the change detection in sensor networks under the Bayesian setting was studied where the change propagates across sensors and its propagation was modeled as a Markov process. It was shown that an exactly optimal algorithm
for QCD of Markov chains under the Bayesian setting is a threshold test on the posterior of no change has happened in \cite{ford2023exactly}.
In \cite{xian2016online}, two CuSum type schemes were proposed for the QCD of finite-state Markov chain with unknown post-change transition probabilities, and bounds on the average detection delay (ADD) and the average running length (ARL) were further given for the proposed schemes. A maximum mean discrepancy (MMD) based method for change detection of Markov kernels with unknown post-change kernel was proposed in \cite{chen2022change,zhang2023data1} for general state Markov chain. However, the algorithms and analyses for Markov chains cannot be applied to our problem directly. This is because with measurement noise, the post-change samples in this paper essentially follow an HMM, and the hidden state is not directly observable. 

The problem of QCD in discrete-state HMMs has been studied in \cite{chen2000detection,gereneser2004change,dayanik2009sequential, Fuh2015quickest,fuh2018asymptotic,ford2023hidden}.
In \cite{chen2000detection}, Page's test was extended to the QCD of discrete-state HMMs and the ARL under both hypotheses were approximated. In \cite{dayanik2009sequential}, the detection of the change point at which the discrete-state HMM enters the absorbing state and the identification of the absorbing set was investigated and optimal decision rule was provided. The QCD problem in two-state HMM was studied in \cite{Fuh2015quickest}, where two computationally efficient schemes were developed. In \cite{fuh2018asymptotic}, the Shiryaev algorithm was proved to be  asymptotically optimal under some regularity conditions for the QCD in discrete-state HMMs under the Bayesian setting. The exact optimal algorithm for QCD in discrete-state HMM under the Bayesian setting was established in \cite{ford2023hidden}.
These works mainly focus on the discrete-state HMM. The AR model in our paper is an HMM model with a continuous state space. To the best of the author's knowledge, research on QCD in continuous HMMs is rather limited. The performance characterization of Shiryaev algorithm for the discrete-state HMM relies on the limiting theory for products of random matrices \cite{Bougerol1988ThormesLP} and the computational complexity scales with the size of the state space. For the continuous-state Markov chain, the theory of random matrices is not applicable anymore since the number of state is uncountably infinite. 

The AR models are commonly used in power system to characterize the disturbance signal and inter-area oscillations \cite{sidorov2010non, li2013monitor, li2016cooperative, kurt2018distributed, chen2018quickest}. In \cite{bagshaw1977sequential,davis1995testing,gombay2008change,li2013monitor,kang2014parameter,hilgert2016change,akashi2018change}, the change detection in the AR model was studied where the observation is an AR time series. Our problem is fundamentally different from \cite{bagshaw1977sequential,davis1995testing,gombay2008change,li2013monitor,kang2014parameter,hilgert2016change,akashi2018change} since the disturbance signal which has AR structure is unobservable in our problem. The observation is a noised version of the AR time series which makes our problem more challenging.
The quickest detection of cyber-attacks in discrete-time linear dynamic system, which is an AR model, was studied in \cite{kurt2018distributed}. The Kalman filter \cite{kalman1960new} was first performed to estimate the state and the generalized CuSum algorithm was further designed using the state estimation. Though the Kalman filter is effective in state estimation, the performance of the detection algorithm is difficult to characterize due to the estimation error. In \cite{kalus2015distributed}, a robust algorithm was proposed for distributed change detection in AR models with noised observations. However, no theoretical guarantee was provided.
In our paper, we design an Ergodic CuSum algorithm directly and show that it is asymptotically optimal under the Lorden's criterion \cite{lorden1971procedures}.

In our paper, we consider both the model-based setting where parameters of the disturbance signal are known and the data-driven setting where parameters of the disturbance signal are unknown. For the data-driven setting, our problem can be viewed as a composite QCD problem \cite{siegmund1995using,lai1998information,banerjee2015composite,zou2018quickest,mei2006sequential,Brodsky2005,pergamenchtchikov2019asymptotically, cao2018sequential}. Existing works mainly assume that the samples are i.i.d., while in the AR models, the observations are dependent in time. In \cite{chen2018quickest}, the QCD in AR model was studied where the properties of the post-change signal are unknown. Based on the assumption that the parameters of the post-change signal were small, the generalized likelihood ratio test (GLRT) was proposed. Our work is different from the works in \cite{chen2018quickest} since we do not assume that the parameter of the post-change signal is small. Moreover, simulation results demonstrate that our proposed algorithm outperforms the GLRT in \cite{chen2018quickest}.

\subsection{Contributions and Major Challenges}
For the model-based setting, we  design an Ergodic CuSum algorithm which can be updated recursively and thus is computationally efficient, and we theoretically prove that it is asymptotically optimal. While the general theory for non-i.i.d. setting were developed in e.g., \cite{lai1998information,tartakovsky2005general,baron2006asymptotic,tartakovsky2017asymptotic}, it relies on some asymptotic stability condition that the normalized log likelihood ratio between the post- and pre-change distributions converges to some finite and positive number. However, there are very few studies that try to verify this condition, e.g., discrete HMMs in \cite{fuh2018asymptotic} and Markov model in \cite{tartakovsky2005general}.
For the AR model, whether such stability condition holds remains unexplored. 
In this paper, we show that the normalized log likelihood ratio converges to some $\mathcal{K}>0$ almost surely under the post-change distribution, i.e., satisfies the stability condition.

The difficulty of analyzing the convergence of the log likelihood ratio lies in that the log likelihood function for the AR model has a non-additive form due to the non-i.i.d. nature of the observation. 
For the discrete-state HMM in \cite{fuh2018asymptotic}, the likelihood ratio function is represented by the $L_1$-norm of products of Markov random matrices and thus the log likelihood ratio has an additive form. However, the Markov random matrices techniques can not be applied to our problem since our AR model has uncountably infinite hidden states. 
To overcome this difficulty, we represent the likelihood function as the integration of the product of a sequence of functions based on the hidden Markov structure of the observation. We further show that the likelihood function can be represented as the integration of a scaled Gaussian density function. We define this scaled Gaussian density function as the forward variable and show that it can be updated recursively. Since the scaled Gaussian density function has parameterized representation, we develop a novel approach to write the conditional likelihood ratio using the parameter of the forward variable and the observation. The log likelihood ratio can then be written as the sum of the conditional log likelihood ratio and thus has an additive form. To characterize the convergence of the log likelihood ratio, we design an auxiliary Markov chain using the parameter of the forward variable and the observation and represent the likelihood function as a function of the auxiliary Markov chain. We further show that the stationary distribution of this auxiliary Markov chain exists and then show that $\mathcal{K}$ exists by applying the ergodic theorem \cite{roberts2004general, meyn2012markov} to the auxiliary Markov chain. With the convergence of the log likelihood ratio, we derive the universal lower bound on the worst-case average detection delay (WADD). 

For the CuSum algorithm designed for the general non-i.i.d. setting, e.g.,  \cite{lai1998information}, it is computationally expensive for our AR model since the likelihood ratio for our AR model depends on the change point and we need to compute the likelihood ratio for every possible change point. Motivated by the dependence of the WADD lower bound on ${\mathcal{K}}$, we aim to find a computationally efficient test statistic that has a positive drift of $\mathcal{K}$ under the post-change distribution and has a negative drift under the pre-change distribution. We propose a novel Ergodic CuSum algorithm by exploiting the ergodicity of the underlying HMM and employing the likelihood ratio of the observations with change point at time $1$. Since the likelihood function can be written as a function of the observation and the forward variable, our Ergodic CuSum algorithm can then be updated recursively, and thus is computationally efficient. 
From the ergodic theorem \cite{roberts2004general, meyn2012markov} of Markov chain, the long term statistical performance of our test statistic is independent of the state when the change occurs. Based on this fact, the test statistic of our Ergodic CuSum algorithm converges to $\mathcal{K}$ in the post-change regime despite of the fact that time 1 is not the true change point. Under the pre-change distribution, we show that our test statistic has a negative drift. Moreover, since our test statistic is a likelihood ratio function of the current observation, the ARL lower bound can be derived following the proof of the ARL lower bound for general non-i.i.d. case.
Therefore, we show that when satisfying the constraint on the average running length (ARL), the WADD upper bound of our Ergodic CuSum algorithm matches with the universal lower bound and thus the asymptotic optimality of our Ergodic CuSum algorithm follows. 

For the data-driven setting, the generalized likelihood ratio test (GLRT) which replaces the unknown post-change parameter with its maximum likelihood estimate (MLE) \cite{lai1998information} is not computationally efficient. We propose an online gradient ascent CuSum algorithm (OGA-CuSum) based on the online convex optimization (OCO) algorithm \cite{raginsky2012sequential, cao2018sequential}. We iteratively update the estimate of the unknown parameters based on the maximum likelihood principle using gradient ascent. Specifically, at each time, we compute the gradient of the log likelihood ratio function with respect to the unknown parameters based on the current observation and update the estimate in the direction of the gradient. We then replace the unknown parameter in the Ergodic CuSum with its estimate to construct the OGA-CuSum. The OGA-CuSum can be updated recursively using only the most recent sample and thus is memory and computationally efficient. We derive a lower bound on its ARL so that a threshold can be chosen analytically to control the false alarm. We also provide simulation results to demonstrate the performance of our algorithms. 

\subsection{Paper Organization}
The remainder of this paper is organized as follows. In Section \ref{sec:problemmodel}, we present the problem formulation. In Section \ref{sec:waddlower}, we present the design of a forward variable and establish the universal lower bound on WADD. In Section \ref{sec:algorithm}, we develop a computationally efficient Ergodic CuSum algorithm and show that it is asymptotically optimal under Lorden's criterion. In Section \ref{sec:datadriven}, we design an online gradient ascent CuSum algorithm for the case where the post-change parameters are unknown. Numerical results are provided to demonstrate the performance of our algorithms in \ref{sec:numerical}. In Section \ref{sec:conclusion}, we present some concluding remarks.

\section{Problem formulation}\label{sec:problemmodel}
Consider a system being monitored by a sequentially observed signal $\bm{y}_t \in\mathbb{R}^K, t = 1, 2,\cdots$. At some unknown time $t_0$, a disturbance signal occurs in the system and changes the distribution of the observed signal. Specifically, before the change point $t_0$, the observed signal consists of the measurement noise $\bm{\nu}_t$ only, which is Gaussian distributed and independent over time:
\begin{flalign}
\bm{y}_t = \bm{\nu}_t \sim \mathcal{N}(\bm 0, \bm{I}),\ t < t_0,
\end{flalign}
where $\bm I$ is the $K\times K$ identity matrix. The post-change signal consists of the disturbance signal $\bm {x}_t$ and the measurement noise:
\begin{flalign}\label{eq:ymod}
\bm{y}_t = \bm{x}_t + \bm{\nu}_t,\ t \geq t_0,
\end{flalign}
Note that the identity covariance matrix of $\bm\nu_t$ can be satisfied by applying a linear transformation to whiten the noise in the observation $\bm{y}_t$. The disturbance signal follows an AR model \cite{sidorov2010non, li2013monitor, li2016cooperative, kurt2018distributed, chen2018quickest}:
\begin{flalign}\label{eq:highorder}
    \bm x_t = \sum_{i=1}^q \bm A_i \bm x_{t-i} + \bm\omega_t,
\end{flalign}
where $\bm A_i \in \mathbb{R}^{K\times K}$ is the matrix coefficient and is invertible, $\bm\omega_t \in \mathbb R^K$ is the innovation noise vector and $\bm\omega_t\sim \mathcal{N}(\bm 0, \bm R_\omega)$ and $q$ denotes the order of the AR model. 
The disturbance signal $\bm x_t$ is not directly observable. 
The goal is to detect the change at time $t_0$ as soon as possible subject to false alarm constraints.

Denote by $\mathbb{P}_\infty$ the probability measure when there is no change, and denote by $p_\infty$ the corresponding probability density. 
For any $t_0>0$, denote by $\mathbb{P}_{t_0}$ the probability measure when the change happens at $t_0$, and denote by $p_{t_0}$ the corresponding probability density. It is clear that 
\begin{align}
    p_\infty(\bm y_t|\bm y_1,\cdots ,\bm y_{t-1})&=p_\infty(\bm y_t),\nn\\
    p_{t_0}(\bm y_t|\bm y_{1},\cdots, \bm y_{t-1})&=p_{\infty}(\bm y_t), \text{ if } t < t_0,\nn\\
    p_{t_0}(\bm y_t|\bm y_{1},\cdots, \bm y_{t-1})&=p_{t_0}(\bm y_t|\bm y_{t_0},\cdots, \bm y_{t-1}),\text{ if } t \geq t_0.
\end{align}


The AR model is equivalent to an HMM. Let $f_1(\bm x_{t_0}, \bm x_{t_0+1}, \cdots, \bm x_{t_0+q-1})$ be the joint probability density of the initial state $(\bm x_{t_0}, \bm x_{t_0+1}, \cdots, \bm x_{t_0+q-1})$. Let $f(\bm x_t|\bm x_{t-q}, \cdots, \bm x_{t-1})$ denote the transitional probability density of the HMM and $g(\bm y_t|\bm x_t)$ be the conditional probability density of the observation $\bm y_t$ given the hidden state $\bm x_t$. 
The probability density $p_{t_0}$ then satisfies that for $t\geq t_0+q$
\begin{flalign}
p_{t_0}(\bm y_1, \cdots, \bm y_t) = &p_{\infty}(\bm y_1, \cdots, \bm y_{t_0-1})\cdot \int f_1(\bm x_{t_0}, \bm x_{t_0+1}, \cdots, \bm x_{t_0+q-1})g(\bm y_{t_0}|\bm x_{t_0})\cdots g(\bm y_{t_0+q-1}|\bm x_{t_0+q-1})\nn\\&  f(\bm x_{t_0+q}|\bm x_{t_0}, \cdots, \bm x_{t_0+q-1})\cdots g(\bm y_{t}|\bm x_{t}) d\bm x_{t_0}\bm x_{t_0+1}\cdots\bm x_{t}.
\end{flalign}

In this paper, we consider a deterministic but unknown change point $t_0$. The goal is to detect the change as quickly as possible subject to false alarm constraints based on the sequentially observed samples. Let $\mathcal{F}_t$ be the $\sigma$-algebra generated by the first $t$ samples $\bm y_1, \bm y_2, \cdots, \bm y_t$. A stopping time $\tau$ is a random variable with the property that for each $t$, the event $\{\tau = t\} \in \mathcal{F}_t$. We define the worse-case average detection delay (WADD) under Lorden's criterion \cite{lorden1971procedures} and the average running length (ARL) for any stopping time $\tau$ as follows:
\begin{flalign}
\text{WADD}(\tau) &\triangleq \sup_{t_0\geq 1}\text{esssup} \mathbb{E}_{t_0}\left[(\tau-t_0)^+|\bm y_1, \cdots, \bm y_{t_0-1}\right],\nn\\
\text{ARL}(\tau) &\triangleq \mathbb{E}_\infty[\tau],
\end{flalign}
where $\mathbb{E}_{t_0}$ ($\mathbb{E}_\infty$) denotes the expectation under the probability measure $\mathbb{P}_{t_0}$ ($\mathbb{P}_{\infty}$). The goal is to design a stopping time that minimizes the WADD subject to the constraint on the ARL:
\begin{flalign}\label{eq:goal}
\inf_{\tau:\text{ARL}(\tau)\geq \gamma} \text{WADD}(\tau),
\end{flalign}
where $\gamma>0$ is a pre-specified threshold.

\section{Universal Lower Bound on WADD}\label{sec:waddlower}
For convenience, we first present results for the first-order AR model to illustrate our approach. 
We will then show the generalization to any $q$-th order AR model. We note that any $q$-th order AR model can be equivalently converted to a first-order AR model (see details in Appendix \ref{app:convert}).

Specifically, the first-order AR model is defined as:
\begin{flalign}\label{eq:xmod}
\bm x_t = \bm A \bm x_{t-1} + \bm\omega_t,
\end{flalign}
where $\bm A \in \mathbb{R}^{K\times K}$ is the matrix coefficient and is invertible. The initial disturbance signal $\bm x_{t_0}$ is assumed to be Gaussian with probability density $f_1(\bm x_{t_0})$ and is independent of the pre-change observations. We assume that the operator norm of $\bm A$ is strictly less than 1. This assumption guarantees the stability of this system \cite{bhatia2002stability}.

In the following, we derive a universal lower bound on the WADD for any $\tau$ satisfying the false alarm constraint: $\mE_\infty[\tau]\geq \gamma$. To derive the universal lower bound, we first prove the following stability condition \cite{lai1998information,tartakovsky2005general,baron2006asymptotic,tartakovsky2017asymptotic}: convergence of the  log likelihood ratio
\begin{flalign}\label{eq:logratio}
\lim_{t\rightarrow\infty}\frac{1}{t}\log\frac{p_{t_0}(\bm y_{t_0},\cdots, \bm y_{t_0+t-1})}{p_\infty(\bm y_{t_0},\cdots, \bm y_{t_0+t-1})}.
\end{flalign}

Note that for the AR model, the log likelihood ratio in \eqref{eq:logratio} does not admit an additive form, and thus its convergence analysis is challenging. To overcome this difficulty, we first introduce a forward variable $\alpha_t$, which is a scaled Gaussian density function and can be updated recursively. We show that the conditional density $p_{t_0}(\bm y_t|\bm y_{t_0}, \cdots, \bm y_{t-1})$ can be represented using $\bm y_{t}$ and the parameters of the forward variable $\alpha_t$. The log likelihood ratio in \eqref{eq:logratio} can then be written in an additive form. To further prove the convergence of the log likelihood ratio, we construct an auxiliary Markov chain using $\bm y_t$ and the parameter of the forward variable $\alpha_t$. We show that the stationary distribution of this auxiliary Markov chain exists and then show the convergence of the log likelihood ratio in \eqref{eq:logratio} by applying the ergodic theorem \cite{roberts2004general,meyn2012markov}. The universal lower bound on the WADD can then be derived.

\subsection{Forward Variable}
We first introduce a forward variable, which plays an important role in deriving the universal lower bound on the WADD. Moreover, our proposed algorithm in Section \ref{sec:algorithm} can be recursively updated using the forward variable.

Note that for the i.i.d. setting, the log likelihood ratio of $\bm y_{t_0}, \cdots, \bm y_t$ can be written as the sum of the log likelihood ratio of each individual sample. For the discrete state HMM in \cite{fuh2018asymptotic}, the likelihood ratio function is represented by the $L_1$-norm of products of Markov random matrices and thus the log likelihood ratio has an additive form. However, the Markov random matrices techniques can not be applied to our problem since our AR model has uncountably infinite hidden states. 

Observe that $p_{t_0}(\bm y_{t_0}, \cdots, \bm y_t) = \int f_1(\bm x_{t_0})g(\bm y_{t_0}|\bm x_{t_0})f(\bm x_{t_0+1}|\bm x_{t_0})\cdots g(\bm y_{t}|\bm x_{t}) d\bm x_{t_0}\bm x_{t_0+1}\cdots\bm x_{t}$ is the integration of the products of a sequence of functions. This motivates us to represent the likelihood ratio function by replacing the Markov random matrices and the $L_1$-norm of matrices in \cite{fuh2018asymptotic} with functions and integration of functions, respectively. Moreover, we leverage the Gaussian property of the innovation noise and the measurement noise in our AR model to further show that these functions can be represented using the parameters of the Gaussian density function. Therefore, the likelihood ratio function admits an additive form and can be updated efficiently.

We define the following forward variable to compute the likelihood function after the change point ($t\geq t_0$),
\begin{flalign}
\alpha_t(\bm x_t) = p_{t_0}(\bm y_{t_0}, \cdots, \bm y_t, \bm x_t).
\end{flalign}
Note that $\alpha_t(\bm x_t)$ also depends on $\bm y_{t_0}, \cdots, \bm y_t$. For notational convenience, we treated $\bm y_{t_0}, \cdots, \bm y_t$ as known parameters here and write $\alpha_t$ as only a function of $\bm x_t$.
We then have that 
\begin{flalign}
p_{t_0}(\bm y_{t_0}, \cdots, \bm y_t) = \int \alpha_t(\bm x_t)d\bm x_t.
\end{flalign}
It can be easily verified that the following recursion holds for the forward variable:
\begin{flalign}
\alpha_{t+1}(\bm x_{t+1}) = \int \alpha_t(\bm x_{t}) f(\bm x_{t+1}|\bm x_t)g(\bm y_{t+1}|\bm x_{t+1}) d\bm x_t. \nn
\end{flalign}

We first provide a formal definition for the Gaussian function.

\begin{definition}
A Gaussian function $f^\prime(\cdot): \mathbb{R}^K\rightarrow \mathbb{R}$ is a function of the form 
\begin{flalign}
    f^\prime(\bm x) = a\exp\Big(-\frac{1}{2}(\bm x-\bm \mu)^\top \bm\Sigma^{-1}(\bm x-\bm \mu)\Big),
\end{flalign}
where $a$ is a constant, $\bm \mu\in\mathbb{R}^K$ is an arbitrary vector and $\bm\Sigma\in\mathbb{R}^{K\times K}$ is a positive definite matrix.
\end{definition}

In the following lemma, we show that $\alpha_t(\bm x_t)$ is a Gaussian function of $\bm x_t$.
\begin{lemma}\label{lemma:forward}
$\alpha_t(\bm x_t)$ is a Gaussian function of $\bm x_t$.
\end{lemma}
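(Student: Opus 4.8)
The plan is to prove the claim by induction on $t$, using the recursion for the forward variable together with the fact that Gaussian functions are closed under the three operations that appear in the recursion: multiplication by a Gaussian likelihood, multiplication by a Gaussian transition kernel, and integration (marginalization) over one Gaussian-distributed variable. The key observation is that all the ingredients are Gaussian functions: the initial density $f_1(\bm x_{t_0})$ is Gaussian by assumption; the emission density $g(\bm y_t \mid \bm x_t)$, being $\mathcal{N}(\bm x_t, \bm I)$ as a function of $\bm x_t$, is a Gaussian function; and the transition density $f(\bm x_{t+1} \mid \bm x_t)$, being $\mathcal{N}(\bm A \bm x_t, \bm R_\omega)$, is jointly Gaussian in $(\bm x_t, \bm x_{t+1})$.

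For the base case, I would compute $\alpha_{t_0}(\bm x_{t_0}) = f_1(\bm x_{t_0}) g(\bm y_{t_0} \mid \bm x_{t_0})$, which is a product of two Gaussian functions of $\bm x_{t_0}$; completing the square in the exponent shows this is again a Gaussian function (the product of two exponentials of negative-definite quadratics is an exponential of a negative-definite quadratic, with a scalar prefactor absorbed into the constant $a$). For the inductive step, assume $\alpha_t(\bm x_t) = a_t \exp\!\big(-\tfrac{1}{2}(\bm x_t - \bm\mu_t)^\top \bm\Sigma_t^{-1}(\bm x_t - \bm\mu_t)\big)$ is a Gaussian function. Plugging into the recursion, the integrand $\alpha_t(\bm x_t) f(\bm x_{t+1}\mid \bm x_t) g(\bm y_{t+1}\mid \bm x_{t+1})$ is, as a function of the pair $(\bm x_t, \bm x_{t+1})$, a product of Gaussian functions and hence an exponential of a negative-definite joint quadratic form in $(\bm x_t, \bm x_{t+1})$ times a constant. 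Integrating out $\bm x_t$ is a Gaussian integral: writing the joint exponent in block form and completing the square in $\bm x_t$, the $\bm x_t$-integral evaluates to a constant (times a determinant factor) multiplied by an exponential that is a negative-definite quadratic in the remaining variable $\bm x_{t+1}$. This is precisely a Gaussian function of $\bm x_{t+1}$, closing the induction.

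The main thing to verify carefully — and the step I expect to be the only real obstacle — is that positive-definiteness is preserved at each stage, so that the resulting quadratic forms genuinely correspond to Gaussian functions (i.e. $\bm\Sigma_{t+1}$ is a legitimate positive-definite covariance) rather than degenerate or divergent objects. Concretely, I would need to check that the Schur complement arising when marginalizing $\bm x_t$ out of the joint quadratic is positive definite; this follows because the block matrix of the joint exponent is positive definite (being a sum of positive-semidefinite contributions from $\bm\Sigma_t^{-1}$, $\bm R_\omega^{-1}$, and $\bm I$ arranged so the full form is coercive), and the Schur complement of a positive-definite block matrix is positive definite. The invertibility of $\bm A$ and the positive-definiteness of $\bm R_\omega$ guarantee the transition kernel contributes a nondegenerate coupling.

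I would not grind through the explicit update formulas for $\bm\mu_{t+1}$, $\bm\Sigma_{t+1}$, and $a_{t+1}$ in the statement-level argument, since the claim asserts only the \emph{form} of $\alpha_t$; the explicit recursions (which are the standard Kalman-filter prediction/update equations) can be recorded separately for use in the algorithm. The essential content is the closure property, and the proof is a clean induction once one accepts that products and Gaussian marginals of Gaussian functions are Gaussian functions with positive-definite covariance.
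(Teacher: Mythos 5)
Your proposal is correct and follows essentially the same route as the paper: an induction in which the base case is the product of the Gaussian prior with the Gaussian emission density, and the inductive step uses closure of Gaussian functions under multiplication and Gaussian marginalization, with positive-definiteness of the resulting covariance checked along the way. The only difference is presentational: the paper carries out the explicit updates for $\bm\mu_t$, $\bm\Sigma_t$, $c_t$ inside the proof (because they are reused later for the recursive algorithm), whereas you defer them, which is fine for the lemma itself.
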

\begin{proof}

We prove this by induction. 
If the distribution of $\bm x_{t_0}$ is Gaussian, $\alpha_1(\bm x_{t_0}) = f_1(\bm x_{t_0})g(\bm y_{t_0}|\bm x_{t_0})$ is a Gaussian distribution times a coefficient, which is a Gaussian function. This can be proved by computing the product of two Gaussian density functions. 
Assume that for some $c_{t-1},\bm\Sigma_{t-1}$ and $\bm\mu_{t-1}$,
\begin{flalign}
&\alpha_{t-1}(\bm x_{t-1}) = \frac{c_{t-1}}{\sqrt{(2\pi)^K\det(\bm \Sigma_{t-1})}}\exp\big(-\frac{1}{2}(\bm x_{t-1}-\bm\mu_{t-1})^\top\bm\Sigma_{t-1}^{-1}(\bm x_{t-1}-\bm\mu_{t-1})\big),
\end{flalign}
which is a Gaussian function of $\bm x_{t-1}$. We will show that $\alpha_t(\bm x_{t})$ is also a Gaussian function of $\bm x_t$. 
From the product and the convolution of Gaussian density functions, we have that
\begin{flalign}
&\alpha_t(\bm x_{t}) = \int \alpha_t(\bm x_{t-1}) f(\bm x_{t}|\bm x_{t-1})g(\bm y_{t}|\bm x_{t}) d\bm x_{t-1} \nn\\&= \frac{c_{t}}{\sqrt{(2\pi)^K\det(\bm \Sigma_{t})}}\exp\big(-\frac{1}{2}(\bm x_{t}-\bm\mu_t)^\top\bm\Sigma_{t}^{-1}(\bm x_{t}-\bm\mu_t)\big),
\end{flalign}
where 
\begin{flalign}\label{eq:updaterule}
&\bm\Sigma_t = (\bm A\bm\Sigma_{t-1}\bm A^\top + \bm R_{\omega})(\bm A\bm\Sigma_{t-1}\bm A^\top + \bm R_{\omega} +\bm I)^{-1},\nn\\&
\bm \mu_{t} = (\bm A\bm\Sigma_{t-1}\bm A^\top + \bm R_{\omega} +\bm I)^{-1}\bm A\bm\mu_{t-1} + (\bm A\bm\Sigma_{t-1}\bm A^\top + \bm R_{\omega})(\bm A\bm\Sigma_{t-1}\bm A^\top + \bm R_{\omega} +\bm I)^{-1}\bm y_t,
\end{flalign}
and 
\begin{flalign}\label{eq:forwardsigma}
&\frac{c_t}{c_{t-1}} = \frac{1}{\sqrt{(2\pi)^K\det(\bm A\bm\Sigma_{t-1}\bm A^\top + \bm R_{\omega})}}\frac{1}{\sqrt{\det((\bm A\bm\Sigma_{t-1}\bm A^\top + \bm R_{\omega})^{-1} + \bm I)}}\nn\\
&\hspace{0.5cm}\cdot\exp\bigg(-\frac{1}{2}\Big((\bm A\bm\mu_{t-1})^\top(\bm A\bm\Sigma_{t-1}\bm A^\top + \bm R_{\omega})^{-1}(\bm A\bm\mu_{t-1})\nn\\
&\hspace{0.5cm}+\bm y_{t}^\top \bm y_t - \big((\bm A\bm\Sigma_{t-1}\bm A^\top + \bm R_{\omega})^{-1}(\bm A\bm\mu_{t-1}) + \bm y_t\big)^\top\nn\\&\hspace{0.5cm}\big((\bm A\bm\Sigma_{t-1}\bm A^\top + \bm R_{\omega})^{-1} + \bm I\big)^{-1}\big((\bm A\bm\Sigma_{t-1}\bm A^\top + \bm R_{\omega})^{-1}(\bm A\bm\mu_{t-1}) + \bm y_t\big) \Big)\bigg).
\end{flalign}
Moreover, from the update rule of $\bm \Sigma_t$, it can be seen that $\bm \Sigma_t$ is positive definite. Therefore, $\alpha_t(\bm x_t)$ is a scaled Gaussian distribution of $\bm x_t$ with coefficient $c_t$.
%
\end{proof}
With the forward variable, the conditional density $p_{t_0}(\bm y_t|\bm y_{t_0}, \cdots, \bm y_{t-1})$ can be written as follows:
\begin{flalign}
&p_{t_0}(\bm y_t|\bm y_{t_0}, \cdots, \bm y_{t-1}) = \frac{p_{t_0}(\bm y_{t_0},\cdots, \bm y_t)}{p_{t_0}(\bm y_{t_0},\cdots, \bm y_{t-1})}\nn\\&  = \frac{\int\alpha_t(\bm x_t)d\bm x_t}{\int\alpha_{t-1}(\bm x_{t-1})d\bm x_{t-1}} = \frac{c_t}{c_{t-1}},
\end{flalign}
where the last equality is due to the fact that $\alpha_t(\bm x_t)$ is a scaled Gaussian density function of $\bm x_t$ with coefficient $c_t$. Note that before the change, the observations are i.i.d.. Therefore, it suffices to consider the additive form for the post-change log likelihood function. We have the following additive form for the log likelihood function $\log p_{t_0}(\bm y_{t_0},\cdots, \bm y_t)$:
\begin{flalign}\label{eq:additive}
&\log p_{t_0}(\bm y_{t_0},\cdots, \bm y_t)\nn\\& =\log \frac{p_{t_0}(\bm y_{t_0},\cdots, \bm y_t)}{p_{t_0}(\bm y_{t_0},\cdots, \bm y_{t-1})} + \log \frac{p_{t_0}(\bm y_{t_0},\cdots, \bm y_{t-1})}{p_{t_0}(\bm y_{t_0},\cdots, \bm y_{t-2})}  + \cdots + \log\frac{p_{t_0}(\bm y_{t_0}, \bm y_{t_0+1})}{p_{t_0}(\bm y_{t_0})} + \log p_{t_0}(\bm y_{t_0})\nn\\&
 =\sum_{i=t_0}^t \frac{c_i}{c_{i-1}},
\end{flalign}
where $\frac{c_{t_0}}{c_{t_0-1}} =\log p_{t_0}(\bm y_{t_0})$. As shown in \eqref{eq:forwardsigma}, $\frac{c_t}{c_{t-1}}$ is a function of $\bm y_t, \bm \mu_{t-1}, \bm \Sigma_{t-1}$. Therefore, the log likelihood function can be written in an additive form using the parameters of the forward variable.

It can be seen that at each time step, we only need to compute $c_t, \bm \mu_t, \bm \Sigma_t$, and the forward variable $\alpha_t$ can be fully recovered. Moreover, $c_t, \bm \mu_t, \bm \Sigma_t$ can be updated recursively, and thus the likelihood function $p_{t_0}(\bm y_1, \cdots, \bm y_t)$ can be computed efficiently using the forward variable $\alpha_t$.

\subsection{Universal Lower Bound on WADD}\label{sec:waddorder1}
The general theory for QCD with non-i.i.d. samples relies on the assumption that the normalized log likelihood ratio between the post- and pre-change distributions converges to some finite and positive number \cite{lai1998information,tartakovsky2005general,baron2006asymptotic,tartakovsky2017asymptotic}. For the AR model, whether such assumption holds remains unexplored. Here, we first show that as $t \rightarrow \infty$, the limit of $\frac{1}{t}\log\frac{p_{t_0}(\bm y_{t_0},\cdots, \bm y_{t_0+t-1})}{p_\infty(\bm y_{t_0},\cdots, \bm y_{t_0+t-1})}$ exists and is positive, denoted by $\mathcal{K}$, and further provide the universal lower bound on WADD in the following theorem. The explicit expression of $\mathcal{K}$ will be provided later after we introduce necessary notations.
\begin{theorem}\label{theorem:waddlow}
We have that 
\begin{flalign}\label{eq:convergek}
\lim_{t\rightarrow\infty}\frac{1}{t}\log\frac{p_{t_0}(\bm y_{t_0},\cdots, \bm y_{t_0+t-1})}{p_\infty(\bm y_{t_0},\cdots, \bm y_{t_0+t-1})} = \mathcal{K} 
\end{flalign}
almost surely under $\mathbb{P}_{t_0}$ with $\mathcal{K}>0$. Moreover,
as $\gamma\rightarrow\infty$, 
\begin{flalign}\label{eq:waddup}
\inf_{\tau:\text{ARL}(\tau)\geq \gamma}\text{WADD}(\tau)\geq \frac{\log\gamma}{\mathcal{K}}(1+o(1)).
\end{flalign}
\end{theorem}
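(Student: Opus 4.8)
The plan is to prove the two assertions in sequence. The mathematical core is the stability condition \eqref{eq:convergek}: once the normalized log likelihood ratio is shown to converge almost surely to a constant $\mathcal{K}>0$, the universal lower bound \eqref{eq:waddup} follows from the general non-i.i.d.\ theory of \cite{lai1998information,tartakovsky2005general,tartakovsky2017asymptotic} as an essentially black-box invocation. So I would spend almost all the effort on \eqref{eq:convergek} and treat \eqref{eq:waddup} as a corollary.

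For the convergence, I would first exploit the additive decomposition already furnished by the forward variable. Writing $\ell_i \triangleq \log\frac{c_i/c_{i-1}}{p_\infty(\bm y_i)}$, Lemma~\ref{lemma:forward} together with \eqref{eq:additive} gives
\begin{flalign}
\log\frac{p_{t_0}(\bm y_{t_0},\cdots, \bm y_{t_0+t-1})}{p_\infty(\bm y_{t_0},\cdots, \bm y_{t_0+t-1})} = \sum_{i=t_0}^{t_0+t-1}\ell_i, \nn
\end{flalign}
where by \eqref{eq:forwardsigma} each $\ell_i$ is an explicit quadratic function of $\bm y_i,\bm\mu_{i-1},\bm\Sigma_{i-1}$. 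The key structural observation is that the augmented process $\Phi_t \triangleq (\bm x_t,\bm y_t,\bm\mu_t,\bm\Sigma_t)$ is a Markov chain under $\mathbb{P}_{t_0}$: given $\Phi_t$, the transition $\bm x_{t+1}=\bm A\bm x_t+\bm\omega_{t+1}$, $\bm y_{t+1}=\bm x_{t+1}+\bm\nu_{t+1}$, the deterministic covariance update for $\bm\Sigma_{t+1}$, and the filter update \eqref{eq:updaterule} for $\bm\mu_{t+1}$ depend only on $\Phi_t$ and fresh Gaussian noise. Consequently each summand $\ell_i$ is a measurable function of the pair $(\Phi_{i-1},\Phi_i)$, and it remains to establish that $\Phi_t$ is ergodic and that $\ell$ is integrable under its stationary law $\pi$, so that the ergodic theorem \cite{roberts2004general,meyn2012markov} yields $\frac{1}{t}\sum_i \ell_i \to \mathcal{K}\triangleq \mathbb{E}_\pi[\ell]$ almost surely.

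To prove ergodicity I would argue in two stages. First, the recursion in \eqref{eq:updaterule} for $\bm\Sigma_t$ is deterministic and independent of the observations; using $\|\bm A\|<1$ I would show it is a contraction toward a unique positive-definite fixed point $\bm\Sigma_\infty$ with geometric convergence, so the chain is asymptotically time-homogeneous and the transient error it induces in the Cesàro average $\frac{1}{t}\sum_i\ell_i$ is summable, hence negligible in the limit. Second, in the homogeneous regime $\bm\Sigma_t\equiv\bm\Sigma_\infty$ the pair $(\bm x_t,\bm\mu_t)$ obeys a jointly \emph{linear} recursion driven by $(\bm\omega_t,\bm\nu_t)$: substituting $\bm y_t=\bm A\bm x_{t-1}+\bm\omega_t+\bm\nu_t$ into \eqref{eq:updaterule} shows the $\bm\mu$-update contracts because $(\bm A\bm\Sigma_\infty\bm A^\top+\bm R_\omega+\bm I)^{-1}\bm A$ has spectral radius strictly below $1$. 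A stable linear system with nondegenerate Gaussian driving noise is $\phi$-irreducible with respect to Lebesgue measure (the noise has full support), aperiodic, and geometrically ergodic with a unique Gaussian stationary distribution $\pi$; integrability of $\ell$ then follows since $\ell$ is quadratic in jointly Gaussian variables. Positivity of $\mathcal{K}$ I would obtain from the identity $\mathcal{K}=\mathbb{E}_\pi\big[D\big(p_{t_0}(\cdot\,|\,\text{past})\,\big\|\,p_\infty\big)\big]\geq 0$, the conditional Kullback--Leibler divergence being strictly positive because the post-change conditional law of $\bm y_t$ differs from $\mathcal{N}(\bm 0,\bm I)$ whenever the disturbance is present.

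I expect the main obstacle to be the rigorous verification of ergodicity of the continuous-state auxiliary chain $\Phi_t$ --- establishing $\phi$-irreducibility and positive Harris recurrence (or a Foster--Lyapunov drift condition for geometric ergodicity) for the joint $(\bm x_t,\bm\mu_t)$ system --- together with the bookkeeping needed to transfer the ergodic average from the idealized chain with $\bm\Sigma_\infty$ back to the true chain whose covariance is still converging. Once \eqref{eq:convergek} is in hand, \eqref{eq:waddup} follows by the standard change-of-measure argument: the almost-sure stability condition guarantees that after roughly $\frac{(1-\varepsilon)\log\gamma}{\mathcal{K}}$ post-change samples the likelihood ratio has not yet exceeded $\gamma^{1-\varepsilon}$, so any rule detecting faster would, under $\mathbb{P}_\infty$, alarm with probability incompatible with $\mathbb{E}_\infty[\tau]\geq\gamma$; this is precisely the hypothesis required to apply the universal lower bound of \cite{lai1998information}, giving $\inf_{\tau:\text{ARL}(\tau)\geq\gamma}\text{WADD}(\tau)\geq \frac{\log\gamma}{\mathcal{K}}(1+o(1))$.
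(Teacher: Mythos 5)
Your proposal is correct and follows essentially the same route as the paper's own proof: the additive decomposition of the log likelihood ratio via the forward variable, an auxiliary Markov chain built from the observation and the filter mean with the covariance frozen at its fixed point $\bm\Sigma^*$ (the paper's Lemmas \ref{lemma:sigma}--\ref{lemma:ymumarkov}), the ergodic theorem applied to that chain, a transfer step showing the transient caused by $\bm\Sigma_t\neq\bm\Sigma^*$ is asymptotically negligible, positivity of $\mathcal{K}$ via a KL-divergence identity, and a black-box invocation of \cite[Theorem 1]{lai1998information} for the WADD lower bound. The only deviations are cosmetic: you invoke standard geometric-ergodicity theory for stable linear Gaussian systems where the paper computes the limiting Gaussian law of $\bm\mu^*_t$ explicitly, and you claim summability of the transient error where the paper proves the weaker (and sufficient) statement that its Ces\`aro average vanishes.
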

Since $\bm y_{t_0}, \bm y_{t_0+1}, \cdots, \bm y_t$ are dependent, the Law of Large Number used in the i.i.d. setting is not applicable, even if the log likelihood function has an additive form in \eqref{eq:additive}. We develop a novel approach to show the convergence of $\lim_{t\rightarrow\infty}\frac{1}{t}\log \frac{p_{t_0}(\bm y_{t_0}, \cdots, \bm y_{t_0+t-1})}{p_\infty(\bm y_{t_0}, \cdots, \bm y_{t_0+t-1})}$ under $\mathbb{P}_{t_0}$.

Observe that $\frac{c_t}{c_{t-1}}$ is a function of $\bm y_t, \bm \mu_{t-1}, \bm \Sigma_{t-1}$ and $\bm y_t, \bm \mu_{t}, \bm \Sigma_{t}$ are dependent over time. This motivates us to formulate a new Markov chain using $\bm y_t, \bm \mu_{t}, \bm \Sigma_{t}$ and apply the ergodic theorem \cite{roberts2004general,meyn2012markov} to characterize the convergence of $\lim_{t\rightarrow\infty}\frac{1}{t}\log \frac{p_{t_0}(\bm y_{t_0}, \cdots, \bm y_{t_0+t-1})}{p_\infty(\bm y_{t_0}, \cdots, \bm y_{t_0+t-1})}$ under $\mathbb{P}_{t_0}$.


From the update rule of $\bm \Sigma_t$ in \eqref{eq:updaterule}, it can be seen that $\bm \Sigma_t$ does not depend on $\bm y_t$ and thus is deterministic. 
%
We then show that $\bm \Sigma_t$ converges as $t\rightarrow \infty$ and let $\bm \Sigma^* \triangleq \lim_{t\rightarrow\infty}\bm \Sigma_t$. Let $\bm\mu^*_t$ be the parameter of the forward variable when $\bm\Sigma_{t_0} = \bm\Sigma^*$. We formulate an auxiliary Markov chain using $\bm y_t, \bm\mu^*_t$ as an intermediate step to prove the convergence of the log likelihood ratio. Specifically,
denote by $p_{t_0}^*(\bm y_{t_0},\cdots, \bm y_t)$ the likelihood function when $\bm \Sigma_{t_0} = \bm\Sigma^*$. With the auxiliary Markov chain $\{\bm y_t, \bm\mu^*_t\}_{t={t_0}}^\infty$, we show that $\lim_{t\rightarrow\infty} \frac{1}{t}p_{t_0}^*(\bm y_{t_0},\cdots, \bm y_{t_0+t-1})$ converges under $\mathbb{P}_{t_0}$. The convergence of $\lim_{t\rightarrow\infty}\frac{1}{t}\log \frac{p_{t_0}(\bm y_{t_0}, \cdots, \bm y_{t_0+t-1})}{p_\infty(\bm y_{t_0}, \cdots, \bm y_{t_0+t-1})}$ under $\mathbb{P}_{t_0}$ is then proved by showing that 
\begin{flalign}
&\lim_{t\rightarrow\infty}\frac{1}{t}\big(\log p_{t_0}(\bm y_{t_0}, \cdots, \bm y_{t_0+t-1}) - \log p_{t_0}^*(\bm y_{t_0}, \cdots, \bm y_{t_0+t-1})\big) = 0
\end{flalign}
under $\mathbb{P}_{t_0}$ almost surely.

We first show that $\bm \Sigma_t$ converges in the following Lemma.
\begin{lemma}\label{lemma:sigma}
As $t\rightarrow\infty$, $\bm \Sigma_{t}$ converges to $\bm \Sigma^*$.
\end{lemma}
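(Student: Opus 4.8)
The plan is to recognize the recursion in \eqref{eq:updaterule} as a discrete Riccati-type map and to prove convergence by showing it is a strict contraction on the cone of positive semidefinite (PSD) matrices, so that the Banach fixed-point theorem applies. Write $\bm\Sigma_t = g(\bm\Sigma_{t-1})$, where
\begin{flalign}
g(\bm\Sigma) \triangleq (\bm A\bm\Sigma\bm A^\top + \bm R_\omega)(\bm A\bm\Sigma\bm A^\top + \bm R_\omega + \bm I)^{-1} = \bm I - (\bm A\bm\Sigma\bm A^\top + \bm R_\omega + \bm I)^{-1}, \nn
\end{flalign}
the second equality following from the identity $\bm P(\bm P+\bm I)^{-1} = \bm I - (\bm P+\bm I)^{-1}$. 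First I would note that $g$ maps the set of PSD matrices into itself (indeed into the order interval $[\bm 0, \bm I)$, since the subtracted term is positive definite and $\preceq \bm I$), and that this set is a closed, hence complete, subset of the symmetric matrices equipped with the operator norm $\|\cdot\|$. Because $f_1$ is a covariance, $\bm\Sigma_{t_0}\succeq \bm 0$, so the entire orbit stays in this complete domain.

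The core step is the contraction estimate. For PSD matrices $\bm\Sigma_1,\bm\Sigma_2$, set $\bm N_i \triangleq \bm A\bm\Sigma_i\bm A^\top + \bm R_\omega + \bm I$, so that $\bm N_1 - \bm N_2 = \bm A(\bm\Sigma_1 - \bm\Sigma_2)\bm A^\top$. Using the resolvent identity $\bm N_2^{-1} - \bm N_1^{-1} = \bm N_2^{-1}(\bm N_1 - \bm N_2)\bm N_1^{-1}$, I obtain
\begin{flalign}
g(\bm\Sigma_1) - g(\bm\Sigma_2) = \bm N_2^{-1}\bm A(\bm\Sigma_1 - \bm\Sigma_2)\bm A^\top\bm N_1^{-1}. \nn
\end{flalign}
Since $\bm A\bm\Sigma_i\bm A^\top + \bm R_\omega \succeq \bm 0$, each $\bm N_i \succeq \bm I$ and hence $\|\bm N_i^{-1}\|\leq 1$; combining this with submultiplicativity of the operator norm and $\|\bm A\| = \|\bm A^\top\|$ yields
\begin{flalign}
\|g(\bm\Sigma_1) - g(\bm\Sigma_2)\| \leq \|\bm A\|^2 \|\bm\Sigma_1 - \bm\Sigma_2\|. \nn
\end{flalign}
By the standing assumption that the operator norm of $\bm A$ is strictly less than $1$, the factor $\|\bm A\|^2 < 1$, so $g$ is a strict contraction.

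Finally I would invoke the Banach fixed-point theorem: $g$ admits a unique fixed point $\bm\Sigma^*$ in the PSD cone, and $\bm\Sigma_t = g^{\,t-t_0}(\bm\Sigma_{t_0})$ converges to $\bm\Sigma^*$ geometrically at rate $\|\bm A\|^2$, independently of the initial $\bm\Sigma_{t_0}$. The main obstacle is the contraction estimate itself, specifically recognizing the algebraically convenient form $g(\bm\Sigma) = \bm I - \bm N^{-1}$ and exploiting $\bm N \succeq \bm I$ to control $\|\bm N^{-1}\|$; once this is in place, the strict decrease of the norm is exactly where the stability hypothesis $\|\bm A\|<1$ is consumed. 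An alternative route would be a monotonicity-plus-boundedness argument, since $g$ is operator monotone and all iterates lie in $[\bm 0,\bm I]$, giving sandwiching sequences $g^n(\bm 0)\uparrow$ and $g^n(\bm I)\downarrow$; but this still requires proving uniqueness of the fixed point, for which the same contraction computation is the most direct tool. I therefore favor the Banach argument as self-contained and quantitative.
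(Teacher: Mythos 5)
Your proof is correct and follows essentially the same route as the paper's: the paper derives the identical difference identity $\bm\Sigma_s - \bm\Sigma_t = (\bm I + \bm A\bm\Sigma_{t-1}\bm A^\top + \bm R_\omega)^{-1}\bm A(\bm\Sigma_{s-1}-\bm\Sigma_{t-1})\bm A^\top(\bm I + \bm A\bm\Sigma_{s-1}\bm A^\top + \bm R_\omega)^{-1}$, bounds the outer factors in operator norm, and uses $\|\bm A\|<1$ to conclude convergence. The only difference is packaging: you run the same contraction estimate through the Banach fixed-point theorem on the PSD cone (contraction factor $\|\bm A\|^2$), whereas the paper shows the iterates form a Cauchy sequence with factor $(1-\delta)^2\|\bm A\|^2$; your version additionally delivers uniqueness of $\bm\Sigma^*$ as a fixed point and independence of the initialization, but the mathematical content is the same.
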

\begin{proof}
The proof can be found in Appendix \ref{sec:lemma4}.
\end{proof}

To construct our auxiliary Markov chain, let $\bm\Sigma_{t_0} = \bm\Sigma^*$. Then $\bm\Sigma_{t} = \bm\Sigma^*,\forall t\geq t_0$. It follows that $\bm\mu^*_t$ can be updated recursively as follows:
\begin{flalign}\label{eq:mustar}
\bm\mu^*_t &= (\bm A\bm\Sigma^*\bm A^\top + \bm R_{\omega} +\bm I)^{-1}\bm A\bm\mu_{t-1}^* + (\bm A\bm\Sigma^*\bm A^\top + \bm R_{\omega})(\bm A\bm\Sigma^*\bm A^\top + \bm R_{\omega} +\bm I)^{-1}\bm y_t \nn\\& = (\bm A\bm\Sigma^*\bm A^\top + \bm R_{\omega} +\bm I)^{-1}\bm A\bm\mu_{t-1}^* + \bm\Sigma^*\bm y_t.
\end{flalign}

In the following, we denote the stationary distribution of a Markov chain in the post-change phase by $\pi$. For examples, $\pi(\bm x_t)$ is the stationary distribution of $\{\bm x_t\}_{t=t_0}^{\infty}$, $\pi(\bm x_t, \bm y_t)$ is the stationary distribution of $\{\bm x_t, \bm y_t\}_{t=t_0}^{\infty}$. We first show that $\{\bm x_t, \bm y_t\}_{t=t_0}^{\infty}$ has a unique stationary distribution in the following lemma, which will be used to formulate and characterize the property of the auxiliary Markov chain. 
\begin{lemma}\label{lemma:station}
The HMM $\{\bm x_t, \bm y_t\}_{t=t_0}^{\infty}$ is $\pi$-irreducible. Moreover, $\pi(\bm x_t) g(\bm y_t|\bm x_t)$ is the unique stationary distribution of $\{\bm x_t, \bm y_t\}_{t=t_0}^{\infty}$, where $\pi(\bm x_t)$ is a Gaussian distribution with mean $\bm 0$ and covariance matrix $\bm \Sigma = \sum_{i=0}^\infty (\bm A^\top)^i\bm R_{\omega}\bm A^{i}$.
\end{lemma}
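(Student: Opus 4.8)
The plan is to regard $\{(\bm x_t,\bm y_t)\}_{t\geq t_0}$ as a Markov chain whose one-step transition density factorizes as $f(\bm x'|\bm x)\,g(\bm y'|\bm x')$, where $f(\bm x'|\bm x)=\mathcal{N}(\bm A\bm x,\bm R_\omega)$ is the AR transition and $g(\bm y'|\bm x')=\mathcal{N}(\bm x',\bm I)$ is the observation density; note that the next state depends on the current one only through $\bm x$, which is harmless. I would prove the three assertions in order: $\pi$-irreducibility, invariance of $\pi(\bm x)g(\bm y|\bm x)$, and uniqueness.

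For $\pi$-irreducibility I would use positivity of Gaussian densities: since $f(\bm x'|\bm x)$ and $g(\bm y'|\bm x')$ are strictly positive on all of $\mathbb{R}^K$, their product is strictly positive on $\mathbb{R}^K\times\mathbb{R}^K$. Hence for every state $(\bm x,\bm y)$ and every set $E$ with $\pi(E)>0$ (equivalently of positive Lebesgue measure, as the candidate stationary law is a full-support Gaussian) we have $P\big((\bm x,\bm y),E\big)=\int_E f(\bm x'|\bm x)g(\bm y'|\bm x')\,d\bm x'\,d\bm y'>0$. This is precisely the positivity computation sketched in the paper and gives irreducibility in a single step; the same positivity also yields aperiodicity.

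For invariance I would reduce the joint claim to the marginal $\bm x$-chain. Integrating $\pi(\bm x)g(\bm y|\bm x)$ against the kernel and using $\int g(\bm y|\bm x)\,d\bm y=1$ removes the $\bm y$-coordinate, leaving $g(\bm y'|\bm x')\int \pi(\bm x)f(\bm x'|\bm x)\,d\bm x$; thus the joint law is invariant iff $\pi(\bm x)=\mathcal{N}(\bm 0,\bm\Sigma)$ is invariant for the recursion $\bm x_t=\bm A\bm x_{t-1}+\bm\omega_t$. The latter amounts to checking that $\bm x_{t-1}\sim\mathcal{N}(\bm 0,\bm\Sigma)$ forces $\bm x_t\sim\mathcal{N}(\bm 0,\bm A\bm\Sigma\bm A^\top+\bm R_\omega)$, i.e.\ the discrete Lyapunov identity $\bm\Sigma=\bm A\bm\Sigma\bm A^\top+\bm R_\omega$. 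The assumption $\|\bm A\|<1$ gives $\|\bm A^i\|\leq\|\bm A\|^i\to 0$ geometrically, so the series defining $\bm\Sigma$ converges to a finite positive-definite matrix and, by re-indexing/telescoping, solves the Lyapunov equation. This pins down the unique Gaussian invariant marginal, hence the joint invariant law $\pi(\bm x)g(\bm y|\bm x)$. Uniqueness then follows from the standard fact that a $\pi$-irreducible chain admitting an invariant probability measure admits exactly one \cite{meyn2012markov}.

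The only step that requires genuine care is the passage from the convergent series to the Lyapunov identity, together with verifying that $\bm\Sigma\succ\bm 0$ (so $\mathcal{N}(\bm 0,\bm\Sigma)$ is a bona fide density and the chain is irreducible with respect to a full-support reference measure); everything else is a one-line positivity argument or a citation. I would also double-check the transpose placement against the update convention for $f(\bm x'|\bm x)$, since the AR recursion $\bm x_t=\bm A\bm x_{t-1}+\bm\omega_t$ naturally produces $\bm\Sigma=\sum_{i=0}^\infty \bm A^i\bm R_\omega(\bm A^\top)^i$, whereas the stated series is $\sum_{i=0}^\infty (\bm A^\top)^i\bm R_\omega\bm A^i$; these agree only up to the order in which $\bm A$ and $\bm A^\top$ appear, so I would reconcile the two forms before finalizing.
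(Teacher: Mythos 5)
Your proposal is correct and follows the same overall architecture as the paper's proof: $\pi$-irreducibility from strict positivity of the Gaussian transition density $f(\bm x'|\bm x)g(\bm y'|\bm x')$, invariance of $\pi(\bm x)g(\bm y|\bm x)$ reduced to invariance of $\pi$ for the hidden AR chain via the Lyapunov identity, and uniqueness from irreducibility plus existence of an invariant probability measure. Two of your choices are cleaner than the paper's execution. First, you make explicit the reduction of joint invariance to marginal invariance by integrating out $\bm y$ (the paper verifies marginal stationarity and then simply asserts that the joint chain inherits the stationary law $\pi(\bm x)g(\bm y|\bm x)$), and you verify the marginal by moment propagation of the linear-Gaussian recursion rather than the paper's explicit convolution of densities --- same content, much less computation. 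Second, your flag about transpose placement is a genuine catch, not a formality: the lemma's stated series $\sum_{i=0}^\infty(\bm A^\top)^i\bm R_\omega\bm A^i$ solves $\bm\Sigma=\bm A^\top\bm\Sigma\bm A+\bm R_\omega$, whereas stationarity of $\bm x_t=\bm A\bm x_{t-1}+\bm\omega_t$ requires $\bm\Sigma=\bm A\bm\Sigma\bm A^\top+\bm R_\omega$, whose solution is $\sum_{i=0}^\infty\bm A^i\bm R_\omega(\bm A^\top)^i$; the paper itself conflates the two, writing the first series but then invoking the second identity in its verification (and its own convolution computation indeed produces $\bm R_\omega+\bm A\bm\Sigma\bm A^\top$ as the propagated covariance). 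Your resolution --- take $\bm\Sigma=\sum_{i=0}^\infty\bm A^i\bm R_\omega(\bm A^\top)^i$, which converges and is positive definite since $\|\bm A\|<1$ and $\bm R_\omega\succ\bm 0$ --- is the correct one.
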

\begin{proof}
The detailed proof can be found in Appendix \ref{sec:lemma1}.
\end{proof}
In the following lemma, we show that $\{\bm y_t, \bm\mu^*_t\}_{t=t_0}^\infty$ is a Markov chain and the stationary distribution of $\{\bm y_t, \bm\mu^*_t\}_{t=t_0}^\infty$ exists and is unique.
\begin{lemma}\label{lemma:ymumarkov}
$\{\bm y_t, \bm\mu^*_t\}_{t=t_0}^\infty$ is a Markov chain. Moreover, $\int\pi(\bm\mu_{t-1}^*)p_{t_0}(\bm y_t, \bm \mu^*_t|\bm\mu^*_{t-1})d\bm\mu^*_{t-1}$ is the unique stationary distribution of $\{\bm y_t, \bm\mu^*_t\}_{t=t_0}^\infty$, where $\pi(\bm \mu^*_{t-1})$ is the stationary distribution of $\bm\mu^*_{t-1}$ and is guaranteed to exist. 
\end{lemma}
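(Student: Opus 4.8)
The plan is to exploit the fact that, once the filtering covariance is frozen at $\bm\Sigma^*$, the mean $\bm\mu^*_{t-1}$ becomes a sufficient statistic of the whole observation history for predicting the next sample. First I would establish the Markov property. By Lemma \ref{lemma:forward} and Lemma \ref{lemma:sigma}, when $\bm\Sigma_{t_0}=\bm\Sigma^*$ the normalized forward variable is exactly the filtering density $p_{t_0}(\bm x_{t-1}\,|\,\bm y_{t_0},\cdots,\bm y_{t-1})=\mathcal{N}(\bm x_{t-1};\bm\mu^*_{t-1},\bm\Sigma^*)$. Propagating this Gaussian through the state equation $\bm x_t=\bm A\bm x_{t-1}+\bm\omega_t$ and the observation equation $\bm y_t=\bm x_t+\bm\nu_t$ shows that the one-step predictive law of $\bm y_t$ given the entire past depends on the past only through $\bm\mu^*_{t-1}$,
\begin{flalign}
p_{t_0}(\bm y_t\,|\,\bm y_{t_0},\cdots,\bm y_{t-1}) = \mathcal{N}\big(\bm y_t;\ \bm A\bm\mu^*_{t-1},\ \bm A\bm\Sigma^*\bm A^\top+\bm R_\omega+\bm I\big).\nn
\end{flalign}
Since \eqref{eq:mustar} makes $\bm\mu^*_t$ a deterministic function of $(\bm\mu^*_{t-1},\bm y_t)$ while $\bm\mu^*_{t-1}$ is itself a function of $(\bm y_{t_0},\cdots,\bm y_{t-1})$, the joint transition $p_{t_0}(\bm y_t,\bm\mu^*_t\,|\,\bm y_{t_0},\bm\mu^*_{t_0},\cdots,\bm y_{t-1},\bm\mu^*_{t-1})$ collapses to a kernel depending only on $\bm\mu^*_{t-1}$. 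This establishes that $\{\bm y_t,\bm\mu^*_t\}$, and in particular $\{\bm\mu^*_t\}$, is a Markov chain.

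Next I would obtain the dynamics of $\{\bm\mu^*_t\}$ in closed form and prove existence and uniqueness of $\pi(\bm\mu^*)$. Write the prediction error (innovation) $\bm e_t\triangleq\bm y_t-\bm A\bm\mu^*_{t-1}$; because the post-change observations are jointly Gaussian (the model is linear-Gaussian), $\bm e_t$ is uncorrelated with, hence independent of, the entire past, and $\{\bm e_t\}$ is white with covariance $\bm S\triangleq\bm A\bm\Sigma^*\bm A^\top+\bm R_\omega+\bm I$. Substituting $\bm y_t=\bm A\bm\mu^*_{t-1}+\bm e_t$ into \eqref{eq:mustar} and using the fixed-point equation for $\bm\Sigma^*$ (the limit of \eqref{eq:updaterule}), the coefficient of $\bm\mu^*_{t-1}$ simplifies sharply: setting $\bm P\triangleq\bm A\bm\Sigma^*\bm A^\top+\bm R_\omega$ gives $\bm\Sigma^*=\bm P(\bm P+\bm I)^{-1}$, whence $(\bm P+\bm I)^{-1}+\bm\Sigma^*=\bm I$ and therefore
\begin{flalign}
\bm\mu^*_t=\bm A\bm\mu^*_{t-1}+\bm\Sigma^*\bm e_t.\nn
\end{flalign}
This is a stable Gaussian VAR(1): since the operator norm of $\bm A$ is strictly below $1$, its spectral radius is $<1$, the contribution $\bm A^{t-t_0}\bm\mu^*_{t_0}$ of the initial condition decays geometrically, and the chain admits a unique Gaussian stationary law $\pi(\bm\mu^*)=\mathcal{N}(\bm 0,\bm\Sigma_{\bm\mu^*})$ whose covariance solves the discrete Lyapunov equation $\bm\Sigma_{\bm\mu^*}=\bm A\bm\Sigma_{\bm\mu^*}\bm A^\top+\bm\Sigma^*\bm S\bm\Sigma^*$; as $\bm\Sigma^*\bm S\bm\Sigma^*\succ 0$ the chain is Lebesgue-irreducible, which yields uniqueness.

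Finally I would lift $\pi(\bm\mu^*)$ to the pair by setting $\pi(\bm y_t,\bm\mu^*_t)\triangleq\int\pi(\bm\mu^*_{t-1})\,p_{t_0}(\bm y_t,\bm\mu^*_t\,|\,\bm\mu^*_{t-1})\,d\bm\mu^*_{t-1}$ and checking invariance directly: integrating this law over $\bm y_{t-1}$ recovers the $\bm\mu^*_{t-1}$-marginal $\pi(\bm\mu^*)$ by stationarity of $\{\bm\mu^*_t\}$, and since the transition kernel does not depend on $\bm y_{t-1}$, one application of it reproduces the same joint law. Uniqueness for the pair then follows from uniqueness of $\pi(\bm\mu^*)$ together with the deterministic coupling $\bm\mu^*_t=h(\bm\mu^*_{t-1},\bm y_t)$ supplied by \eqref{eq:mustar}. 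I expect the main obstacle to be the rigorous justification of the filtering-sufficiency step, namely that conditioning on the full observation history is equivalent to conditioning on $\bm\mu^*_{t-1}$ and that the induced innovation sequence is i.i.d.\ and independent of $\bm\mu^*_{t-1}$; once this is in hand, the unexpectedly clean identity $(\bm P+\bm I)^{-1}+\bm\Sigma^*=\bm I$ reduces the stability and uniqueness questions to the assumed contractivity $\|\bm A\|_{op}<1$.
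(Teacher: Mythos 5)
Your proof is correct, and its key step takes a genuinely different route from the paper's. The Markov-property part coincides in substance with the paper's Step 3: there the factorization $p_{t_0}(\bm y_t\mid\text{past})=p_{t_0}(\bm y_t\mid\bm\mu^*_{t-1})$ is obtained by a density-ratio manipulation plus the forward-variable update rule, and the paper's resulting expression is precisely your predictive law $\mathcal{N}\big(\bm y_t;\bm A\bm\mu^*_{t-1},\bm A\bm\Sigma^*\bm A^\top+\bm R_\omega+\bm I\big)$ in unsimplified form, so your filtering-sufficiency phrasing is the same argument. Where you genuinely diverge is the existence of $\pi(\bm\mu^*)$: the paper first shows the triple $\{\bm x_t,\bm y_t,\bm\mu^*_t\}$ is Markov, then expands $\bm\mu^*_t$ as an explicit weighted sum of past observations and the initial condition, and proves convergence of its conditional mean (to $\bm 0$) and covariance (via a Cauchy-sequence argument with geometric operator-norm bounds), all conditioned on an \emph{arbitrary} initial triple $(\bm x_{t_0},\bm y_{t_0},\bm\mu^*_{t_0})$. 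You instead pass to the steady-state innovations representation: the fixed-point identity $\bm\Sigma^*=\bm P(\bm P+\bm I)^{-1}$ with $\bm P=\bm A\bm\Sigma^*\bm A^\top+\bm R_\omega$ yields $(\bm P+\bm I)^{-1}+\bm\Sigma^*=\bm I$, hence $\bm\mu^*_t=\bm A\bm\mu^*_{t-1}+\bm\Sigma^*\bm e_t$ with $\bm e_t$ i.i.d.\ $\mathcal{N}(\bm 0,\bm P+\bm I)$ independent of the past, and existence/uniqueness reduce to standard facts about stable Gaussian VAR(1) chains (discrete Lyapunov equation, Lebesgue irreducibility since $\bm\Sigma^*(\bm P+\bm I)\bm\Sigma^*\succ 0$). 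Your route is shorter, gives the stationary covariance in closed form, and ties stability directly to the assumption $\|\bm A\|<1$; what the paper's heavier computation buys is initialization robustness, since the whiteness of your innovations requires the filter to be exactly initialized at steady state (the conditional law of $\bm x_{t_0}$ given $\bm y_{t_0}$ being $\mathcal{N}(\bm\mu^*_{t_0},\bm\Sigma^*)$), whereas the paper establishes that the limiting law of $\bm\mu^*_t$ is the same even for mismatched initial pairs $(\bm x_{t_0},\bm\mu^*_{t_0})$ --- a stronger, initialization-independent statement it leans on later when arguing that the drift of the Ergodic CuSum statistic does not depend on the state at the change point. For the lemma as stated, both suffice, since stationarity and uniqueness are properties of the transition kernel; your uniqueness argument for the pair (any stationary law of the pair has $\bm\mu$-marginal equal to $\pi$ because the kernel ignores $\bm y_{t-1}$) is also a valid, slightly more direct alternative to the paper's appeal to $\pi$-irreducibility of the pair chain.
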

\begin{proof}
The proof can be found in Appendix \ref{sec:lemma5}.
\end{proof}

With the auxiliary Markov chain $\{\bm y_t, \bm\mu^*_t\}_{t={t_0}}^\infty$ and its stationary distribution, we are ready to show the convergence of $\lim_{t\rightarrow\infty}\frac{1}{t}\log \frac{p_{t_0}(\bm y_{t_0}, \cdots, \bm y_{t_0+t-1})}{p_\infty(\bm y_{t_0}, \cdots, \bm y_{t_0+t-1})}$ and prove Theorem \ref{theorem:waddlow}.

\begin{proof}[Proof sketch of Theorem \ref{theorem:waddlow}]
We first consider the auxiliary Markov chain $\{\bm y_t, \bm\mu^*_t\}_{t={t_0}}^\infty$ and show that 
\begin{flalign}
\lim_{t\rightarrow\infty}\frac{1}{t}\log p_{t_0}^*(\bm y_{t_0},\cdots, \bm y_{t_0+t-1}) = \mathbb{E}_\pi[h(\bm\mu, \bm y)]
\end{flalign}
almost surely under $\mathbb{P}_{t_0}$ by applying the ergodic theorem \cite{roberts2004general,meyn2012markov} to the Markov chain $\{\bm y_t, \bm\mu^*_t\}_{t=t_0}^\infty$, where $h(\bm\mu,\bm y)$ is a quadratic function of $\bm\mu, \bm y$ and $(\bm y, \bm \mu)$ follows the stationary distribution of the auxiliary Markov chain $\{\bm y_t, \bm\mu^*_t\}_{t={t_0}}^\infty$. The explicit expression of $h(\bm\mu,\bm y)$ is as follows 
\begin{flalign}\label{eq:hmuy}
    &h(\bm\mu, \bm y)= \log\Big( \frac{1}{\sqrt{(2\pi)^K\det(\bm A\bm\Sigma^*\bm A^\top + \bm R_{\omega})}}\frac{1}{\sqrt{\det((\bm A\bm\Sigma^*\bm A^\top + \bm R_{\omega})^{-1} + \bm I)}}\Big) \nn\\&\hspace{0.2cm}-\frac{1}{2}\Big(((\bm A\bm\Sigma^*\bm A^\top + \bm R_{\omega} +\bm I)(\bm\mu - \bm\Sigma^*\bm y))^\top(\bm A\bm\Sigma^*\bm A^\top + \bm R_{\omega})^{-1}((\bm A\bm\Sigma^*\bm A^\top + \bm R_{\omega} +\bm I)(\bm\mu - \bm\Sigma^*\bm y))\nn\\&\hspace{0.2cm}+\bm y^\top \bm y - \big((\bm A\bm\Sigma^*\bm A^\top + \bm R_{\omega})^{-1}((\bm A\bm\Sigma^*\bm A^\top + \bm R_{\omega} +\bm I)(\bm\mu - \bm\Sigma^*\bm y)) + \bm y\big)^\top\nn\\&\hspace{0.2cm}\big((\bm A\bm\Sigma^*\bm A^\top + \bm R_{\omega})^{-1} + \bm I\big)^{-1} \big((\bm A\bm\Sigma^*\bm A^\top + \bm R_{\omega})^{-1}((\bm A\bm\Sigma^*\bm A^\top + \bm R_{\omega} +\bm I)(\bm\mu - \bm\Sigma^*\bm y)) + \bm y\big) \Big).
\end{flalign}
Moreover, if we let $\bm y = \bm y_t$ and $\bm \mu = \bm\mu^*_t$, we have that $h(\bm\mu^*_t,\bm y_t) = p^*_{t_0}(\bm y_t|\bm y_{t_0}, \cdots, \bm y_{t-1})$. 

We then show that $\lim_{t\rightarrow\infty}\frac{1}{t}\log p_{t_0}(\bm y_{t_0}, \cdots, \bm y_{t_0+t-1}) = \mathbb{E}_\pi[h(\bm\mu, \bm y)]$ under $\mathbb{P}_{t_0}$ almost surely by showing that 
\begin{flalign}
&\lim_{t\rightarrow\infty}\frac{1}{t}\big(\log p_{t_0}(\bm y_{t_0}, \cdots, \bm y_{t_0+t-1}) - \log p_{t_0}^*(\bm y_{t_0}, \cdots, \bm y_{t_0+t-1})\big) = 0
\end{flalign}
under $\mathbb{P}_{t_0}$ almost surely. 

Since the observations are independent before the change point, we have that 
\begin{align}
\lim_{t\rightarrow\infty}\frac{1}{t}\log p_\infty(\bm y_{t_0}, \cdots, \bm y_{t_0+t-1}) = \lim_{t\rightarrow\infty}\frac{1}{t}\sum_{i=t_0}^{t_0+t-1} \log p_\infty(\bm y_i).\nn    
\end{align}
We then have that under $\mathbb{P}_{t_0}$, 
$$\lim_{t\rightarrow\infty}\frac{1}{t}\sum_{i=t_0}^{t_0+t-1} \log p_\infty(\bm y_i) = \mathbb{E}_{\pi}[\log p_\infty(\bm y)]$$ almost surely from the ergodic theorem of Markov chain \cite{roberts2004general,meyn2012markov}.

Let 
\begin{align}\label{eq:def_K}
    \mathcal{K} = \mathbb{E}_\pi[h(\bm\mu, \bm y)] - \mathbb{E}_\pi[\log p_\infty(\bm y)].
\end{align}
It then follows that for any initial state $\bm y_{t_0}, \bm\mu_{t_0}$, $$\lim_{t\rightarrow\infty}\frac{1}{t}\log\frac{p_{t_0}(\bm y_{t_0}, \cdots, \bm y_{t_0+t-1})}{p_\infty(\bm y_{t_0}, \cdots, \bm y_{t_0+t-1})} = \mathcal{K}$$ under $\mathbb{P}_{t_0}$ almost surely. Therefore, for any $\eta>0$,
\begin{flalign}\label{eq:stronglaw}
\lim_{t\rightarrow\infty}&\sup_{t_0\geq 1}\text{esssup}\mathbb{P}_{t_0}\Bigg\{\max_{k\leq t}\sum_{i=t_0}^{t_0+k-1}\log\frac{p_{t_0}(\bm y_i|\bm y_{t_0}, \cdots, \bm y_{i-1})}{p_\infty(\bm y_i)}\geq \mathcal{K}(1+\eta)t\big|\bm y_1, \cdots, \bm y_{t_0-1}\Bigg\} = 0.
\end{flalign}
Then \eqref{eq:waddup} follows from \cite[Theorem 1]{lai1998information}. The full proof can be found in Appendix \ref{sec:theorem1}.
\end{proof}

\begin{remark}
    Results in \cite{lai1998information} assumes $\lim_{t\rightarrow\infty}\frac{1}{t}\log\frac{p_{t_0}(\bm y_{t_0}, \cdots, \bm y_{t_0+t-1})}{p_\infty(\bm y_{t_0}, \cdots, \bm y_{t_0+t-1})}$ exists under $\mathbb{P}_{t_0}$ while in our results, we prove its existence and characterize its value.
\end{remark}


\subsection{$q$-th Order AR Models}\label{sec:generalization_lower}
In this section, we show that our results for first-order AR models can be generalized to any $q$-th order AR models.

In the following theorem, we show that for any $q$-th order AR model, as in Section \ref{sec:waddorder1}, as $t \rightarrow \infty$, the limit of $\frac{1}{t}\log\frac{p_{t_0}(\bm y_{t_0},\cdots, \bm y_{t_0+t-1})}{p_\infty(\bm y_{t_0},\cdots, \bm y_{t_0+t-1})}$ exists and is positive, denoted by $\widetilde{\mathcal{K}}$. The expression of $\widetilde{\mathcal{K}}$ can be derived similarly as in \eqref{eq:def_K}. 
The universal lower bound on WADD then follows from \cite[Theorem 1]{lai1998information}.
\begin{theorem}\label{theorem:waddloworder2}
For a $q$-th order AR model, we have that 
\begin{flalign}
\lim_{t\rightarrow\infty}\frac{1}{t}\log\frac{p_{t_0}(\bm y_{t_0},\cdots, \bm y_{t_0+t-1})}{p_\infty(\bm y_{t_0},\cdots, \bm y_{t_0+t-1})} = \widetilde{\mathcal{K}} 
\end{flalign}
almost surely under $\mathbb{P}_{t_0}$ where $\widetilde{\mathcal{K}}>0$. Moreover,
as $\gamma\rightarrow\infty$, 
\begin{flalign}\label{eq:wadduporderq}
\inf_{\tau:\text{ARL}(\tau)\geq \gamma}\text{WADD}(\tau)\geq \frac{\log\gamma}{\widetilde{\mathcal{K}}}(1+o(1)).
\end{flalign}
\end{theorem}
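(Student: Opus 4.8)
The plan is to reduce the $q$-th order case to the first-order case already settled in Theorem~\ref{theorem:waddlow}. Following the conversion in Appendix~\ref{app:convert}, I would stack the state as $\widetilde{\bm x}_t = (\bm x_t^\top, \bm x_{t-1}^\top, \cdots, \bm x_{t-q+1}^\top)^\top \in \mathbb{R}^{Kq}$, so that the AR recursion \eqref{eq:highorder} becomes the first-order vector recursion $\widetilde{\bm x}_t = \widetilde{\bm A}\,\widetilde{\bm x}_{t-1} + \widetilde{\bm\omega}_t$, where $\widetilde{\bm A}$ is the companion matrix built from $\bm A_1, \cdots, \bm A_q$, the augmented innovation $\widetilde{\bm\omega}_t = (\bm\omega_t^\top, \bm 0^\top, \cdots, \bm 0^\top)^\top$ has the degenerate covariance $\widetilde{\bm R}_\omega = \mathrm{diag}(\bm R_\omega, \bm 0, \cdots, \bm 0)$, and the observation reads $\bm y_t = \widetilde{\bm C}\,\widetilde{\bm x}_t + \bm\nu_t$ with $\widetilde{\bm C} = (\bm I, \bm 0, \cdots, \bm 0)$. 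The assumption controlling the operator norms of the $\bm A_i$ translates into $\widetilde{\bm A}$ having spectral radius strictly less than one.

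With this reduction in hand, I would re-run the pipeline of Section~\ref{sec:waddorder1} with the matrix substitutions above (and the degeneracy handled as flagged below). The forward variable $\alpha_t(\widetilde{\bm x}_t) = p_{t_0}(\bm y_{t_0}, \cdots, \bm y_t, \widetilde{\bm x}_t)$ is again a Gaussian function of $\widetilde{\bm x}_t$, the analog of Lemma~\ref{lemma:forward}, and its parameters $c_t, \bm\mu_t, \bm\Sigma_t$ obey the Kalman-type recursion produced by the same product-and-convolution computation (now with $\widetilde{\bm C}$ inserted); the conditional likelihood ratio is still $c_t/c_{t-1}$, so the post-change log-likelihood retains the additive form \eqref{eq:additive}. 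Re-establishing Lemma~\ref{lemma:sigma} (convergence of $\bm\Sigma_t$ to some $\widetilde{\bm\Sigma}^*$) and Lemma~\ref{lemma:station} (a unique stationary law of $\{\widetilde{\bm x}_t, \bm y_t\}$, now centered Gaussian with covariance $\widetilde{\bm\Sigma} = \sum_{i=0}^\infty (\widetilde{\bm A}^\top)^i \widetilde{\bm R}_\omega \widetilde{\bm A}^i$), then Lemma~\ref{lemma:ymumarkov} for the auxiliary chain $\{\bm y_t, \bm\mu_t^*\}$, I would define $\widetilde{\mathcal{K}}$ exactly as in \eqref{eq:def_K} and invoke the ergodic theorem to obtain the almost-sure convergence of $\frac{1}{t}\log\frac{p_{t_0}(\bm y_{t_0},\cdots,\bm y_{t_0+t-1})}{p_\infty(\bm y_{t_0},\cdots,\bm y_{t_0+t-1})}$ to $\widetilde{\mathcal{K}}$ under $\mathbb{P}_{t_0}$. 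Strict positivity of $\widetilde{\mathcal{K}}$ follows as before, since it is the Kullback--Leibler divergence rate between two genuinely distinct regimes. The bound \eqref{eq:wadduporderq} is then immediate from \cite[Theorem~1]{lai1998information}, exactly as in the proof sketch of Theorem~\ref{theorem:waddlow}.

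The main obstacle is that the augmented model is degenerate in two ways the first-order proofs did not face: the innovation covariance $\widetilde{\bm R}_\omega$ is singular, and $\widetilde{\bm C}$ is a strict projection rather than the identity. Consequently the one-step transition kernel of $\{\widetilde{\bm x}_t\}$ is supported on a lower-dimensional affine subspace, so $\pi$-irreducibility and the positive definiteness of the forward-variable covariance — both used freely in Lemmas~\ref{lemma:forward}--\ref{lemma:station} — are no longer automatic. The remedy is linear-systems-theoretic: since $\widetilde{\bm A}$ is stable, $(\widetilde{\bm A}, \widetilde{\bm C})$ is detectable and $(\widetilde{\bm A}, \widetilde{\bm R}_\omega^{1/2})$ is stabilizable, so the discrete Riccati recursion for $\bm\Sigma_t$ still admits a unique stabilizing fixed point $\widetilde{\bm\Sigma}^*$ to which it converges from any initialization, yielding Lemma~\ref{lemma:sigma}. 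Moreover, the companion structure makes $(\widetilde{\bm A}, \widetilde{\bm R}_\omega^{1/2})$ controllable — because $\bm R_\omega \succ 0$, the reachable subspace fills all $q$ blocks of $\mathbb{R}^{Kq}$ within $q$ steps — so $\widetilde{\bm\Sigma}$ is positive definite and $\bm\Sigma_t$ becomes positive definite after finitely many steps. I would therefore argue irreducibility and positive Harris recurrence on the $q$-step skeleton of the relevant chains, where the accumulated innovations render the kernels non-degenerate, and then transfer the conclusions back to the original chains; this is the one place where the argument requires care beyond a mechanical substitution of $(\widetilde{\bm A}, \widetilde{\bm R}_\omega, \widetilde{\bm C})$ for $(\bm A, \bm R_\omega, \bm I)$.
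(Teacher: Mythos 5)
Your proposal reaches the right conclusion but takes a genuinely different --- and substantially heavier --- route than the paper. First, a factual correction: the conversion in Appendix \ref{app:convert} is \emph{not} the companion-form (sliding-window) stacking you describe. The paper stacks \emph{non-overlapping} blocks of $q$ consecutive samples, $\tilde{\bm x}_{t}^{(q)} = [\bm x_{q(t-1)+1}^\top,\cdots,\bm x_{qt}^\top]^\top$ and $\tilde{\bm y}_{t}^{(q)} = [\bm y_{q(t-1)+1}^\top,\cdots,\bm y_{qt}^\top]^\top$, indexed by block number. This choice is precisely what makes the paper's proof short: the blocked innovation $\tilde{\bm \omega}_{t}^{(q)}$ is an invertible linear image of the stacked innovations (for $q=2$ it is $(\bm\omega_{2t-1}^\top,(\bm A_1\bm\omega_{2t-1}+\bm\omega_{2t})^\top)^\top$), so its covariance is positive definite; the blocked measurement noise still has identity covariance; and the observation map is the identity rather than a projection. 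Hence the blocked process is \emph{exactly} the non-degenerate first-order model of Section \ref{sec:waddorder1}, Theorem \ref{theorem:waddlow} applies to it verbatim, and the paper's entire proof consists of rewriting the log-likelihood ratio of $\bm y_{t_0},\cdots,\bm y_{t_0+t-1}$ as that of the blocked sequence (up to a boundary remainder of fewer than $q$ samples and a factor $1/q$) and then citing \cite[Theorem 1]{lai1998information}. Your companion-form reduction instead produces a degenerate state-space model (singular $\widetilde{\bm R}_\omega$, strict projection $\widetilde{\bm C}$), which forces you to re-prove Lemmas \ref{lemma:forward}--\ref{lemma:ymumarkov} and the ergodic argument under degeneracy. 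The remedies you sketch are the correct ones (Riccati convergence under detectability/stabilizability, controllability filling all $q$ blocks in $q$ steps, irreducibility via the $q$-step skeleton), and your plan is salvageable --- note in particular that since each $\bm A_i$ is assumed invertible, the companion matrix $\widetilde{\bm A}$ is invertible, so the filtering covariances remain positive definite once initialized so. But two further points need care that you do not address: (i) the paper's quantitative arguments (Lemma \ref{lemma:sigma}, Lemma \ref{lemma:ymumarkov}, and Step 2 of Appendix \ref{sec:theorem1}) repeatedly use the \emph{operator-norm} bound $\|\bm A\|\leq 1-\delta$ together with submultiplicativity, whereas a companion matrix generically has operator norm at least one even when its spectral radius is below one, so every geometric-series estimate must be rerun with $\|\widetilde{\bm A}^n\|\leq C\rho^n$ in place of $\|\widetilde{\bm A}\|^n$; and (ii) transferring almost-sure convergence of ergodic averages from the $q$-step skeleton back to the original chain needs an explicit argument rather than a remark. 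In short, your route can be made to work and is the more ``systems-theoretic'' one, but the paper's blocking conversion buys non-degeneracy for free and collapses the whole theorem into a direct application of Theorem \ref{theorem:waddlow}.
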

The proof of Theorem \ref{theorem:waddloworder2} is similar to the proof of Theorem \ref{theorem:waddlow}, and the idea is to convert the $q$-th order AR model to a first-order AR model, and then apply the proof of Theorem \ref{theorem:waddlow}. Below, we  provide a proof sketch.

\begin{proof}[Proof Sketch]
Note that for a $q$-th order AR model, it can be converted to a first-order AR model (See Appendix \ref{app:convert}).  Let $\lfloor x\rfloor$ denote the greatest integer less than or equal to $x$. 
We have that 
\begin{flalign}\label{eq:qconverge}
&\lim_{t\rightarrow\infty}\frac{1}{t}\log\frac{p_{t_0}(\bm y_{t_0}, \cdots, \bm y_{t_0+t-1})}{p_\infty(\bm y_{t_0}, \cdots, \bm y_{t_0+t-1})} \nn\\
& =\lim_{t\rightarrow\infty}\frac{1}{t}\bigg(\log\frac{p_{t_0}(\bm y_{t_0}, \cdots, \bm y_{t_0+q-1})}{p_\infty(\bm y_{t_0}, \cdots, \bm y_{t_0+q-1})} + \log\frac{p_{t_0}(\bm y_{t_0+q}, \cdots, \bm y_{t_0+2q-1}|\bm y_{t_0}, \cdots, \bm y_{t_0+q-1})}{p_\infty(\bm y_{t_0+q}, \cdots, \bm y_{t_0+2q-1}|\bm y_{t_0}, \cdots, \bm y_{t_0+q-1})}\nn\\
&\hspace{0.5cm} +\cdots + \log\frac{p_{t_0}(\bm y_{t_0+q(\lfloor \frac{t}{q}\rfloor-1)}, \cdots, \bm y_{t_0+q\lfloor \frac{t}{q}\rfloor -1}|\bm y_{t_0}, \cdots, \bm y_{t_0+q(\lfloor \frac{t}{q}\rfloor-1)-1})}{p_\infty(\bm y_{t_0+q(\lfloor \frac{t}{q}\rfloor-1)}, \cdots, \bm y_{t_0+q\lfloor\frac{t}{q}\rfloor-1}|\bm y_{t_0}, \cdots, \bm y_{t_0+q(\lfloor \frac{t}{q}\rfloor-1)-1})}\bigg) 
\nn\\
&= 
\lim_{t\rightarrow\infty}\frac{1}{t}\log \frac{p_{t_0}\Big((\bm y_{t_0}, \cdots, \bm y_{t_0+q-1}), (\bm y_{t_0+q}, \cdots, \bm y_{t_0+2q-1}),\cdots, (\bm y_{t_0+q(\lfloor \frac{t}{q}\rfloor-1)}, \cdots, \bm y_{t_0+q\lfloor \frac{t}{q}\rfloor -1})\Big)}{p_\infty\Big((\bm y_{t_0}, \cdots, \bm y_{t_0+q-1}), (\bm y_{t_0+q}, \cdots, \bm y_{t_0+2q-1}),\cdots, (\bm y_{t_0+q(\lfloor \frac{t}{q}\rfloor-1)}, \cdots, \bm y_{t_0+q\lfloor \frac{t}{q}\rfloor -1})\Big)}\nn\\&=\lim_{t^\prime\rightarrow\infty}\frac{1}{qt^\prime}\log \frac{p_{t_0}\Big((\bm y_{t_0}, \cdots, \bm y_{t_0+q-1}), (\bm y_{t_0+q}, \cdots, \bm y_{t_0+2q-1}),\cdots, (\bm y_{t_0+q(t^\prime-1)}, \cdots, \bm y_{t_0+qt^\prime -1})\Big)}{p_\infty\Big((\bm y_{t_0}, \cdots, \bm y_{t_0+q-1}), (\bm y_{t_0+q}, \cdots, \bm y_{t_0+2q-1}),\cdots, (\bm y_{t_0+q(t^\prime-1)}, \cdots, \bm y_{t_0+qt^\prime -1})\Big)}\nn\\&=\widetilde{\mathcal{K}}
\end{flalign}
almost surely under $\mathbb{P}_{t_0}$, where the last equality is from \eqref{eq:convergek} and the fact that $\{(\bm y_{t_0+qi}, \cdots, \bm y_{t_0+q(i+1)-1})\}_{i=0}^\infty$ follows a first-order AR model. For the quickest change detection in $q$-th order AR models, we then have that as $\gamma\rightarrow\infty$,
\begin{flalign}
\inf_{\tau:\text{ARL}(\tau)\geq \gamma}\text{WADD}(\tau)\geq \frac{\log\gamma}{\widetilde{\mathcal{K}}}(1+o(1)).
\end{flalign}
\end{proof}

\section{Asymptotically Optimal Stopping Time}\label{sec:algorithm}
In this section, we first present the algorithm and its optimality results for the first-order AR model. We then show the generalization to the $q$-th order AR model.
\subsection{First-Order AR Model}
The CuSum algorithm based on the generalized likelihood ratio (GLR) approach has been widely used for QCD problems. For the AR model, the GLR statistic is defined as follows:
\begin{flalign}\label{eq:wt}
W_t = \max_{1\leq k\leq t}\sum_{i=k}^t \log \frac{p_k(\bm y_i|\bm y_{1}, \cdots, \bm y_{i-1})}{p_\infty(\bm y_i|\bm y_{1}, \cdots, \bm y_{i-1})}.
\end{flalign}
The CuSum algorithm \cite{lai1998information} can then be designed:
\begin{flalign}\label{eq:glrcusum}
\tau_c = \inf\Big\{t: W_t \geq c\Big\}.
\end{flalign}

In \cite{lai1998information}, a special non-i.i.d. case was studied where the post-change distribution does not depend on the change point $t_0$, i.e., $p_k(\bm y_i|\bm y_{1}, \cdots, \bm y_{i-1})$ in \eqref{eq:wt} does not depend on $k$.
Under the assumption that $\frac{1}{t}\sum_{i=t_0}^{t_0+t-1} \log \frac{p_{t_0}(\bm y_i|\bm y_{1}, \cdots, \bm y_{i-1})}{p_\infty(\bm y_i|\bm y_{1}, \cdots, \bm y_{i-1})}$ converges in probability under $\mathbb{P}_{t_0}$ to some positive constant $\mathcal{K}$, the asymptotic optimality of the CuSum algorithm was proved. The non-i.i.d. case under the Bayesian setting was studied in \cite{tartakovsky2005asymptotic}, where the post-change distribution does not depend on the change point $t_0$ as in \cite{lai1998information}. The asymptotic optimality of Shiryaev procedure was established under the same stability assumption.

If the post-change distribution does not depend on the change-point, then the CuSum algorithm can be updated recursively and is computationally efficient. 
However, for our AR model, under $\mathbb{P}_{t_0}$, $p_{t_0}(\bm y_t|\bm y_{t_0}, \cdots, \bm y_{t-1})$ depends on $t_0$ for $t\geq t_0$. 
At each time $t$, we need to update $p_k(\bm y_t|\bm y_{1}, \cdots, \bm y_{t-1})$ for every $1\leq k\leq t$, the complexity of which scales with $t$, which is not practically feasible. 

In this section, we propose a computationally efficient Ergodic CuSum algorithm and further show that it is asymptotically optimal.



Motivated by the fact that the WADD is lower bounded by $\frac{\log\gamma}{\mathcal{K}}(1 + o(1))$, we aim to find a computationally efficient test statistic that has a positive drift of $\mathcal{K}$ under the post-change distribution and has a negative drift under the pre-change distribution. Define the likelihood ratio of the first $t$ observations when the change point $t_0 = 1$ by 
\begin{flalign}\label{eq:recursivel}
L_t = \frac{p_1(\bm y_1, \cdots, \bm y_t)}{p_\infty(\bm y_1, \cdots, \bm y_t)},
\end{flalign}
and let $L_0=1$. 
For any $t_0\geq 1$, we have that
\begin{flalign}\label{eq:pesudo}
&\lim_{t\rightarrow\infty}\frac{1}{t}\log \frac{p_1(\bm y_1, \cdots, \bm y_t)}{p_\infty(\bm y_1, \cdots, \bm y_t)} \nn\\& = \lim_{t\rightarrow\infty}\frac{1}{t}\bigg(\log\frac{p_1(\bm y_1, \cdots, \bm y_{t_0-1})}{p_\infty(\bm y_1, \cdots, \bm y_{t_0-1})} + \log \frac{p_1(\bm y_{t_0}, \cdots, \bm y_t|\bm y_1, \cdots, \bm y_{t_0-1})}{p_\infty(\bm y_{t_0}, \cdots, \bm y_t|\bm y_1, \cdots, \bm y_{t_0-1})}\bigg)\nn\\& = \lim_{t\rightarrow\infty}\frac{1}{t} \log \frac{p_1(\bm y_{t_0}, \cdots, \bm y_t|\bm y_1, \cdots, \bm y_{t_0-1})}{p_\infty(\bm y_{t_0}, \cdots, \bm y_t|\bm y_1, \cdots, \bm y_{t_0-1})}.
\end{flalign}
Note that different sample trajectories $\bm y_1, \cdots, \bm y_{t_0-1}$ lead to different values of $\bm \mu_{t_0}$. However, the ergodic theorem of Markov chain \cite{roberts2004general, meyn2012markov} implies that the convergence of $\frac{1}{t} \log \frac{p_1(\bm y_{t_0}, \cdots, \bm y_t|\bm y_1, \cdots, \bm y_{t_0-1})}{p_\infty(\bm y_{t_0}, \cdots, \bm y_t|\bm y_1, \cdots, \bm y_{t_0-1})}$ does not depend on $\bm \mu_{t_0}$ and thus does not depend on the sample trajectory $\bm y_1, \cdots, \bm y_{t_0-1}$. Therefore, 
we have that for any $\bm y_1, \cdots, \bm y_{t_0-1}$, under $\mathbb{P}_{t_0}$ almost surely
\begin{flalign}
    &\lim_{t\rightarrow\infty}\frac{1}{t} \log \frac{p_1(\bm y_{t_0}, \cdots, \bm y_t|\bm y_1, \cdots, \bm y_{t_0-1})}{p_\infty(\bm y_{t_0}, \cdots, \bm y_t|\bm y_1, \cdots, \bm y_{t_0-1})} \nn\\&= \lim_{t\rightarrow\infty}\frac{1}{t}\log\frac{p_{t_0}(\bm y_{t_0}, \cdots, \bm y_{t})}{p_\infty(\bm y_{t_0}, \cdots, \bm y_{t})}\nn\\& = \mathcal{K},
\end{flalign}
where the first equality is due to the fact that the convergence of the ergodic Markov chain doesn't depend on the initial state \cite{roberts2004general, meyn2012markov}. Therefore, under the post-change distribution, $L_t$ has a positive drift $\mathcal{K}$ on average. Under the pre-change distribution, we have that $\mathbb{E}_\infty\big[\log\frac{p_1(\bm y_t|\bm y_1, \cdots, \bm y_{t-1})}{p_\infty(\bm y_t)}\big] = -D\big(p_\infty(\bm y_t)||p_1(\bm y_t|\bm y_1, \cdots, \bm y_{t-1})\big) \leq 0$ where the equality holds when $p_1(\bm y_t|\bm y_1, \cdots, \bm y_{t-1}) = p_\infty(\bm y_t)$ almost surely. Therefore, under the pre-change distribution, $L_t$ has a negative drift as long as $p_1(\bm y_t|\bm y_1, \cdots, \bm y_{t-1}) \neq p_\infty(\bm y_t)$.

Motivated by these facts, we define the Ergodic CuSum statistic
\begin{flalign}\label{eq:ergodics}
S_t&= \max_{0\leq i\leq t}(\log L_t - \log L_i)\nn\\
& =  \max\big(0, S_{t-1} + \log L_t - \log L_{t-1}\big).
\end{flalign}
The Ergodic CuSum algorithm is then defined as 
\begin{flalign}\label{eq:cusumtype}
\tau_c^* = \inf\{t: S_t \geq c\}.
\end{flalign}

At each time $t$, we only need to compute $L_t$. It can be easily verified using \eqref{eq:additive} that $L_t$ admits an additive form. Moreover, $L_t$ can be recursively updated using the forward variable as shown in Lemma \ref{lemma:forward}. Specifically, let $\bm \mu_0, \bm \Sigma_0$ be the parameters of the initial distribution of $\bm x_0$. We have that $\bm \mu_t, \bm \Sigma_t$ can be updated recursively according to \eqref{eq:updaterule} for each $t$. We then have that $L_t =\sum_{i=1}^t\frac{c_i}{c_{i-1}} = L_{t-1} + \frac{c_t}{c_{t-1}}$, where $\frac{c_t}{c_{t-1}}$ can be computed using $\bm \mu_t, \bm \Sigma_t$ as shown in \eqref{eq:forwardsigma}.

We note that $S_t$ is not the actual generalized likelihood ratio, and is different from $W_t$ in $\tau_c$. However, at each time $t$, $\log\frac{p_1(\bm y_t|\bm y_1, \cdots, \bm y_{t-1})}{p_\infty(\bm y_t)}$ can still be viewed as a log likelihood ratio of the current sample $\bm y_t$. Therefore, the ARL lower bound can be derived following the proof of the ARL lower bound in \cite{lai1998information} for general non-i.i.d. case.
Since the convergence of $\lim_{t\rightarrow\infty}\frac{1}{t} \log \frac{p_1(\bm y_{t_0}, \cdots, \bm y_t|\bm y_1, \cdots, \bm y_{t_0-1})}{p_\infty(\bm y_{t_0}, \cdots, \bm y_t|\bm y_1, \cdots, \bm y_{t_0-1})}$ under $\mathbb{P}_{t_0}$ does not depend on the initial state $\bm \mu_{t_0}, \bm y_{t_0}$, it can be shown that $\tau_c^*$ is asymptotically optimal for \eqref{eq:goal}.

In the following theorem, we show 1) the ARL lower bound of $\tau_c^*$ and 2) the WADD upper bound of $\tau_c^*$.
\begin{theorem}\label{theorem:waddup}
1) Let $c = \log\gamma$ in \eqref{eq:cusumtype}, then $\mathbb{E}_\infty[\tau_c^*] \geq \gamma$; and 2)  as $\gamma \rightarrow \infty$, $\text{WADD}(\tau_c^*)\leq \frac{\log\gamma}{\mathcal{K}}(1+o(1))$.
\end{theorem}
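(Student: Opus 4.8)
The plan is to derive both statements from the general CuSum theory of \cite{lai1998information}, after verifying the two structural properties of the Ergodic CuSum statistic $S_t$ that this theory requires: a martingale property under $\mathbb{P}_\infty$ for the ARL bound, and the post-change drift $\mathcal{K}$ for the WADD bound. First I would identify the increment of $\log L_t$. Writing $Z_t \triangleq \log L_t - \log L_{t-1}$ and using that the pre-change observations are i.i.d., a one-line cancellation gives $Z_t = \log\frac{p_1(\bm y_t|\bm y_1,\cdots,\bm y_{t-1})}{p_\infty(\bm y_t)}$, so that $Z_t$ is a genuine log-likelihood ratio of the current sample $\bm y_t$. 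Consequently $\mathbb{E}_\infty[e^{Z_t}|\mathcal{F}_{t-1}] = \int p_1(\bm y_t|\bm y_1,\cdots,\bm y_{t-1})\,d\bm y_t = 1$, which makes $\{L_t\}$ a nonnegative $\mathbb{P}_\infty$-martingale with mean one and exhibits $S_t=\max(0,S_{t-1}+Z_t)$ as a CuSum recursion driven by log-LR increments. This is exactly the input the forward-variable recursion of Lemma~\ref{lemma:forward} is designed to supply.

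For part~1 I would feed this martingale structure into the standard ARL argument. Since each increment of $S_t$ is the log-likelihood ratio of the incoming sample against $p_\infty$, the rule $\tau_c^*$ is formally a CuSum procedure of the type analyzed in \cite[Theorem~4]{lai1998information}, and the lower bound $\mathbb{E}_\infty[\tau_c^*]\ge e^c$ follows from the martingale identity $\mathbb{E}_\infty[e^{Z_t}|\mathcal{F}_{t-1}]=1$ via the usual change-of-measure bound on the first crossing of level $c$. Setting $c=\log\gamma$ then yields $\mathbb{E}_\infty[\tau_c^*]\ge\gamma$. The point to stress is that this argument never uses the precise post-change law; it only needs the one-step increments to be current-sample log-LRs, which is why $S_t$ can be built from $L_t$ (change point fixed at $1$) rather than from the true generalized likelihood ratio $W_t$.

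For part~2 I would exploit the lower bound $S_t\ge \log L_t-\log L_{t_0-1}=\sum_{i=t_0}^{t}Z_i$, obtained by taking the index $i=t_0-1$ in the maximum defining $S_t$ (with $L_0=1$ when $t_0=1$). By \eqref{eq:pesudo} and Theorem~\ref{theorem:waddlow}, $\frac1n\sum_{i=t_0}^{t_0+n-1}Z_i\to\mathcal{K}$ almost surely under $\mathbb{P}_{t_0}$, and this limit does not depend on the forward-variable mean $\bm\mu_{t_0}$, hence not on the pre-change trajectory $\bm y_1,\cdots,\bm y_{t_0-1}$. Thus $S_{t_0+n}$ acquires positive drift $\mathcal{K}$ after the change, so the first passage of $S_t$ above $c$ happens after about $c/\mathcal{K}$ post-change samples. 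Combining this drift with the renewal and uniform-integrability estimates of \cite[Theorem~4]{lai1998information} gives $\mathbb{E}_{t_0}[(\tau_c^*-t_0)^+|\bm y_1,\cdots,\bm y_{t_0-1}]\le \frac{c}{\mathcal{K}}(1+o(1))$, and taking the supremum over $t_0$ together with the essential supremum over the conditioning yields $\text{WADD}(\tau_c^*)\le\frac{\log\gamma}{\mathcal{K}}(1+o(1))$ for $c=\log\gamma$.

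The hard part will be the uniformity required in part~2. Unlike the setting of \cite{lai1998information}, where the post-change law is identical for every change point, here $\log L_t$ is generated by the forward recursion initialized at time $1$, so at the true change time $t_0$ the parameter $\bm\mu_{t_0}$ is an arbitrary function of the noise observed before $t_0$. To pass to the essential supremum over $\mathcal{F}_{t_0-1}$ and the supremum over $t_0$ I must show that both the drift $\mathcal{K}$ and the rate of convergence of $\frac1n\sum_{i=t_0}^{t_0+n-1}Z_i$ are insensitive to $\bm\mu_{t_0}$. This is precisely the forgetting property of the ergodic forward recursion: Lemma~\ref{lemma:sigma} gives $\bm\Sigma_t\to\bm\Sigma^*$ independently of initialization, while the existence and uniqueness of the stationary law of the auxiliary chain $\{\bm y_t,\bm\mu_t^*\}$ in Lemma~\ref{lemma:ymumarkov}, combined with the ergodic theorem, give convergence from any starting $\bm\mu_{t_0}$. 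Making this bound uniform — controlling the transient discrepancy between $\log L_t$ started from $\bm\mu_{t_0}$ and the stationary post-change log-likelihood uniformly in $t_0$ and in the conditioning — is the main technical obstacle, and I would resolve it by coupling the $\bm\mu_t$ recursion started from $\bm\mu_{t_0}$ with the one started from stationarity and showing that their difference contributes only $o(n)$ to $\sum_{i=t_0}^{t_0+n-1}Z_i$, almost surely and uniformly in the initialization.
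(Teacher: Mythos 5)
Your proposal is correct and follows essentially the same route as the paper: part 1 rests on the observation that the increments $\log L_t - \log L_{t-1}$ are current-sample log-likelihood ratios whose restarted products form mean-one $\mathbb{P}_\infty$-martingales (the paper makes this explicit via the stopping times $\sigma_m$, Doob's inequality, and a geometric counting argument yielding $\mathbb{E}_\infty[\tau_c^*]\geq e^c$), and part 2 lower-bounds $S_t$ by the post-change sum of increments and feeds the drift $\mathcal{K}$, uniform in the initialization $\bm\mu_{t_0}$, into the block argument of \cite[Theorem 4]{lai1998information}. The uniformity issue you flag is handled in the paper exactly as you anticipate — by the initialization-independence of the convergence established in the proof of Theorem \ref{theorem:waddlow} (where the transient discrepancy is shown to be $o(t)$), from which the uniform esssup probability bound and the $\delta^l$ block estimate follow.
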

\begin{proof}
The proof can be found in Appendix \ref{sec:theorem4}.
\end{proof}


Based on Theorem \ref{theorem:waddlow} and Theorem \ref{theorem:waddup}, we establish the asymptotic optimality of $\tau_c^*$ in the following theorem.
\begin{theorem}
$\tau_c^*$ is asymptotically optimal.
\end{theorem}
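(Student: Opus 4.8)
The plan is to obtain the result as an immediate synthesis of the universal lower bound in Theorem \ref{theorem:waddlow} and the matching ARL and WADD bounds for $\tau_c^*$ in Theorem \ref{theorem:waddup}. Asymptotic optimality for \eqref{eq:goal} means that, among all stopping times satisfying the false alarm constraint $\text{ARL}(\tau)\geq\gamma$, the worst-case detection delay of $\tau_c^*$ attains the infimum to first order as $\gamma\to\infty$. So the strategy is to verify feasibility of $\tau_c^*$ and then sandwich the infimum between the two bounds.

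First I would fix the threshold at $c=\log\gamma$. By part 1) of Theorem \ref{theorem:waddup}, this choice guarantees $\mathbb{E}_\infty[\tau_c^*]\geq\gamma$, i.e.\ $\text{ARL}(\tau_c^*)\geq\gamma$, so $\tau_c^*$ is a feasible point for the constrained minimization in \eqref{eq:goal}. Feasibility immediately yields $\inf_{\tau:\text{ARL}(\tau)\geq\gamma}\text{WADD}(\tau)\leq\text{WADD}(\tau_c^*)$. Invoking part 2) of Theorem \ref{theorem:waddup} for the right-hand side and Theorem \ref{theorem:waddlow} for the infimum then gives the chain
$$\frac{\log\gamma}{\mathcal{K}}(1+o(1))\leq\inf_{\tau:\text{ARL}(\tau)\geq\gamma}\text{WADD}(\tau)\leq\text{WADD}(\tau_c^*)\leq\frac{\log\gamma}{\mathcal{K}}(1+o(1)),$$
valid as $\gamma\to\infty$. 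Hence all three quantities coincide to first order, so $\text{WADD}(\tau_c^*)=\frac{\log\gamma}{\mathcal{K}}(1+o(1))$ equals the optimal WADD up to the $(1+o(1))$ factor, which is exactly first-order asymptotic optimality.

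Because every ingredient is already furnished by the two preceding theorems, there is no genuine obstacle in this final step; the care required is only bookkeeping. Specifically, I would confirm that the $o(1)$ terms in the upper and lower bounds are both taken as $\gamma\to\infty$, and that the constant $\mathcal{K}$ appearing in both bounds is the \emph{same} constant, namely the almost-sure limit established in \eqref{eq:convergek}. This is indeed the case, since the drift analysis underlying Theorem \ref{theorem:waddup} is driven by the convergence of the normalized log likelihood ratio to the very $\mathcal{K}$ of Theorem \ref{theorem:waddlow}. The substantive difficulty of the paper therefore lies entirely in the earlier results—verifying the stability condition \eqref{eq:convergek} via the auxiliary Markov chain and the ergodic theorem, and establishing the positive post-change drift and negative pre-change drift of $\log L_t$—rather than in this concluding argument.
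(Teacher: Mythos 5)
Your proposal is correct and follows exactly the paper's own argument: the paper likewise establishes optimality by combining the universal lower bound of Theorem \ref{theorem:waddlow} with the matching ARL and WADD bounds of Theorem \ref{theorem:waddup} under the choice $c=\log\gamma$. Your additional bookkeeping---verifying feasibility and that the same constant $\mathcal{K}$ appears in both bounds---is a correct and slightly more explicit rendering of the sandwich argument the paper leaves implicit.
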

\begin{proof}
By Theorem \ref{theorem:waddlow} and Theorem \ref{theorem:waddup}, we establish the asymptotic optimality of $\tau_c^*$.
\end{proof}
The Ergodic CuSum algorithm in \eqref{eq:cusumtype} is computationally efficient and asymptotically optimal for detecting the change in the AR model in Section \ref{sec:problemmodel}.

\subsection{ $q$-th Order AR Model}
We first convert the post-change $q$-th order AR model equivalently into a first-order AR model  (see Appendix \ref{app:convert}). We then partition the sequence of the observations into a sequence of non-overlapping blocks with size $q$. Specially, 
define $\tilde y_t=(y_{(t-1)q+1},\ldots,y_{tq})$, for $t=1,2,\ldots$. 
We then apply our Ergodic Cusum algorithm  in \eqref{eq:cusumtype} on the sequence of $\{\tilde y_t\}_{t=1}^\infty$.
We show that our Ergodic Cusum algorithm is asymptotically optimal for the problem of QCD in $q$-th order AR models. 

For the ARL lower bound and WADD upper bound of $\tau^*_c$, we have the following theorem.
\begin{theorem}\label{theorem:wadduporder2}
For the QCD problem in $q$-th order AR models, consider $\tau^*_c$ applied on $\{\tilde y_t\}_{t=1}^\infty$.  1) Let $c = \log\gamma$ in \eqref{eq:cusumtype}, then $\mathbb{E}_\infty[\tau_c^*] \geq \gamma$. 2)  As $\gamma \rightarrow \infty$, $\text{WADD}(\tau_c^*)\leq \frac{\log\gamma}{\widetilde{\mathcal{K}}}(1+o(1))$.
\end{theorem}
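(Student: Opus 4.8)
The plan is to reduce the $q$-th order problem to the already-established first-order result by exploiting the block structure. First I would recall from Appendix \ref{app:convert} that the post-change $q$-th order AR model is equivalent to a first-order AR model, and that, as used in the proof of Theorem \ref{theorem:waddloworder2}, the blocked sequence $\{\tilde y_t\}_{t=1}^\infty$ with $\tilde y_t=(\bm y_{(t-1)q+1},\ldots,\bm y_{tq})$ itself obeys a first-order AR structure. Consequently the forward-variable recursion of Lemma \ref{lemma:forward}, the convergence of $\bm\Sigma_t$ from Lemma \ref{lemma:sigma}, and the auxiliary-Markov-chain construction of Lemmas \ref{lemma:station}--\ref{lemma:ymumarkov} all transfer verbatim to $\{\tilde y_t\}$. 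This means the Ergodic CuSum statistic $S_t$ of \eqref{eq:ergodics}, computed recursively on the blocks, is well-defined and inherits every drift property derived for the first-order model.

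The second step is to identify the per-block drift. From the chain of equalities in \eqref{eq:qconverge}, the normalized log-likelihood ratio of the blocked sequence converges almost surely under $\mathbb{P}_{t_0}$; since $\widetilde{\mathcal{K}}$ is defined per original time step, the corresponding limit per block equals $q\widetilde{\mathcal{K}}$. Hence the analog of the uniform strong-law statement \eqref{eq:stronglaw} holds for $\{\tilde y_t\}$ with constant $q\widetilde{\mathcal{K}}$ in place of $\mathcal{K}$. With this in hand I would invoke Theorem \ref{theorem:waddup} directly on the blocked first-order model: part 1 gives that setting $c=\log\gamma$ yields $\mathbb{E}_\infty[\tau_c^*]\geq\gamma$ (the pre-change law is unaffected by blocking, so the ARL lower bound carries over unchanged), and part 2 gives that the worst-case detection delay measured in blocks is at most $\frac{\log\gamma}{q\widetilde{\mathcal{K}}}(1+o(1))$.

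The final step is to translate the block-count delay back to the original time scale. Each block spans $q$ original samples, so the delay in original units is $q$ times the block delay, giving $\text{WADD}(\tau_c^*)\leq q\cdot\frac{\log\gamma}{q\widetilde{\mathcal{K}}}(1+o(1))=\frac{\log\gamma}{\widetilde{\mathcal{K}}}(1+o(1))$, which matches the lower bound of Theorem \ref{theorem:waddloworder2} and thereby establishes asymptotic optimality.

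The main obstacle I anticipate is the bookkeeping at block boundaries in Lorden's worst-case criterion: a true change point $t_0$ need not align with a block boundary, so the block containing $t_0$ mixes pre- and post-change samples and carries effective index $\lceil t_0/q\rceil$ rather than $t_0$. I would argue that this misalignment costs at most one extra block, i.e.\ at most $q$ original samples, which is $O(1)$ and is therefore absorbed into the $(1+o(1))$ factor. One must also verify that the $\text{esssup}$ over the pre-change history and the $\sup$ over $t_0$ in the WADD definition are preserved under blocking; this follows because the convergence in \eqref{eq:qconverge} is independent of the initial state $\bm\mu_{t_0}$, exactly as exploited in the first-order argument.
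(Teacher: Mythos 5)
Your proposal is correct and follows essentially the same route as the paper: convert the $q$-th order model to a first-order model by blocking, establish the convergence of the blocked log-likelihood ratio via \eqref{eq:qconverge}, and then invoke the martingale ARL argument and delay analysis of Theorem \ref{theorem:waddup}. Your explicit bookkeeping (per-block drift $q\widetilde{\mathcal{K}}$ converted back to $\widetilde{\mathcal{K}}$ per original sample, and the at-most-$q$-sample cost when the change point does not align with a block boundary) makes precise two points the paper's proof sketch glosses over, but it does not change the underlying argument.
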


The proof of Theorem \ref{theorem:wadduporder2} is similar to the proof of Theorem \ref{theorem:waddup}. Here, we only provide a proof sketch.

\begin{proof}[Proof Sketch]
When the post-change disturbance signal follows a $q$-th order AR model, we have that for any $\bm y_1, \cdots, \bm y_{t_0-1}$, under $\mathbb{P}_{t_0}$ almost surely
\begin{flalign}
&\lim_{t\rightarrow\infty}\frac{1}{t}\log \frac{p_1(\bm y_1, \cdots, \bm y_t)}{p_\infty(\bm y_1, \cdots, \bm y_t)} \nn\\& = \lim_{t\rightarrow\infty}\frac{1}{t}\bigg(\log\frac{p_1(\bm y_1, \cdots, \bm y_{t_0-1})}{p_\infty(\bm y_1, \cdots, \bm y_{t_0-1})} + \log \frac{p_1(\bm y_{t_0}, \cdots, \bm y_t|\bm y_1, \cdots, \bm y_{t_0-1})}{p_\infty(\bm y_{t_0}, \cdots, \bm y_t|\bm y_1, \cdots, \bm y_{t_0-1})}\bigg)\nn\\
& = \lim_{t\rightarrow\infty}\frac{1}{t} \log \frac{p_1(\bm y_{t_0}, \cdots, \bm y_t|\bm y_1, \cdots, \bm y_{t_0-1})}{p_\infty(\bm y_{t_0}, \cdots, \bm y_t|\bm y_1, \cdots, \bm y_{t_0-1})}\nn\\& = \lim_{t\rightarrow\infty}\frac{1}{t} \log \frac{p_{t_0}(\bm y_{t_0}, \cdots, \bm y_t)}{p_\infty(\bm y_{t_0}, \cdots, \bm y_t)}\nn\\
&=\lim_{t\rightarrow\infty}\frac{1}{t}\log\frac{p_{t_0}\Big((\bm y_{t_0}, \cdots, \bm y_{t_0+q-1}), (\bm y_{t_0+q}, \cdots, \bm y_{t_0+2q-1}),\cdots, (\bm y_{t_0+q(\lfloor \frac{t}{q}\rfloor-1)}, \cdots, \bm y_{t_0+q\lfloor \frac{t}{q}\rfloor -1})\Big)}{p_\infty\Big((\bm y_{t_0}, \cdots, \bm y_{t_0+q-1}), (\bm y_{t_0+q}, \cdots, \bm y_{t_0+2q-1}),\cdots, (\bm y_{t_0+q(\lfloor \frac{t}{q}\rfloor-1)}, \cdots, \bm y_{t_0+q\lfloor \frac{t}{q}\rfloor -1})\Big)}\nn\\
&= \widetilde{\mathcal{K}}.
\end{flalign}
where the third equality is due to the fact that the convergence of $\frac{1}{t} \log \frac{p_1(\bm y_{t_0}, \cdots, \bm y_t|\bm y_1, \cdots, \bm y_{t_0-1})}{p_\infty(\bm y_{t_0}, \cdots, \bm y_t|\bm y_1, \cdots, \bm y_{t_0-1})}$ does not depend on the sample trajectory $\bm y_1, \cdots, \bm y_{t_0-1}$ and the last equality is from \eqref{eq:qconverge} and the fact that $\{(\bm y_{t_0+qi}, \cdots, \bm y_{t_0+q(i+1)-1})\}_{i=0}^\infty$ follows a first-order AR model.
Following the same steps as in Theorem \ref{theorem:waddup}, we have that when $c = \log\gamma$, $\mathbb{E}_\infty[\tau_c^*] \geq \gamma$ and $\text{WADD}(\tau_c^*)\leq \frac{\log\gamma}{\widetilde{\mathcal{K}}}(1+o(1))$ as $\gamma \rightarrow \infty$.
\end{proof}

It can be seen that the WADD upper bound of $\tau_c^*$ matches with the universal lower bound in Theorem \ref{theorem:waddloworder2} for the $q$-th order AR models. We then have the following result of asymptotic optimality.
\begin{theorem}
For the problem of QCD in $q$-th order AR models, $\tau_c^*$ (applied on its first-order equivalence) is asymptotically optimal.
\end{theorem}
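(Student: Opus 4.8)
The plan is to establish the asymptotic optimality of $\tau_c^*$ by a sandwich argument that pairs the universal lower bound of Theorem \ref{theorem:waddloworder2} against the performance guarantees of Theorem \ref{theorem:wadduporder2}. The crucial structural fact is that both bounds are governed by the \emph{same} leading constant $1/\widetilde{\mathcal{K}}$, so that the WADD upper bound achieved by $\tau_c^*$ coincides with the universal lower bound to first order, forcing the constrained infimum in \eqref{eq:goal} to be attained by $\tau_c^*$.

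First I would fix the threshold $c=\log\gamma$ in \eqref{eq:cusumtype}. By part 1 of Theorem \ref{theorem:wadduporder2}, this choice guarantees $\mathbb{E}_\infty[\tau_c^*]\geq \gamma$, so $\tau_c^*$ is a \emph{feasible} stopping time for the constrained minimization; in particular it is one of the candidates over which $\inf_{\tau:\text{ARL}(\tau)\geq\gamma}\text{WADD}(\tau)$ is taken, whence the infimum is no larger than $\text{WADD}(\tau_c^*)$. Then I would chain the three bounds: Theorem \ref{theorem:waddloworder2} lower bounds every feasible delay (hence the infimum) by $\frac{\log\gamma}{\widetilde{\mathcal{K}}}(1+o(1))$; feasibility of $\tau_c^*$ upper bounds the infimum by $\text{WADD}(\tau_c^*)$; and part 2 of Theorem \ref{theorem:wadduporder2} upper bounds the latter again by $\frac{\log\gamma}{\widetilde{\mathcal{K}}}(1+o(1))$. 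Putting these together yields
\begin{flalign}
\frac{\log\gamma}{\widetilde{\mathcal{K}}}(1+o(1)) \leq \inf_{\tau:\text{ARL}(\tau)\geq\gamma}\text{WADD}(\tau) \leq \text{WADD}(\tau_c^*)\leq \frac{\log\gamma}{\widetilde{\mathcal{K}}}(1+o(1)),
\end{flalign}
as $\gamma\to\infty$, so all three quantities agree to first order and $\tau_c^*$ attains the universal lower bound, i.e. it is asymptotically optimal.

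The heavy lifting is already contained in Theorems \ref{theorem:waddloworder2} and \ref{theorem:wadduporder2}, which in turn rest on the first-order analysis via the block conversion of the $q$-th order model. Consequently the only point in this proof that requires genuine care is confirming that the \emph{identical} constant $\widetilde{\mathcal{K}}$ governs both the lower and the upper bound. This is precisely where the non-overlapping block partition $\tilde y_t=(y_{(t-1)q+1},\ldots,y_{tq})$ and the telescoping identification in \eqref{eq:qconverge} are essential: they ensure that the per-block normalized log-likelihood-ratio drift seen by $\tau_c^*$ equals the normalized limit appearing in the WADD lower bound, since $\{(\bm y_{t_0+qi},\cdots,\bm y_{t_0+q(i+1)-1})\}_{i=0}^\infty$ itself follows a first-order AR model to which Theorem \ref{theorem:waddlow} applies. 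Once this constant-matching is verified, the asymptotic optimality follows immediately from the sandwich above, with no further estimation required.
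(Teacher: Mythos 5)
Your proof is correct and is essentially the paper's own argument: the paper likewise establishes optimality by combining Theorem \ref{theorem:waddloworder2} (universal lower bound) with Theorem \ref{theorem:wadduporder2} (feasibility via the ARL bound at $c=\log\gamma$, plus the matching WADD upper bound), both governed by the same constant $\widetilde{\mathcal{K}}$. You have merely spelled out the sandwich chain that the paper leaves implicit.
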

\begin{proof}
By Theorem \ref{theorem:waddloworder2} and Theorem \ref{theorem:wadduporder2}, we establish the asymptotic optimality of $\tau_c^*$.
\end{proof}

\section{Data-Driven Setting: Online Gradient Ascent CuSum}\label{sec:datadriven}

In this section, we consider the practical data-driven setting where the post-change parameters are unknown. This is motivated by practical applications such as detecting cyber-attacks in dynamic systems, where there is no prior knowledge about the disturbance signal $\bm x_t$. As shown in Sections \ref{sec:waddlower} and \ref{sec:algorithm}, the QCD problem in a $q$-th order AR model can be equivalently solved using a first-order AR model. Therefore, in this section, we focus on the first-order AR model. Specifically, the post-change parameters $\bm A, \bm R_\omega$ are unknown.

The generalized likelihood ratio test (GLRT) which replaces the unknown post-change parameter with its maximum likelihood estimate (MLE) is widely used when there are unknown parameters. Specifically, at each time $t$, hypothesizing on the change point being $k$, we use samples $\bm y_k, \cdots, \bm y_t$ to compute the MLE of the unknown post-change parameters, i.e., $\hat{\theta}_{k, t} = \arg\max_\theta p_{k, \theta}(\bm y_k, \cdots, \bm y_t)$, where $p_{k, \theta}(\bm y_k, \cdots, \bm y_t)$ denotes the likelihood of $\bm y_k, \cdots, \bm y_t$ when the change point is $k$ and post-change parameter is $\theta$. The MLE of the hypothesized change point is then computed. Specifically, let $S_{\hat{\theta}_{k, t}}$ be the test statistic with post-change parameters  $\hat{\theta}_{k, t}$, and take the maximum of $S_{\hat{\theta}_{k, t}}$  over all $k$, i.e, $S_t = \max_k S_{\hat{\theta}_{k, t}}$. This approach needs to store all the historical samples $\bm y_1, \cdots,\bm y_t$ and recompute $\hat{\theta}_{k, t}$ when there is a new sample, which is not efficient for memory and computation. Though in practice, a sliding-window approach can be used, but it is still hard in the non-i.i.d. setting \cite{lai1998information}. 

In this section, we apply an online gradient ascent algorithm \cite{raginsky2012sequential, cao2018sequential} to estimate the unknown parameters $\bm A, \bm R_\omega$ and plug them back to the Ergodic CuSum algorithm to design our online gradient ascent CuSum algorithm (OGA-CuSum). 
The OGA-CuSum can be updated using only the most recent sample and in a recursive way, and thus is much more memory and computationally efficient.

Recall the definition of $h(\bm\mu, \bm y)$ in \eqref{eq:hmuy}. Define $h_t(\bm\mu, \bm y)$ by replacing $\bm \Sigma^*$ with $\bm \Sigma_t$ in $h(\bm\mu, \bm y)$. From Appendix \ref{sec:theorem1}, letting $\bm y = \bm y_t$ and $\bm \mu = \bm\mu_t$, we have that $h_t(\bm\mu_t,\bm y_t) = p_{t_0}(\bm y_t|\bm y_{t_0}, \cdots, \bm y_{t-1})$. Given the initial $\bm \mu_0, \bm\Sigma_0$, it then follows that 
\begin{flalign}\label{eq:updatel}
    \log L_t = \sum_{i=1}^t h_i(\bm\mu_i, \bm y_i) - \log p_\infty(\bm y_i).
\end{flalign}
Since $\bm A, \bm R_\omega$ are unknown, $h_t(\bm\mu_t, \bm y_t)$ cannot be computed. We then propose a one-step update rule to efficiently estimate $\bm A$, $\bm R_\omega$ and further estimate $h_t(\bm\mu_t, \bm y_t)$ at each time $t$. 

Denote by $\widehat{\bm A}_t, \widehat{\bm R}_{\omega, t}$ the estimate of $\bm A$, $\bm R_\omega$ at time $t$. Denote by $\widehat{\bm\mu}_{t+1}, \widehat{\bm\Sigma}_{t+1}$ the parameters of the forward variable. From \eqref{eq:updaterule}, we have that $\widehat{\bm\mu}_{t+1}, \widehat{\bm\Sigma}_{t+1}$ can be updated recursively using $\widehat{\bm A}_t, \widehat{\bm R}_{\omega, t}$, i.e.,
\begin{flalign}\label{eq:onlineupdate}
&\widehat{\bm\Sigma}_{t+1} = (\widehat{\bm A}_t\widehat{\bm\Sigma}_{t}\widehat{\bm A}_t^\top + \widehat{\bm R}_{\omega, t})(\widehat{\bm A}_t\widehat{\bm\Sigma}_{t}\widehat{\bm A}_t^\top + \widehat{\bm R}_{\omega, t} +\bm I)^{-1},\nn\\&
\widehat{\bm\mu}_{t+1} = (\widehat{\bm A}_t\widehat{\bm\Sigma}_{t}\widehat{\bm A}_t^\top + \widehat{\bm R}_{\omega, t} +\bm I)^{-1}\widehat{\bm A}_t\widehat{\bm\mu}_{t} + (\widehat{\bm A}_t\widehat{\bm\Sigma}_{t}\widehat{\bm A}_t^\top + \widehat{\bm R}_{\omega, t})(\widehat{\bm A}_t\widehat{\bm\Sigma}_{t}\widehat{\bm A}_t^\top + \widehat{\bm R}_{\omega, t} +\bm I)^{-1}\bm y_{t+1}.
\end{flalign}
Define $\widehat{h}_{t+1}(\widehat{\bm\mu}_{t+1}, \bm y_{t+1})$ by replacing $\bm A, \bm R_\omega, \bm\mu_{t+1}, \bm\Sigma_{t+1}$ in $h_{t+1}(\bm\mu_{t+1}, \bm y_{t+1})$ with $\widehat{\bm A}_{t}, \widehat{\bm R}_{\omega, t}, \widehat{\bm\mu}_{t+1}, \widehat{\bm\Sigma}_{t+1}$. We iteratively update the estimate of the parameters when there is a new observation based on the maximum likelihood principle using gradient ascent. We note that $h_t(\bm\mu_t, \bm y_t) - \log p_\infty(\bm y_t)$ is the log-likelihood ratio of the observation $\bm y_t$ at time $t$, i.e., $h_t(\bm\mu_t, \bm y_t) - \log p_\infty(\bm y_t) = \frac{p_{1}(\bm y_t|\bm y_{1}, \cdots, \bm y_{t-1})}{p_\infty(\bm y_t)}$. Therefore, based on the maximum likelihood principle, the online gradient ascent estimator is updated as follows
\begin{flalign}
&\widehat{\bm A}_t = \widehat{\bm A}_{t-1} + \beta \nabla_{\bm A} \widehat{h}_{t}(\widehat{\bm\mu}_{t}, \bm y_{t}),\nn\\& \widehat{\bm R}_{\omega, t} = \text{Proj}\big(\widehat{\bm R}_{\omega, t-1} + \beta \nabla_{\bm R_\omega} \widehat{h}_{t}(\widehat{\bm\mu}_{t}, \bm y_{t})\big),
\end{flalign}
where $\beta$ is a pre-specified step-size and Proj$(\bm X)$ is an operator that projects a matrix $\bm X$ to the set of positive definite matrices to guarantee that $\widehat{\bm R}_{\omega, t}$ is positive definite. Define the eigenvalue decomposition $\bm X = \sum_{i=1}^n \lambda_iv_iv_i^\top$, where $\lambda_i$ is the eigenvalue of $\bm X$ and $v_i$ is its corresponding eigenvector. We define Proj$(\bm X) = \sum_{i=1}^n \max\{\lambda_i, \epsilon\}v_iv_i^\top$ \cite{boyd2004convex}, where $\epsilon>0$ is chosen to guarantee that the eigenvalue of Proj$(\bm X)$ is at least $\epsilon$. With $\widehat{\bm A}_t, \widehat{\bm R}_{\omega, t}$, we can update $\widehat{\bm\mu}_{t+1}, \widehat{\bm\Sigma}_{t+1}$ according to \eqref{eq:onlineupdate}. We further plug them back to \eqref{eq:ergodics} and \eqref{eq:updatel} to compute $\widehat{S}_{t+1}$, which serves as the estimate of $S_{t+1}$. When $\widehat{S}_{t+1} < 0$, we claim that the change hasn't occurred. Therefore, we ignore the previous samples and reset our parameters $\widehat{\bm A}_{t+1}, \widehat{\bm R}_{\omega, t+1}$. The OGA-CuSum algorithm is then defined as follows
\begin{flalign}
    \tau_{\text{OGA}} = \inf \{t: \widehat{S}_t \geq c\}.
\end{flalign}
We summarize the algorithm in Algorithm \ref{alg:ogdcusum}.

In general, it is hard to obtain theoretical optimality performance guarantees for data-driven approaches. Nevertheless, we provide a lower bound on the ARL for our OGA-CuSum algorithm,  so that a threshold can be chosen analytically to control the false alarm in
practice. We also provide simulation results to demonstrate the good performance of our algorithm.

In the following theorem, we present a lower bound on the ARL for our OGA-CuSum algorithm. 
\begin{theorem}\label{theorem:ogdarl}
Let $c = \log\gamma$ in Algorithm \ref{alg:ogdcusum}, then $\mathbb{E}_\infty[\tau_{\text{OGA}}] \geq \gamma$.
\end{theorem}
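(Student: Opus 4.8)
The plan is to reduce the claim to the standard average-running-length lower bound for CuSum procedures whose increments are conditional log-likelihood ratios, as established for the general non-i.i.d.\ setting in \cite{lai1998information}. The whole argument rests on one structural fact: although the OGA-CuSum uses data-driven parameter estimates, its per-step increment is still a genuine conditional log-likelihood ratio of the current observation under $\mathbb{P}_\infty$, which is exactly what the martingale machinery behind the ARL bound requires.

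First I would make the increment of $\widehat{S}_t$ precise. From \eqref{eq:updatel} and the recursion \eqref{eq:ergodics}, the running statistic satisfies $\widehat{S}_t = \max(0, \widehat{S}_{t-1} + \widehat{Z}_t)$ with $\widehat{Z}_t = \widehat{h}_t(\widehat{\bm\mu}_t, \bm y_t) - \log p_\infty(\bm y_t)$. The key observation is that $\widehat{h}_t(\widehat{\bm\mu}_t, \bm y_t)$ is the logarithm of a Gaussian conditional density in $\bm y_t$ whose parameters are $\mathcal{F}_{t-1}$-measurable. Concretely, the estimates $\widehat{\bm A}_{t-1}, \widehat{\bm R}_{\omega, t-1}$ forming $\widehat{\bm\Sigma}_t, \widehat{\bm\mu}_t$ via \eqref{eq:onlineupdate} depend only on $\bm y_1, \ldots, \bm y_{t-1}$; although $\widehat{\bm\mu}_t$ itself contains $\bm y_t$, this enters only through the explicit current-observation terms of $h$ in \eqref{eq:hmuy}, since the ``past part'' $\widehat{\bm\mu}_t - \widehat{\bm\Sigma}_t \bm y_t$ is $\mathcal{F}_{t-1}$-measurable, exactly as $\bm\mu_t^* - \bm\Sigma^* \bm y_t$ is $\mathcal{F}_{t-1}$-measurable in \eqref{eq:mustar}. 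Hence $\widehat{h}_t(\widehat{\bm\mu}_t, \bm y_t) = \log \widehat{p}_t(\bm y_t \mid \bm y_1, \ldots, \bm y_{t-1})$ for a proper conditional Gaussian density $\widehat{p}_t$. I would also check that the resets triggered when $\widehat{S}_{t-1}+\widehat{Z}_t$ becomes negative do not break this property, since a reset reinitializes the parameters using only $\mathcal{F}_t$-information and is applied to the subsequent increment.

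Second, because $\widehat{p}_t(\cdot \mid \bm y_1, \ldots, \bm y_{t-1})$ integrates to one, I would deduce the martingale identity
\begin{flalign}
\mathbb{E}_\infty\big[e^{\widehat{Z}_t} \mid \mathcal{F}_{t-1}\big] = \int \frac{\widehat{p}_t(\bm y \mid \bm y_1, \ldots, \bm y_{t-1})}{p_\infty(\bm y)} \, p_\infty(\bm y) \, d\bm y = 1, \nn
\end{flalign}
so that $\{\exp(\sum_{i=1}^t \widehat{Z}_i)\}_t$ is a nonnegative $\mathbb{P}_\infty$-martingale of unit mean. For each fixed $k$ the process $\exp(\sum_{i=k+1}^t \widehat{Z}_i)$, $t \geq k$, is likewise a unit-mean martingale, so optional stopping together with Doob's maximal inequality gives $\mathbb{P}_\infty(N_k < \infty) \leq e^{-c}$ for the one-sided test $N_k = \inf\{t > k : \sum_{i=k+1}^t \widehat{Z}_i \geq c\}$. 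Writing $\widehat{S}_t = \max_{0 \leq k \leq t} \sum_{i=k+1}^t \widehat{Z}_i$, the stopping time $\tau_{\text{OGA}}$ is the first crossing of any of these embedded SPRTs; invoking the regenerative/overshoot argument for CuSum exactly as in \cite{lai1998information} (each excursion of $\widehat{S}_t$ away from $0$ crosses $c$ with conditional probability at most $e^{-c}$) then yields $\mathbb{E}_\infty[\tau_{\text{OGA}}] \geq e^c = \gamma$ when $c = \log\gamma$.

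I expect the main obstacle to be the first step: rigorously confirming that, despite the gradient-ascent updates and the resets, every quantity entering $\widehat{h}_t$ except the explicit current-observation terms is $\mathcal{F}_{t-1}$-measurable, so that $\widehat{Z}_t$ is truly a conditional log-likelihood ratio under $\mathbb{P}_\infty$. Once the martingale identity is secured, the third step is the standard non-i.i.d.\ CuSum ARL bound and requires no new ideas; in particular, optimality of the estimates plays no role, which is why a false-alarm guarantee survives even in the data-driven regime.
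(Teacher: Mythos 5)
Your proposal is correct and follows essentially the same route as the paper's own proof: both identify that each increment $\widehat{h}_t(\widehat{\bm\mu}_t,\bm y_t)-\log p_\infty(\bm y_t)$ is a genuine conditional log-likelihood ratio with respect to a proper density whose parameters are $\mathcal{F}_{t-1}$-measurable, conclude that the product of likelihood ratios is a unit-mean $\mathbb{P}_\infty$-martingale, apply Doob's inequality with optional sampling to bound each excursion's crossing probability by $e^{-c}$, and finish with the regenerative CuSum argument giving $\mathbb{E}_\infty[\tau_{\text{OGA}}]\geq e^c=\gamma$. If anything, you are more explicit than the paper on the two delicate points (the $\mathcal{F}_{t-1}$-measurability of the "past part" $\widehat{\bm\mu}_t-\widehat{\bm\Sigma}_t\bm y_t$, and the fact that resets do not disturb adaptedness), which the paper's proof in Appendix G treats implicitly.
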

\begin{proof}
    The proof can be found in Appendix \ref{sec:proofarl}.
\end{proof}

\begin{algorithm}
\caption{Online gradient ascent CuSum algorithm}\label{alg:ogdcusum}
\begin{algorithmic}
\Require Initial parameters $c, \beta, \widehat{\bm \mu}_0, \widehat{\bm \Sigma}_0, \widehat{\bm A}_0, \widehat{\bm R}_{\omega, 0}$, $t \leftarrow 0, \widehat{L}_0 \leftarrow 1, \widehat{S}_0 \leftarrow 0$

\While{$\widehat{S}_t < c$}

\State Take a new observation $\bm y_{t+1}$ 
\State Update $\widehat{\bm \mu}_{t+1}, \widehat{\bm \Sigma}_{t+1}$ according to \eqref{eq:onlineupdate}
\State $\log\widehat{L}_{t+1} \leftarrow \log \widehat{L}_{t} + \widehat{h}_{t+1}(\widehat{\bm\mu}_{t+1}, \bm y_{t+1}) - \log p_\infty(\bm y_{t+1})$
\State $\widehat{S}_{t+1} \leftarrow  \widehat{S}_{t} + \log\widehat{L}_{t+1} - \log\widehat{L}_{t}$
\If {$\widehat{S}_{t+1}<0$}
    \State $\widehat{\bm A}_{t+1} \leftarrow \widehat{\bm A}_0$ 
    \State $\widehat{\bm R}_{\omega, t+1} \leftarrow \widehat{\bm R}_{\omega, 0}$
    \State $\widehat{S}_{t+1}\leftarrow 0$
    \State $\widehat{L}_{t+1} \leftarrow 1$
\Else
\State $\widehat{\bm A}_{t+1} = \widehat{\bm A}_{t} + \beta \nabla_{\bm A} \widehat{h}_{t+1}(\widehat{\bm\mu}_{t+1}, \bm y_{t+1})$
\State $\widehat{\bm R}_{\omega, t+1} = \text{Proj}\big(\widehat{\bm R}_{\omega, t} + \beta \nabla_{\bm R_\omega} \widehat{h}_{t+1}(\widehat{\bm\mu}_{t+1}, \bm y_{t+1})\big)$
\EndIf
\State $t\leftarrow\ t+1$
\EndWhile
\end{algorithmic}
\end{algorithm}

\section{Simulation Results}\label{sec:numerical}
In this section, we provide some numerical results to demonstrate the performance of our Ergodic CuSum algorithm and OGA-CuSum algorithm.

\subsection{Model-Based Setting}\label{sec:model-base}
In this section, we consider the setting where the parameters of the disturbance signal are known. We compare the Ergodic CuSum algorithm $\tau_c^*$ with a stationary CuSum algorithm designed using the stationary distribution of $\bm y$. The stationary CuSum algorithm uses the CuSum algorithm that detects a change from $p_\infty$ to the stationary distribution $\pi$:
\begin{flalign}
\tau_s = \inf\Big\{t: \max_{1\leq k\leq t}\sum_{i=k}^t \log \frac{\pi(\bm y_i)}{p_\infty(\bm y_i)}\geq c\Big\}.
\end{flalign}

We consider three different cases. For the first case, we set $\bm A = [[0.7, 0.4], [0.2, 0.6]]$ and $\bm R_\omega = [[1, 0.5], [0.5, 1]]$. For the second case, we let $\bm A$ be a $10\times 10$ matrix and $\bm R_\omega = \bm I$. For the third case, we consider a $q$-th order AR model. We set $\bm A_1 = [[0.4, 0.3], [0.2, 0.1]], \bm A_2 = [[0.3, 0.2], [0.1, 0.2]]$ and $\bm R_\omega = \bm I$. Note that our Ergodic CuSum algorithm can be easily generalized to the $q$-th order AR model by reformulating the $q$-th order AR model to a first-order AR model.
In Fig.~\ref{fig:1}, Fig.~\ref{fig:2} and Fig.~\ref{fig:high}, we plot the WADD as a function of ARL. It can be seen from Fig.~\ref{fig:1}, Fig.~\ref{fig:2} and Fig.~\ref{fig:high} that with the same constraint on the ARL, our Ergodic CuSum algorithm has a lower detection delay. Therefore, our Ergodic CuSum algorithm outperforms the stationary CuSum algorithm under both the low-dimensional setting and the high-dimensional setting. This further implies that for detecting changes in the AR model, simply applying the stationary distribution and ignoring the dependence among samples may not perform well. Moreover, the relationship between WADD and log of the WARL is linear, which validates our theoretical analysis. 

\begin{figure*}[htbp]
\begin{multicols}{3}
\centering 
\includegraphics[width=\linewidth]{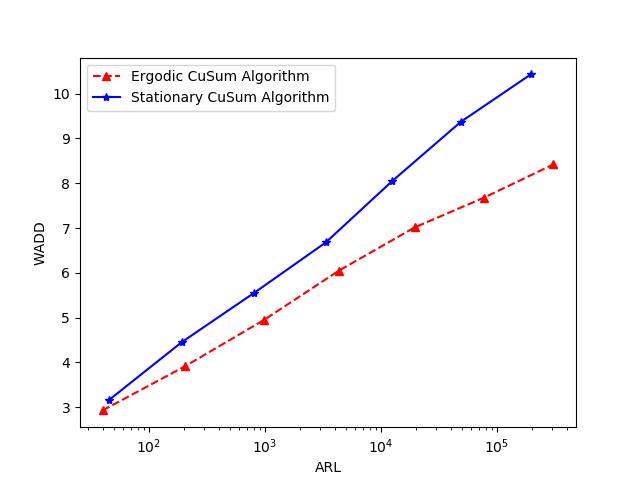}\par\caption{Comparison of the two algorithms: case 1.}\label{fig:1}
\includegraphics[width=\linewidth]{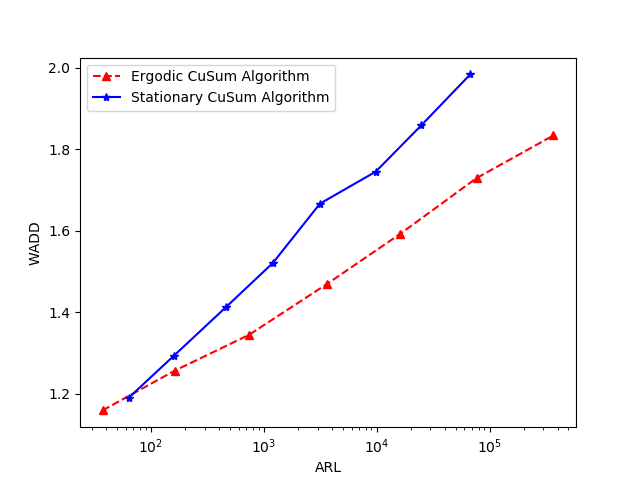}\par\caption{Comparison of the two algorithms: case 2.}\label{fig:2}
\includegraphics[width=\linewidth]{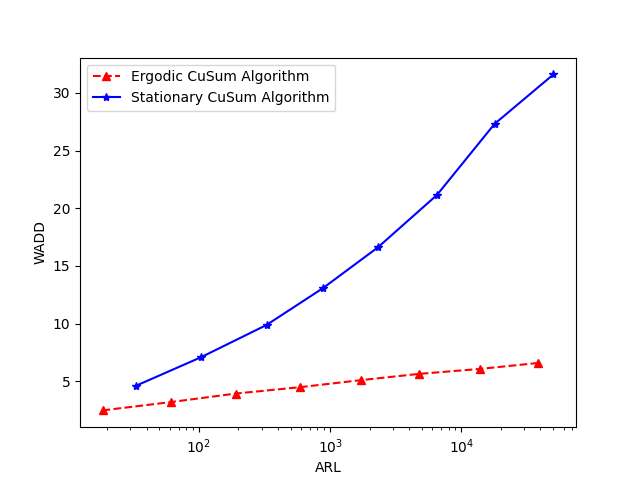}\par\caption{Comparison of the two algorithms: case 3.}\label{fig:high}
\end{multicols}
\end{figure*}

\subsection{Data-Driven Setting}
In this section, we focus on the data-driven setting. We first plot the estimation errors of $\widehat{\bm A}_t$, $\widehat{\bm R}_{\omega, t}$ as functions of number of samples under the pre- and post-change distribution, where the estimation errors are defined as the Frobenius norm of $\widehat{\bm A}_t-\bm A$ and $\widehat{\bm R}_{\omega, t}-\bm R_\omega$. Under the post-change distribution, we set $\bm A = [[0.7, 0.4], [0.2, 0.6]]$ and $\bm R_\omega = [[1, 0.5], [0.5, 1]]$. We note that the pre-change distribution is equivalent to the AR model with $\bm A = \bm 0$ and $\bm R_\omega =\bm 0$. It can be seen from Fig.~\ref{fig:3}, Fig.~\ref{fig:4}, Fig.~\ref{fig:5} and Fig.~\ref{fig:6} that the estimated parameters $\widehat{\bm A}_t$, $\widehat{\bm R}_{\omega, t}$ converge to the true parameters under the pre- and post-change distribution, which demonstrates that our OGA-CuSum scheme provides reliable estimates for the true parameters of the disturbance signal. 

\begin{figure}[htbp]
\begin{multicols}{2}
 \centering 
\includegraphics[width=\linewidth]{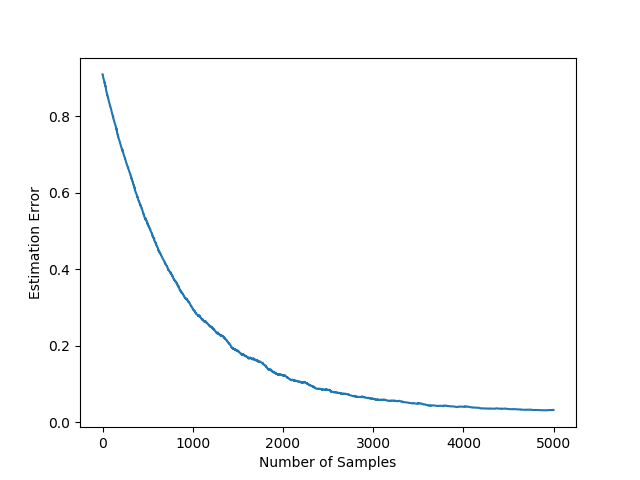}\par\caption{Convergence of $\widehat{\bm A}_t$ under the pre-change distribution.}\label{fig:3}   
\includegraphics[width=\linewidth]{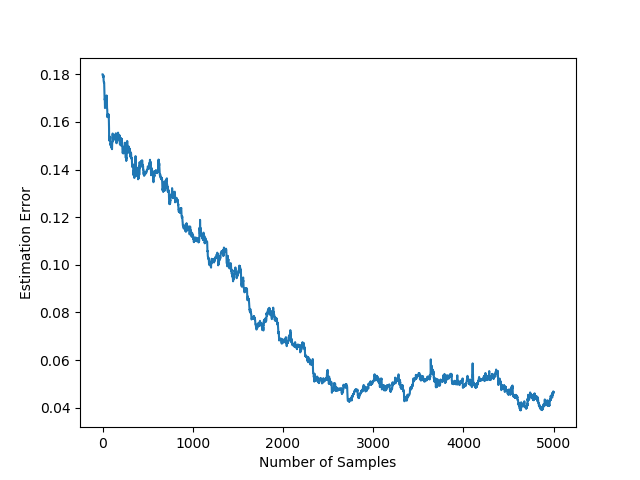}\par\caption{Convergence of $\widehat{\bm A}_t$ under the post-change distribution.}\label{fig:4}
\end{multicols}
\end{figure}

\begin{figure}[htbp]
\begin{multicols}{2}
\centering 
\includegraphics[width=\linewidth]{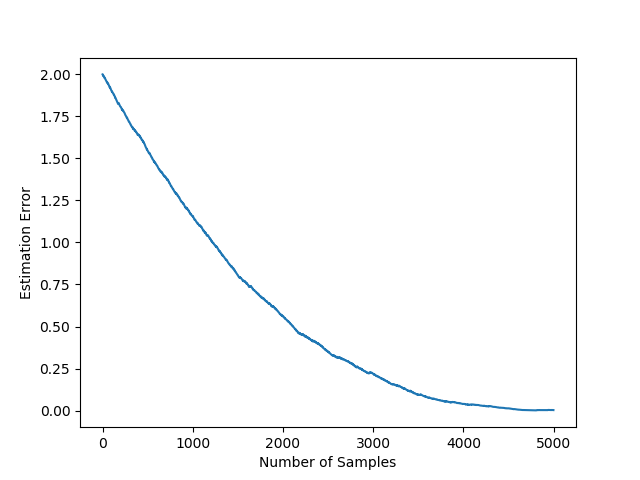}\par\caption{Convergence of $\widehat{\bm R}_{\omega,t}$ under the pre-change distribution.}\label{fig:5}    
\includegraphics[width=\linewidth]{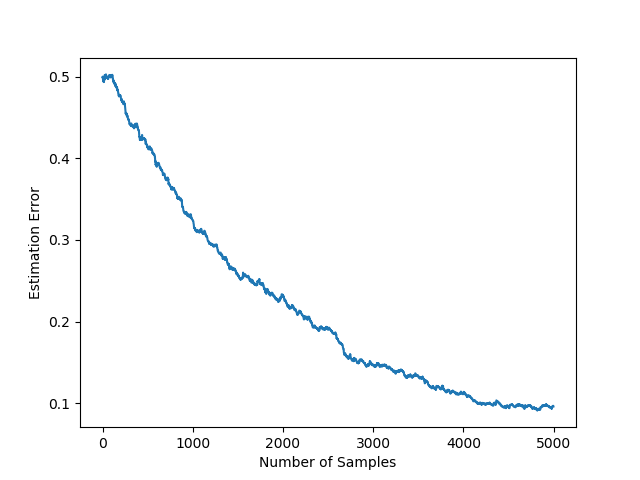}\par\caption{Convergence of $\widehat{\bm R}_{\omega, t}$ under the post-change distribution}\label{fig:6}
\end{multicols}
\end{figure}

We then compare our OGA-CuSum algorithm with an 
existing approach in \cite{chen2018quickest}. In \cite{chen2018quickest}, a GLRT based data-driven algorithm was proposed to detect disturbance signals in the AR model. We also plot the curve of the Ergodic CuSum algorithm under the model-based setting, which serves as a lower bound for the data-driven setting. We use the same parameters as in Section \ref{sec:model-base} and plot the WADD as a function of ARL. From Fig.~\ref{fig:7}, Fig.~\ref{fig:8} and Fig.~\ref{fig:datahigh}, it can be seen that our OGA-CuSum algorithm outperforms the GLRT based algorithm in \cite{chen2018quickest}. Moreover, the performance of the OGA-CuSum algorithm is close to the performance of the Ergodic CuSum algorithm. Therefore, our OGA-CuSum algorithm has a small performance loss compared with the asymptotically optimal algorithm under the model-based setting.

\begin{figure}[htbp]
\begin{multicols}{3}
\centering 
\includegraphics[width=\linewidth]{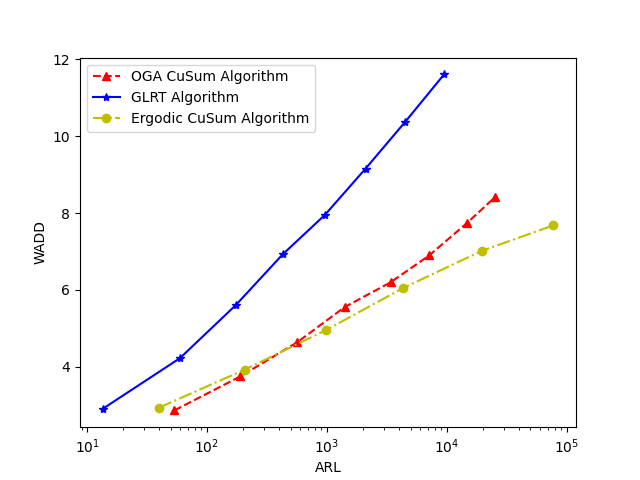}\par\caption{Comparison of the three algorithms: case 1.}\label{fig:7}    
\includegraphics[width=\linewidth]{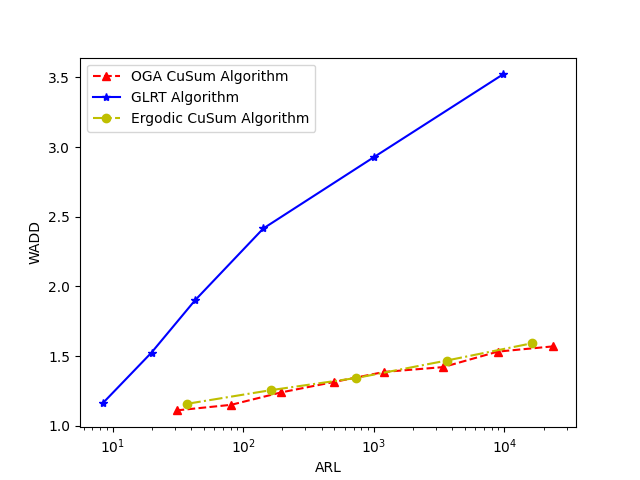}\par\caption{Comparison of the three algorithms: case 2.}\label{fig:8}
\includegraphics[width=\linewidth]{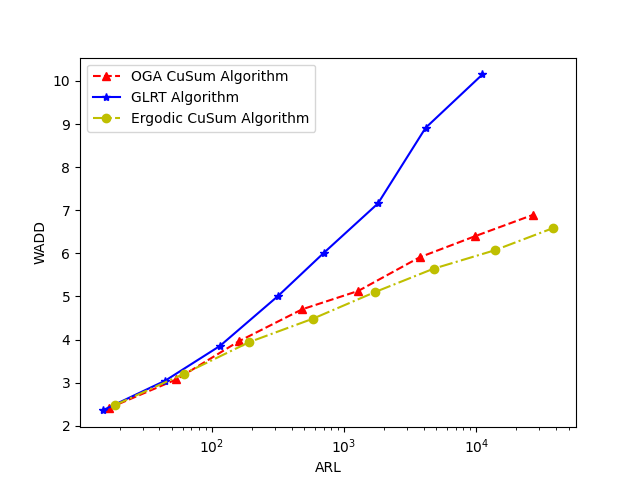}\par\caption{Comparison of the three algorithms: case 3.}\label{fig:datahigh}
\end{multicols}
\end{figure}

\section{Conclusion}\label{sec:conclusion}
In this paper, we studied the QCD problem in AR models under the model-based setting and the data-driven setting. For the model-based setting, we proposed a novel Ergodic CuSum algorithm.
By introducing the forward variable for general state HMMs, our Ergodic CuSum algorithm can be implemented efficiently. We further showed that our algorithm is asymptotically optimal under the Lorden's criterion \cite{lorden1971procedures} based on the ergodic theorem \cite{roberts2004general,meyn2012markov}. For the data-driven setting, we proposed an OGA-CuSum algorithm, which replaces the unknown parameters in the Ergodic CuSum algorithm with their estimates based on the online convex optimization. We provided simulation results to demonstrate the performance of our algorithm. Our approaches provide useful insights for general detection problems for AR models and general state HMMs.


\newpage

\appendices

\section{Convert $q$-th order AR Model to First-order AR Model}\label{app:convert}
Consider the case with $q=2$. Let $\tilde{\bm x}_{t}^{(2)} = [\bm x_{2t-1}^\top, \bm x_{2t}^\top]^\top$. From \eqref{eq:highorder}, we have that 
\begin{flalign}
    \tilde{\bm x}_{t}^{(2)} = \tilde{\bm A}\tilde{\bm x}_{t-1}^{(2)} + \tilde{\bm\omega}_{t}^{(2)},
\end{flalign}
where $\tilde{\bm A} = \begin{pmatrix}
\bm A_2 & \bm A_1\\
\bm A_1\bm A_2 & \bm A_1\bm A_1 + \bm A_2
\end{pmatrix}$ and $\tilde{\bm \omega}_{t}^{(2)} = \begin{pmatrix}
\bm\omega_{2t-1}\\
\bm A_1\bm\omega_{2t-1} + \bm \omega_{2t}
\end{pmatrix}$ is Gaussian distributed. Therefore, $\tilde{\bm x}_{t}^{(2)}$ follows a first-order AR model. Let $\tilde{\bm y}_{t}^{(2)} = [\bm y_{2t-1}^\top, \bm y_{2t}^\top]^\top$. We have that $\tilde{\bm y}_{t}^{(2)} = \tilde{\bm x}_{t}^{(2)} + \tilde{\bm\nu}_{t}^{(2)}$, where $\tilde{\bm\nu}_{t}^{(2)} = \begin{pmatrix}
\bm\nu_{2t-1}\\
\bm \nu_{2t}
\end{pmatrix}$. 

Following the same steps, for a $q$-th order AR model, let $\tilde{\bm x}_{t}^{(q)} = [\bm x_{q(t-1)+1}^\top, \bm x_{q(t-1)+2}^\top, \cdots, \bm x_{qt}^\top]^\top$ and $\tilde{\bm y}_{t}^{(q)} = [\bm y_{q(t-1)+1}^\top, \bm y_{q(t-1)+2}^\top, \cdots, \bm y_{qt}^\top]^\top$. We have that
\begin{flalign}\label{eq:convert}
    \tilde{\bm x}_{t}^{(q)} &= \tilde{\bm A}\tilde{\bm x}_{(t-1)}^{(q)} + \tilde{\bm \omega}_{t}^{(q)},\nn\\\tilde{\bm y}_{t}^{(q)} &= \tilde{\bm x}_{t}^{(q)} + \tilde{\bm\nu}_{t}^{(q)},
\end{flalign}
where $\tilde{\bm \omega}_{t}^{(q)}$ and $\tilde{\bm\nu}_{t}^{(q)}$ are the innovative noise and measurement noise respectively for the $q$-th order AR model. We denote the covariance of $\tilde{\bm \omega}_{t}^{(q)}$ by ${\widetilde{\bm R}}_{\bm \omega}$.
Here, we omit the expression of $\widetilde{\bm A}$ since it is cumbersome but it can be computed directly. It can be easily verified that the covariance matrix of $\tilde{\bm\nu}_{t}^{(q)}$ is $\bm I$.
Therefore,  $\{\tilde{\bm x}_{t}^{(q)}, \tilde{\bm y}_{t}^{(q)}\}_{t=1}^\infty$ is a first-order AR model.

\section{Proof of Lemma \ref{lemma:sigma}}\label{sec:lemma4}
\begin{proof}
To show that $\bm\Sigma_t$ converges, it suffices to show that for any $\epsilon>0$, there exists an integer $T$ such that for any $s, t>T$, $\|\bm\Sigma_s - \bm\Sigma_t\| < \epsilon$. Without loss of generality, we assume that $s>t$.

We first note that
\begin{flalign}\label{eq:updatesigma}
\bm\Sigma_t &= \big((\bm A\bm\Sigma_{t-1}\bm A^\top + \bm R_{\omega})^{-1}+\bm I\big)^{-1},\nn\\
\bm\Sigma_{s} &= \big((\bm A\bm\Sigma_{s-1}\bm A^\top + \bm R_{\omega})^{-1}+\bm I\big)^{-1}.
\end{flalign}
We then have that
\begin{flalign}
&\bm\Sigma_s - \bm \Sigma_t \nn\\&= \bm\Sigma_t(\bm\Sigma_t^{-1}-\bm\Sigma_s^{-1})\bm\Sigma_s\nn\\ &= \big((\bm A\bm\Sigma_{t-1}\bm A^\top + \bm R_{\omega})^{-1}+\bm I\big)^{-1}\big((\bm A\bm\Sigma_{t-1}\bm A^\top + \bm R_{\omega})^{-1} - (\bm A\bm\Sigma_{s-1}\bm A^\top + \bm R_{\omega})^{-1}\big)\nn\\&\hspace{0.5cm}\big((\bm A\bm\Sigma_{s-1}\bm A^\top + \bm R_{\omega})^{-1}+\bm I\big)^{-1} \nn\\&= \big((\bm A\bm\Sigma_{t-1}\bm A^\top + \bm R_{\omega})^{-1}+\bm I\big)^{-1}(\bm A\bm\Sigma_{t-1}\bm A^\top + \bm R_{\omega})^{-1}\big((\bm A\bm\Sigma_{s-1}\bm A^\top + \bm R_{\omega})-(\bm A\bm\Sigma_{t-1}\bm A^\top + \bm R_{\omega})\big)\nn\\&\hspace{0.5cm}(\bm A\bm\Sigma_{s-1}\bm A^\top + \bm R_{\omega})^{-1}\big((\bm A\bm\Sigma_{s-1}\bm A^\top + \bm R_{\omega})^{-1}+\bm I\big)^{-1}\nn\\&= (\bm I + \bm A\bm\Sigma_{t-1}\bm A^\top + \bm R_{\omega})^{-1}\bm A(\bm\Sigma_{s-1}-\bm\Sigma_{t-1})\bm A^\top(\bm I + \bm A\bm\Sigma_{s-1}\bm A^\top + \bm R_{\omega})^{-1}.
\end{flalign}

From the update rule of $\bm \Sigma_t$ in \eqref{eq:updatesigma}, $\bm \Sigma_t$ is positive definite for every $t\geq t_0$. Therefore, $\bm A\bm\Sigma_{t-1}\bm A^\top + \bm R_{\omega}$ is positive definite. It then allows the eigendecomposition and can be factorized as $\bm A\bm\Sigma_{t-1}\bm A^\top + \bm R_{\omega} = \bm Q\bm\Lambda \bm Q^{-1}$, where $\bm Q$ is the square $K\times K$ matrix whose $i$th column is the eigenvector $q_i$ of $\bm A$ and $\bm \Lambda$ is the diagonal matrix whose $i$th diagonal elements are the corresponding eigenvalues $\lambda_i$. We then have that
\begin{flalign}
(\bm I + \bm A\bm\Sigma_{t-1}\bm A^\top + \bm R_{\omega})^{-1}& = \big(\bm Q(\bm\Lambda +\bm I)\bm Q^{-1}\big)^{-1} = \bm Q(\bm\Lambda +\bm I)^{-1}\bm Q^{-1}.
\end{flalign}
Since $\bm A\bm\Sigma_{t-1}\bm A^\top + \bm R_{\omega}$ is positive definite for any $t>0$, we have that $\lambda_i>0$ for $i=1,\cdots, K$. Therefore, all the diagonal elements of $\bm\Lambda +\bm I$ are strictly larger than 1. Therefore, the diagonal elements of $(\bm\Lambda +\bm I)^{-1}$ are strictly less than 1. From the definition of operator norm, we have that there exists a $0<\delta<1$ such that $\|(\bm I + \bm A\bm\Sigma_{t-1}\bm A^\top + \bm R_{\omega})^{-1}\| < (1-\delta)$ for any $t>0$. 

It then follows that 
\begin{flalign}\label{eq:converge}
&\|\bm\Sigma_s - \bm \Sigma_t\|\nn\\& = \big\|(\bm I + \bm A\bm\Sigma_{t-1}\bm A^\top + \bm R_{\omega})^{-1}\bm A(\bm\Sigma_{s-1}-\bm\Sigma_{t-1})\bm A^\top (\bm I + \bm A\bm\Sigma_{s-1}\bm A^\top + \bm R_{\omega})^{-1}\big\|\nn\\&\leq \|(\bm I + \bm A\bm\Sigma_{t-1}\bm A^\top + \bm R_{\omega})^{-1}\| \|\bm A\|\|\bm\Sigma_{s-1}-\bm\Sigma_{t-1}\|\|\bm A^\top\|\|(\bm I + \bm A\bm\Sigma_{s-1}\bm A^\top + \bm R_{\omega})^{-1}\|\nn\\& <(1-\delta)^2 \|\bm\Sigma_{s-1}-\bm\Sigma_{t-1}\|,
\end{flalign}
where the first inequality is from the submultiplicative of operator norm and the second inequality is due to the facts that $\|(\bm I + \bm A\bm\Sigma_{t-1}\bm A^\top + \bm R_{\omega})^{-1}\| < 1-\delta, \|(\bm I + \bm A\bm\Sigma^*\bm A^\top + \bm R_{\omega})^{-1}\| < 1-\delta$ and the assumption $\|\bm A\| <1$. By applying \eqref{eq:converge} recursively, we have that
\begin{flalign}
\|\bm\Sigma_s - \bm \Sigma_t\| < (1-\delta)^{2(t-t_0)} \|\bm\Sigma_{s-t_0+t}-\bm\Sigma_{t_0}\|.
\end{flalign}
Moreover, from the update rule of $\bm\Sigma_t$, it can be easily verified that $\|\bm\Sigma_{s-t_0+t}-\bm\Sigma_{1}\|$ is bounded. Therefore, for any $\epsilon>0$, there exists an integer $T$ such that for any $s, t>T$, $\|\bm\Sigma_s - \bm\Sigma_t\| < \epsilon$. This completes the proof.
\end{proof}

\section{Proof of Lemma \ref{lemma:station}}\label{sec:lemma1}
\begin{proof}
Let $\pi(\bm x_t)$ be a Gaussian distribution with mean $\bm \mu$ and covariance matrix $\bm \Sigma$. We will show that $\pi(\bm x_t)$ is a stationary distribution of $\{\bm x_t\}_{t=t_0}^{\infty}$
if $\bm\mu = \bm 0$ and $\bm \Sigma = \sum_{i=0}^\infty (\bm A^\top)^i\bm R_{\omega}\bm A^{i}$. It can be easily proved that $\bm \Sigma$ exists since $\|\bm A\|<1$, where $\|\cdot\|$ denotes the operator norm of a matrix.

From the definition of stationary distribution, $\pi(\bm x_t)$ should satisfy that 
\begin{flalign}\label{eq:25}
\int \pi(\bm x_{t-1}) f(\bm x_t|\bm x_{t-1})d\bm x_{t-1} = \pi(\bm x_t).
\end{flalign}
We have that 
\begin{flalign}
&\int \pi(\bm x_{t-1}) f(\bm x_t|\bm x_{t-1})d\bm x_{t-1} \nn\\&
=\int \pi(\bm x_{t-1}) \frac{1}{\sqrt{(2\pi)^K \det(\bm R_\omega)}}\exp\Big(-\frac{1}{2}(\bm x_t - \bm A \bm x_{t-1})^\top \bm R_\omega^{-1}(\bm x_t - \bm A \bm x_{t-1})\Big)d\bm x_{t-1} \nn\\&= \int \pi(\bm x_{t-1}) \frac{1}{\sqrt{(2\pi)^K \det(\bm R_\omega)}}\exp\Big(-\frac{1}{2}(\bm A^{-1}\bm x_t - \bm x_{t-1})^\top \bm A^\top\bm R_\omega^{-1}\bm A(\bm A^{-1} \bm x_t - \bm x_{t-1})\Big)d\bm x_{t-1} \nn\\&= \int \pi(\bm x_{t-1}) \frac{\sqrt{\det((\bm A^\top\bm R_\omega^{-1} \bm A)^{-1})}}{\sqrt{\det(\bm R_\omega)}\sqrt{(2\pi)^K \det((\bm A^\top\bm R_\omega^{-1} \bm A))^{-1}}}\exp\nn\\&\hspace{0.2cm}\Big(-\frac{1}{2}(\bm A^{-1}\bm x_t - \bm x_{t-1})^\top \bm A^\top\bm R_\omega^{-1}\bm A(\bm A^{-1} \bm x_t - \bm x_{t-1})\Big)d\bm x_{t-1}\nn\\&\overset{(a)}{=} \frac{\sqrt{\det((\bm A^\top\bm R_\omega^{-1} \bm A)^{-1})}}{\sqrt{\det(\bm R_\omega)}\sqrt{(2\pi)^K \det((\bm A^\top\bm R_\omega^{-1} \bm A)^{-1}+\bm\Sigma)}}\exp\nn\\&\hspace{0.2cm} \Big(-\frac{1}{2}(\bm A^{-1}\bm x_t - \bm\mu)^\top \big((\bm A^\top\bm R_\omega^{-1} \bm A)^{-1} + \bm\Sigma\big)^{-1}(\bm A^{-1}\bm x_t - \bm\mu)\Big)\nn\\& = \frac{\sqrt{\det((\bm A^\top\bm R_\omega^{-1} \bm A)^{-1})}\sqrt{\det\big(\bm A ((\bm A^\top\bm R_\omega^{-1} \bm A)^{-1}+\bm\Sigma)\bm A^\top\big)}}{\sqrt{\det(\bm R_\omega)}\sqrt{ \det\big((\bm A^\top\bm R_\omega^{-1} \bm A)^{-1}+\bm\Sigma\big)}}\nn\\&\hspace{0.2cm} \frac{1}{\sqrt{(2\pi)^K\det\big(\bm A ((\bm A^\top\bm R_\omega^{-1} \bm A)^{-1}+\bm\Sigma)\bm A^\top\big)}}\exp\Big(-\frac{1}{2}\nn\\&\hspace{0.2cm} (\bm x_t - \bm A \bm\mu)^\top \big(\bm A ((\bm A^\top\bm R_\omega^{-1} \bm A)^{-1}+\bm\Sigma)\bm A^\top\big)^{-1}( \bm x_t - \bm A \bm\mu)\Big),
\end{flalign}
where $(a)$ is from the convolution of two Gaussian functions. 

Let $\bm\Sigma = \sum_{i=0}^\infty (\bm A^\top)^i\bm R_{\omega}\bm A^{i}$. We have that $\bm\Sigma = \bm R_\omega + \bm A\bm\Sigma\bm A^\top$.
If we choose $\bm \mu = 0$ and $\bm\Sigma = \sum_{i=0}^\infty (\bm A^\top)^i\bm R_{\omega}\bm A^{i}$, it can then be easily verified that \eqref{eq:25} holds.
Therefore, $\pi(\bm x_t)$ is a stationary distribution of $\{\bm x_t\}_{t=t_0}^{\infty}$. It then follows that $\{\bm x_t, \bm y_t\}_{t=t_0}^{\infty}$ has a stationary distribution $\pi(\bm x_t) g(\bm y_t|\bm x_t)$.

Let $P^1\big(\{\bm x, \bm y\}, E\big)$ denote the probability of reaching a measurable set $E$ from state $\{\bm x, \bm y\}$ in one step.
We have that for any $\{\bm x, \bm y\}$ and $E\in\mathbb{R}^{2K}$ such that $\pi(E)>0$,
\begin{flalign}
P^1\big(\{\bm x, \bm y\}, E\big) &= \int_E f\big(\bm x^\prime|\bm x\big)g(\bm y^\prime|\bm x^\prime)d\bm x^\prime\bm y^\prime 
> 0,
\end{flalign}
where the inequality is due to the fact that Gaussian density functions are positive.
From the definition of irreducible \cite{meyn2012markov} Markov chain, we have that $\{\bm x_t, \bm y_t\}_{t=t_0}^{\infty}$ is $\pi$-irreducible.
\end{proof}

\section{Proof of Lemma \ref{lemma:ymumarkov}}\label{sec:lemma5}
\begin{proof}
To prove Lemma \ref{lemma:ymumarkov}, we will first show that 
$\{\bm x_t, \bm y_t, \bm \mu_t^*\}_{t=t_0}^{\infty}$ is a Markov chain. We will then show that the stationary distribution for $\bm \mu_t^*$ exists by finding the limiting distribution of $\bm \mu_t^*$. We then construct a new Markov chain using $\bm y_t, \bm \mu_t^*$ and show that its stationary distribution exists and is unique.

\textbf{Step 1.} We first show that $\{\bm x_t, \bm y_t, \bm \mu_t^*\}_{t=t_0}^{\infty}$ is a Markov chain.
From the update rule of $\bm\mu^*_t$, we have that for $t\geq t_0$,
\begin{flalign}
&p_{t_0}(\bm\mu_t^*|\bm x_{t_0}, \cdots, \bm x_{t}, \bm y_{t_0},\cdots, \bm y_{t}, \bm \mu_{t_0}^*, \cdots, \bm \mu_{t-1}^*) = p_{t_0}(\bm\mu_{t}^*|\bm \mu_{t-1}^*, \bm y_t).
\end{flalign}
We then have that
\begin{flalign}
&p_{t_0}(\bm x_t, \bm y_t, \bm\mu_t^*|\bm x_{t_0}, \cdots, \bm x_{t-1}, \bm y_{t_0},\cdots, \bm y_{t-1}, \bm \mu_{t_0}^*, \cdots, \bm \mu_{t-1}^*)\nn\\& =p_{t_0}(\bm x_t|\bm x_{t-1})p_{t_0}(\bm y_t|\bm x_t) p_{t_0}(\bm\mu_{t}^*|\bm \mu_{t-1}^*, \bm y_t)\nn\\& = p_{t_0}(\bm x_t, \bm y_t, \bm \mu_t^*|\bm x_{t-1}, \bm y_{t-1}, \bm \mu_{t-1}^*).
\end{flalign}
Therefore, $\{\bm x_t, \bm y_t, \bm \mu_t^*\}_{t={t_0}}^{\infty}$ is a Markov chain.

\textbf{Step 2.} We then show that the stationary distribution for $\bm \mu_t^*$ exists by finding the limiting distribution of $\bm \mu_t^*$.
Let the initial state be $\{\bm x_{t_0}, \bm y_{t_0}, \bm \mu_{t_0}^*\}$. From \eqref{eq:mustar}, we have that 
\begin{flalign}\label{eq:mut}
\bm\mu^*_t &= \big((\bm A\bm\Sigma^*\bm A^\top + \bm R_{\omega} +\bm I)^{-1}\bm A\big)^{t-{t_0}}\bm\mu_{t_0}^*  + \big((\bm A\bm\Sigma^*\bm A^\top + \bm R_{\omega} +\bm I)^{-1}\bm A\big)^{t-{t_0}-1}\bm\Sigma^*\bm y_{t_0+1} +\cdots\nn\\&\hspace{0.5cm} +(\bm A\bm\Sigma^*\bm A^\top + \bm R_{\omega} +\bm I)^{-1}\bm A\bm\Sigma^*\bm y_{t-1} + \bm\Sigma^*\bm y_t.
\end{flalign}

Note that $\bm\mu^*_t$ is the sum of Gaussian random variables. If $\bm y_{t_0}, \cdots, \bm y_t$ are jointly Gaussian distributed, then $\bm\mu^*_t$ is a Gaussian random variable. Since the Gaussian density function is continuous in its mean and covariance, it then suffices to show that the limiting mean and limiting covariance matrix of $\bm\mu^*_t$ exist and are independent of the initial state.

Given the initial state $\{\bm x_{t_0}, \bm y_{t_0}, \bm \mu_{t_0}^*\}$, we have that 
\begin{flalign}
&p_{t_0}(\bm y_{t_0+1}|\bm x_{t_0}, \bm y_{t_0}, \bm \mu_{t_0}^*)\nn\\ &= \int p_{t_0}(\bm x_{t_0+1}|\bm x_{t_0})p_{t_0}(\bm y_{t_0+1}|\bm x_{t_0+1})d\bm x_{t_0+1} \nn\\&= \frac{1}{\sqrt{(2\pi)^K \det(\bm R_\omega)}}\exp\Big(-\frac{1}{2}(\bm x_{t_0+1} - \bm A \bm x_{t_0})^\top \bm R_\omega^{-1}(\bm x_{t_0+1} - \bm A \bm x_{t_0})\Big)\nn\\&\hspace{0.2cm}\frac{1}{\sqrt{(2\pi)^K}}\exp(-\frac{1}{2}(\bm y_{t_0+1}-\bm x_{t_0+1})^\top(\bm y_{t_0+1}-\bm x_{t_0+1}))d\bm x_{t_0+1}\nn\\& = \frac{1}{\sqrt{(2\pi)^K \det(\bm R_\omega + \bm I)}}\exp\Big(-\frac{1}{2}(\bm y_{t_0+1} - \bm A \bm x_{t_0})^\top (\bm I + \bm R_\omega)^{-1}(\bm y_{t_0+1} - \bm A \bm x_{t_0})\Big).
\end{flalign}
Therefore, conditioning on $\{\bm x_{t_0}, \bm y_{t_0}, \bm \mu_{t_0}^*\}$, $\bm y_{t_0+1}$ is Gaussian distributed with mean $\bm A \bm x_{t_0}$ and covariance matrix $\bm I + \bm R_\omega$. Similarly, we can show that for any $t>t_0$,  conditioning on $\{\bm x_{t_0}, \bm y_{t_0}, \bm \mu_{t_0}^*\}$, $\bm y_t \sim \mathcal{N}(\bm A^{(t-{t_0})}\bm x_{t_0}, \bm I + \bm R_\omega + \bm A^\top\bm R_{\omega}\bm A +\cdots+(\bm A^\top)^{(t-t_0-1)}\bm R_{\omega}\bm A^{t-t_0-1})$. Moreover, since $\bm x_t$ is Gaussian distributed and $\bm x_{t+1}|\bm x_t\sim \mathcal{N}(\bm A\bm x_t, \bm R_{\omega})$, we have that $\bm x_t, \bm x_{t+1}$ are jointly Gaussian distributed. Similarly, since $\bm x_t, \bm x_{t+1}$ are jointly Gaussian and $\bm y_t|(\bm x_t, \bm x_{t+1})\sim \mathcal{N}(\bm x_t, \bm I)$, we have that $\bm x_t, \bm y_t, \bm x_{t+1}$ are jointly Gaussian distributed. Following the same idea, we can show that $\bm x_{t_0}, \bm x_{t_0+1},\cdots, \bm x_t, \bm y_{t_0}, \bm y_{t_0+1}, \cdots, \bm y_t$ are jointly Gaussian distributed and thus $\bm y_{t_0}, \bm y_{t_0+1}, \cdots, \bm y_t$ are jointly Gaussian distributed. Therefore, $\bm \mu_t^*$ follows a Gaussian distribution.

To find the limiting distribution of $\bm \mu_t^*$, we first consider the limiting expectation of $\bm\mu^*_t$. We have that
\begin{flalign}\label{eq:emut}
&\mathbb{E}_{t_0}[\bm\mu^*_t|\bm x_{t_0}, \bm y_{t_0}, \bm \mu_{t_0}^*]\nn\\
& = \big((\bm A\bm\Sigma^*\bm A^\top + \bm R_{\omega} +\bm I)^{-1}\bm A\big)^{t-{t_0}}\bm\mu_{t_0}^*  + \big((\bm A\bm\Sigma^*\bm A^\top + \bm R_{\omega} +\bm I)^{-1}\bm A\big)^{t-{t_0}-1}\bm\Sigma^*\bm A\bm x_{t_0} +\cdots\nn\\&\hspace{0.2cm} +(\bm A\bm\Sigma^*\bm A^\top + \bm R_{\omega} +\bm I)^{-1}\bm A\bm\Sigma^*\bm A^{t-{t_0}-1}\bm x_{t_0} + \bm\Sigma^*\bm A^{t-{t_0}}\bm x_{t_0}.\nn
\end{flalign}
Since there exists $\delta >0$ such that $\|(\bm A\bm\Sigma^*\bm A^\top + \bm R_{\omega} +\bm I)^{-1}\bm A\|<1-\delta$ and $\|\bm A\| < 1-\delta$, we have that 
\begin{flalign*}
&\big\|\mathbb{E}_{t_0}[\bm \mu^*_t|\bm x_{t_0}, \bm y_{t_0}, \bm \mu_{t_0}^*]\big\| \nn\\&\overset{(a)}{\leq} \big\|\big((\bm A\bm\Sigma^*\bm A^\top + \bm R_{\omega} +\bm I)^{-1}\bm A\big)^{t-{t_0}}\bm\mu_{t_0}^*\big\|  + \big\|\big((\bm A\bm\Sigma^*\bm A^\top + \bm R_{\omega} +\bm I)^{-1}\bm A\big)^{t-{t_0}-1}\bm\Sigma^*\bm A\bm x_{t_0}\big\| +\cdots\nn\\&\hspace{0.2cm} +\big\|(\bm A\bm\Sigma^*\bm A^\top + \bm R_{\omega} +\bm I)^{-1}\bm A\bm\Sigma^*\bm A^{t-{t_0}-1}\bm x_{t_0}\big\| + \big\|\bm\Sigma^*\bm A^{t-{t_0}}\bm x_{t_0}\big\|\nn\\&\overset{(b)}{\leq} \big\|(\bm A\bm\Sigma^*\bm A^\top + \bm R_{\omega} +\bm I)^{-1}\bm A\big\|^{t-{t_0}}\big\|\bm\mu_{t_0}^*\big\|  + \big\|(\bm A\bm\Sigma^*\bm A^\top + \bm R_{\omega} +\bm I)^{-1}\bm A\big\|^{t-{t_0}-1}\big\|\bm\Sigma^*\big\|\big\|\bm A\big\|\big\|\bm x_{t_0}\big\| +\cdots\nn\\&\hspace{0.2cm} +\big\|(\bm A\bm\Sigma^*\bm A^\top + \bm R_{\omega} +\bm I)^{-1}\bm A\big\|\big\|\bm\Sigma^*\big\|\big\|\bm A\big\|^{t-{t_0}-1}\big\|\bm x_{t_0}\big\| + \big\|\bm\Sigma^*\big\|\big\|\bm A\big\|^{t-{t_0}}\big\|\bm x_{t_0}\big\|\nn\\
&\leq (1-\delta)^{t-{t_0}}\big\|\bm\mu_{t_0}^*\big\| + (t-{t_0})(1-\delta)^{t-{t_0}}\|\bm\Sigma^*\|\|\bm x_{t_0}\|,
\end{flalign*}
where $(a)$ is from the triangle inequality of norms and $(b)$ is due to the submultiplicative of the operator norm.
We then have that $\lim_{t\rightarrow\infty}\big\|\mathbb{E}_{t_0}[\bm \mu^*_t|\bm x_{t_0}, \bm y_{t_0}, \bm \mu_{t_0}^*]\big\| =\bm 0$. 
Therefore, we have that $\lim_{t\rightarrow\infty}\mathbb{E}_{t_0}[\bm \mu^*_t|\bm x_{t_0}, \bm y_{t_0}, \bm \mu_{t_0}^*]$ exists and is independent of the initial state $\{\bm x_{t_0}, \bm y_{t_0}, \bm \mu_{t_0}^*\}$.

We then consider the covariance matrix of $\bm\mu^*_t$. Given the initial state $\{\bm x_{t_0}, \bm y_{t_0}, \bm \mu_{t_0}^*\}$, let
\begin{flalign}
&\bm\Sigma_{\bm\mu_t^*} = \mathbb{E}_{t_0}\Big[\big(\bm \mu_t^{*} - \mathbb{E}_{t_0}[\bm \mu_t^{*}|\bm x_{t_0}, \bm y_{t_0}, \bm \mu_{t_0}^*]\big)\big(\bm\mu_t^*-\mathbb{E}_{t_0}[\bm \mu_t^{*}|\bm x_{t_0}, \bm y_{t_0}, \bm \mu_{t_0}^*]\big)^\top\big|\bm x_{t_0}, \bm y_{t_0}, \bm \mu_{t_0}^*\Big],\nn\\
&\bm\Sigma_{ij} = \mathbb{E}_{t_0}\Big[\big(\bm y_i-\mathbb{E}_{t_0}[\bm y_i|\bm x_{t_0}, \bm y_{t_0}, \bm \mu_{t_0}^*]\big)\big(\bm y_j-\mathbb{E}_{t_0}[\bm y_j|\bm x_{t_0}, \bm y_{t_0}, \bm \mu_{t_0}^*]\big)^\top\big|\bm x_{t_0}, \bm y_{t_0}, \bm \mu_{t_0}^*\Big].
\end{flalign}
From \eqref{eq:mut} and \eqref{eq:emut}, we have that 
\begin{flalign}\label{eq:mucovariance}
&\bm\Sigma_{\bm \mu_t^*} = \sum_{i={t_0}}^{t-1}\big((\bm A\bm\Sigma^*\bm A^\top + \bm R_{\omega} +\bm I)^{-1}\bm A\big)^{i-t_0}\bm\Sigma^{*}\bm\Sigma_{(t-i+t_0)(t-i+t_0)}\Big(\big((\bm A\bm\Sigma^*\bm A^\top + \bm R_{\omega} +\bm I)^{-1}\bm A\big)^{i-t_0}\bm\Sigma^{*}\Big)^\top\nn\\& + 2\sum_{i={t_0}}^{t-1}\sum_{j={t_0}}^{i-1}\big((\bm A\bm\Sigma^*\bm A^\top + \bm R_{\omega} +\bm I)^{-1}\bm A\big)^{i-t_0}\bm\Sigma^{*}\bm\Sigma_{(t-i+t_0)(t-j+t_0)}\Big(\big((\bm A\bm\Sigma^*\bm A^\top + \bm R_{\omega} +\bm I)^{-1}\bm A\big)^{j-t_0}\bm\Sigma^{*}\Big)^\top.
\end{flalign}
We first show that $\|\bm\Sigma_{ij}\|$ is bounded for any $i, j > t_0$. When $i = j$, we have that
\begin{flalign}
&\|\bm\Sigma_{ii}\| \nn\\&= \|\bm I + \bm R_\omega + \bm A^\top\bm R_{\omega}\bm A +\cdots+(\bm A^\top)^{(i-{t_0}-1)}\bm R_{\omega}\bm A^{i-{t_0}-1}\|\nn\\& \leq \|\bm I\| + \|\bm R_\omega\| + \|\bm A^\top\|\|\bm R_{\omega}\|\|\bm A\| +\cdots + \|\bm A^\top\|^{i-{t_0}-1}\|\bm R_{\omega}\|\|\bm A\|^{i-{t_0}-1} \nn\\&= \|\bm I\| + \big(1+\|\bm A\|^2+\cdots+\|\bm A\|^{2(i-{t_0})}\big)\|\bm R_{\omega}\| \nn\\&= \|\bm I\| + \frac{1-\|\bm A\|^{2(i-{t_0})}}{1-\|\bm A\|^2}\|\bm R_{\omega}\|\nn\\&\leq \|\bm I\| + \frac{1}{1-\|\bm A\|^2}\|\bm R_{\omega}\|,
\end{flalign}
where the last inequality is due to the fact that there exists $0<\delta<1$ such that $\|\bm A\| <1-\delta$. When $i\neq j$, we have that 
\begin{flalign}
\|\bm\Sigma_{ij}\| &= \Big\|\mathbb{E}_{t_0}\Big[\big(\bm y_i-\mathbb{E}_{t_0}[\bm y_i|\bm x_{t_0}, \bm y_{t_0}, \bm \mu_{t_0}^*]\big)\big(\bm y_j-\mathbb{E}_{t_0}[\bm y_j|\bm x_{t_0}, \bm y_{t_0}, \bm \mu_{t_0}^*]\big)^\top\big|\bm x_{t_0}, \bm y_{t_0}, \bm \mu_{t_0}^*\Big]\Big\|\nn\\&\leq \frac{1}{2}\big(\|\bm\Sigma_{ii}\| +  \|\bm\Sigma_{jj}\|\big)\nn\\&\leq \|\bm I\| + \frac{1}{1-\|\bm A\|^2}\|\bm R_{\omega}\|,
\end{flalign}
where the first inequality is due to the fact that for any two vectors $\bm u, \bm v \in \mathbb{R}^K$, $(\bm u-\bm v)(\bm u-\bm v)^\top$ is positive semi-definite, and thus $\bm u\bm u^\top + \bm v\bm v^\top - 2\bm u\bm v^\top \succeq 0$. Since the operator norm equals to the largest eigenvalue for the positive semi-definite matrix, we have that $\|\bm u\bm v^\top\|\leq \frac{1}{2}\|\bm u\bm u^\top + \bm v\bm v^\top\|\leq \frac{1}{2}(\|\bm u\bm u^\top\| + \|\bm v\bm v^\top\|)$.

To show that $\bm\Sigma_{\bm\mu_t^*}$ converges, it suffices to show that for any $\epsilon>0$, there exists an integer $T\geq t_0$ such that for any $s, t>T$, $\|\bm\Sigma_{\bm\mu_s^*} - \bm\Sigma_{\bm\mu_t^*}\| < \epsilon$. Without loss of generality, we assume that $s>t$. We then have that
\begin{flalign}
&\|\bm\Sigma_{\bm\mu_s^*} - \bm\Sigma_{\bm\mu_t^*}\|\nn\\& \leq  \Big\|\sum_{i=t}^{s-1}\big((\bm A\bm\Sigma^*\bm A^\top + \bm R_{\omega} +\bm I)^{-1}\bm A\big)^{i-t_0}\bm\Sigma^{*}\bm\Sigma_{(s-i+t_0)(s-i+t_0)}\Big(\big((\bm A\bm\Sigma^*\bm A^\top + \bm R_{\omega} +\bm I)^{-1}\bm A\big)^{i-t_0}\bm\Sigma^{*} \Big)^\top\Big\|\nn\\&\hspace{0.1cm}+ 2 \Big\|\sum_{i=t}^{s-1}\sum_{j=t_0}^{i-1}\big((\bm A\bm\Sigma^*\bm A^\top + \bm R_{\omega} +\bm I)^{-1}\bm A\big)^{i-t_0}\bm\Sigma^{*}\nn\\&\hspace{0.5cm}\bm\Sigma_{(s-i+t_0)(s-j+t_0)}\Big(\big((\bm A\bm\Sigma^*\bm A^\top + \bm R_{\omega} +\bm I)^{-1}\bm A\big)^{j-t_0}\bm\Sigma^{*}\Big)^\top\Big\|,\nn
\end{flalign}
where the inequality is from the triangle inequality of operator norm. To simplify the notation, let $b = \|(\bm A\bm\Sigma^*\bm A^\top + \bm R_{\omega} +\bm I)^{-1}\bm A\|<1-\delta, \|\widehat{\bm\Sigma}\| =\|\bm I\| + \frac{1}{1-\|\bm A\|^2}\|\bm R_{\omega}\|$. It then follows that
\begin{flalign}\label{eq:conv}
&\|\bm\Sigma_{\bm\mu_s^*} - \bm\Sigma_{\bm\mu_t^*}\|\nn\\& \leq \sum_{i=T+1}^{s-1} b^{2(i-t_0)}\|\bm\Sigma^*\|^2\|\bm\Sigma_{(s-i+t_0)(s-i+t_0)}\| + 2\sum_{i=T+1}^{s-1}\sum_{j=t_0}^{i-1}b^{(i+j-2t_0)}\|\bm\Sigma^*\|^2\|\bm\Sigma_{(s-i+t_0)(s-j+t_0)}\|\nn\\&\leq\sum_{i=T+1}^{s-1} b^{2(i-t_0)}\|\bm\Sigma^*\|^2\|\widehat{\bm\Sigma}\| + 2\sum_{i=T+1}^{s-1}\sum_{j=t_0}^{i-1}b^{(i+j-2t_0)}\|\bm\Sigma^*\|^2\|\widehat{\bm\Sigma}\|\nn\\&= \frac{b^{2(T+1-t_0)}(1-b^{2(s-T-1)})}{1-b^2}\|\bm\Sigma^*\|^2\|\widehat{\bm\Sigma}\| + 2\sum_{i=T+1}^{s-1}\frac{b^{(i-t_0)}(1-b^{(i-t_0)})}{1-b}\|\bm\Sigma^*\|^2\|\widehat{\bm\Sigma}\|\nn\\&\leq \frac{b^{2(T+1-t_0)}}{1-b^2}\|\bm\Sigma^*\|^2\|\widehat{\bm\Sigma}\| + 2\frac{b^{(T+1-t_0)}(1-b^{(s-T-1)})}{(1-b)^2}\|\bm\Sigma^*\|^2\|\widehat{\bm\Sigma}\|\nn\\&\leq \frac{b^{2(T+1-t_0)}}{1-b^2}\|\bm\Sigma^*\|^2\|\widehat{\bm\Sigma}\| + 2 \frac{b^{(T+1-t_0)}}{(1-b)^2}\|\bm\Sigma^*\|^2\|\widehat{\bm\Sigma}\|.
\end{flalign}

Since there exists $0<\delta<1$ such that $b<1-\delta$, for any $\epsilon > 0$, there exists an integer $T$ such that for any $s,t>T$, $\|\bm\Sigma_{\bm\mu_s^*} - \bm\Sigma_{\bm\mu_t^*}\| < \epsilon$. Therefore, $\bm\Sigma_{\mu^*_t}$ converges as $t\rightarrow\infty$. 
Therefore, given the initial state $\{\bm x_{t_0}, \bm y_{t_0}, \bm \mu_{t_0}^*\}$, 
the limiting covariance matrix of $\bm \mu^*_t$ exists. Moreover, from the definition of $\bm\Sigma_{\bm\mu^*_t}$ in \eqref{eq:mucovariance}, $\bm \Sigma_{\bm\mu^*_t}$ is independent of the initial state $\{\bm x_{t_0}, \bm y_{t_0}, \bm \mu_{t_0}^*\}$. Since the limiting distribution of $\bm\mu^*_t$ exists and is independent of the initial state, from the definition of stationary distribution of Markov chain \cite{meyn2012markov}, $\bm\mu^*_t$ has a stationary distribution, which is a Gaussian distribution.

\textbf{Step 3.} We then show that $\{\bm y_t, \bm\mu^*_t\}_{t={t_0}}^\infty$ is a Markov chain and its stationary distribution exists and is unique.
We have that 
\begin{flalign}
&p_{t_0}(\bm y_t|\bm y_{t_0}, \cdots,\bm y_{t-1}, \bm \mu_{t_0}^*,\cdots, \bm \mu^*_{t-1})\nn\\& = \frac{p_{t_0}(\bm y_{t_0}, \cdots, \bm y_t, \bm\mu^*_{t_0})}{p_{t_0}(\bm y_{t_0}, \cdots, \bm y_{t-1}, \bm\mu^*_{t_0})}\cdot\frac{p_{t_0}(\bm \mu^*_{t_0+1}, \cdots, \bm \mu^*_{t-1}|\bm y_{t_0}, \cdots, \bm y_t, \bm\mu^*_{t_0})}{p_{t_0}(\bm \mu^*_{t_0+1}, \cdots, \bm \mu^*_{t-1}|\bm y_{t_0}, \cdots, \bm y_{t-1}, \bm\mu^*_{t_0})} \nn\\&\overset{(a)}{=} \frac{p_{t_0}(\bm y_{t_0}, \cdots, \bm y_t, \bm\mu^*_{t_0})}{p_{t_0}(\bm y_{t_0}, \cdots, \bm y_{t-1}, \bm\mu^*_{t_0})}\nn\\&= p_{t_0}(\bm y_t|\bm y_{t_0}, \cdots, \bm y_{t-1}, \bm\mu^*_{t_0})\nn\\& \overset{(b)}{=}  \frac{1}{\sqrt{(2\pi)^K\det(\bm A\bm\Sigma^*\bm A^\top + \bm R_{\omega})}}\frac{1}{\sqrt{\det((\bm A\bm\Sigma^*\bm A^\top + \bm R_{\omega})^{-1} + \bm I)}}\nn\\&\hspace{0.5cm}\exp\bigg(-\frac{1}{2}\Big((\bm A\bm\mu^*_{t-1})^\top(\bm A\bm\Sigma^*\bm A^\top + \bm R_{\omega})^{-1}(\bm A\bm\mu^*_{t-1})\nn\\&\hspace{0.5cm}+\bm y_{t}^\top \bm y_t - \big((\bm A\bm\Sigma^*\bm A^\top + \bm R_{\omega})^{-1}(\bm A\bm\mu^*_{t-1}) + \bm y_t\big)^\top\nn\\&\hspace{0.5cm}\big((\bm A\bm\Sigma^*\bm A^\top + \bm R_{\omega})^{-1} + \bm I\big)^{-1}\big((\bm A\bm\Sigma^*\bm A^\top + \bm R_{\omega})^{-1}(\bm A\bm\mu^*_{t-1}) + \bm y_t\big) \Big)\bigg),
\end{flalign}
which depends only on $\bm \mu^*_{t-1}$, and $(a)$ is due to the fact that conditioning on $(\bm y_{t_0}, \cdots, \bm y_{t-1})$, $(\bm \mu^*_{t_0+1}, \cdots, \bm \mu^*_{t-1})$ is independent of $\bm y_{t}$ and $(b)$ is from the update rule of the forward variable in Lemma \ref{lemma:forward}. We then have that
\begin{flalign}
&p_{t_0}(\bm y_t, \bm\mu_t^*|\bm y_{t_0},\cdots, \bm y_{t-1}, \bm \mu_{t_0}^*, \cdots, \bm \mu_{t-1}^*)\nn\\& = p_{t_0}(\bm y_t|\bm \mu_{t-1}^*) p_{t_0}(\bm\mu_{t}^*|\bm \mu_{t-1}^*, \bm y_t)\nn\\& = p_{t_0}(\bm y_t, \bm \mu_t^*|\bm y_{t-1}, \bm \mu_{t-1}^*).
\end{flalign}
Therefore, $\{\bm y_t, \bm\mu^*_t\}_{t={t_0}}^\infty$ is a Markov chain.

Since the stationary distribution of $\bm\mu^*_t$ exists, we have that 
\begin{flalign}
&\int\big(\pi(\bm\mu_{t-1}^*)p_{t_0}(\bm y_t, \bm \mu^*_t|\bm\mu^*_{t-1})d\bm\mu^*_{t-1}\big)p_{t_0}(\bm y_{t+1}|\bm \mu_t^*)p_{t_0}(\bm \mu_{t+1}^*|\bm\mu_{t},\bm y_{t+1})d\bm\mu^*_t\bm y_t \nn\\&= \int\pi(\bm\mu^*_t)p_{t_0}(\bm y_{t+1}|\bm \mu_t^*)p_{t_0}(\bm \mu_{t+1}^*|\bm\mu_{t},\bm y_{t+1})d\bm\mu^*_t\nn\\& = \int\pi(\bm\mu^*_t)p_{t_0}(\bm y_{t+1}, \bm\mu^*_{t+1}|\bm \mu_t^*)d\bm\mu^*_t.
\end{flalign}
Therefore, $\int\pi(\bm\mu_{t-1}^*)p_{t_0}(\bm y_t, \bm \mu^*_t|\bm\mu^*_{t-1})d\bm\mu^*_{t-1}$ is a stationary distribution of $\{\bm y_t, \bm\mu^*_t\}_{t={t_0}}^\infty$. Following the same techniques as in the proof of Lemma \ref{lemma:station}, we have that $\{\bm y_t, \bm\mu^*_t\}_{t={t_0}}^\infty$ is $\pi$-irreducible. Therefore, the stationary distribution of $\{\bm y_t, \bm\mu^*_t\}_{t={t_0}}^\infty$ is unique.
\end{proof}

\section{Proof of Theorem \ref{theorem:waddlow}}\label{sec:theorem1}
\begin{proof}
\textbf{Step 1.} We first consider a special case where $\bm\Sigma_{t_0} = \bm\Sigma^*$. We have that 
\begin{flalign}
&\log p_{t_0}^*(\bm y_{t_0},\cdots, \bm y_t)\nn\\& = \log \frac{p_{t_0}^*(\bm y_{t_0},\cdots, \bm y_t)}{p_{t_0}^*(\bm y_{t_0},\cdots, \bm y_{t-1})} + \log \frac{p_{t_0}^*(\bm y_{t_0},\cdots, \bm y_{t-1})}{p_{t_0}^*(\bm y_{t_0},\cdots, \bm y_{t-2})} + \cdots + \log\frac{p_{t_0}^*(\bm y_{t_0}, \bm y_{t_0+1})}{p_{t_0}^*(\bm y_{t_0})} + \log p_{t_0}^*(\bm y_{t_0})
\end{flalign}
and 
\begin{flalign}\label{eq:ycondition}
&\log \frac{p_{t_0}^*(\bm y_{t_0},\cdots, \bm y_t)}{p_{t_0}^*(\bm y_{t_0},\cdots, \bm y_{t-1})}\nn\\&  = \log\frac{\int\alpha_t(\bm x_t)d\bm x_t}{\int\alpha_{t-1}(\bm x_{t-1})d\bm x_{t-1}}\nn\\& =\log\Big( \frac{1}{\sqrt{(2\pi)^K\det(\bm A\bm\Sigma^*\bm A^\top + \bm R_{\omega})}}\frac{1}{\sqrt{\det((\bm A\bm\Sigma^*\bm A^\top + \bm R_{\omega})^{-1} + \bm I)}}\Big) \nn\\&\hspace{0.2cm}-\frac{1}{2}\Big((\bm A\bm\mu_{t-1}^*)^\top(\bm A\bm\Sigma^*\bm A^\top + \bm R_{\omega})^{-1}(\bm A\bm\mu_{t-1}^*)\nn\\&\hspace{0.2cm}+\bm y_{t}^\top \bm y_t - \big((\bm A\bm\Sigma^*\bm A^\top + \bm R_{\omega})^{-1}(\bm A\bm\mu_{t-1}^*) + \bm y_t\big)^\top\nn\\&\hspace{0.2cm}\big((\bm A\bm\Sigma^*\bm A^\top + \bm R_{\omega})^{-1} + \bm I\big)^{-1} \big((\bm A\bm\Sigma^*\bm A^\top + \bm R_{\omega})^{-1}(\bm A\bm\mu_{t-1}^*) + \bm y_t\big) \Big),
\end{flalign}
where the second equality is from the updated rule of $\alpha_t(\bm x_t)$. 

Since $\bm\mu_t^* = (\bm A\bm\Sigma^*\bm A^\top + \bm R_{\omega} +\bm I)^{-1}\bm A\bm\mu_{t-1}^* + \bm\Sigma^*\bm y_t$, we replace $\bm A\bm\mu_{t-1}^*$ by $(\bm A\bm\Sigma^*\bm A^\top + \bm R_{\omega} +\bm I)(\bm\mu_t^* - \bm\Sigma^*\bm y_t)$. Therefore, $\log \frac{p_{t_0}^*(\bm y_{t_0},\cdots, \bm y_t)}{p_{t_0}^*(\bm y_{t_0},\cdots, \bm y_{t-1})}$ can be equivalently written as a function of only $\bm\mu^*_t, \bm y_t$. Let $h(\bm\mu^*_t, \bm y_t) = \log \frac{p_{t_0}^*(\bm y_{t_0},\cdots, \bm y_t)}{p_{t_0}^*(\bm y_{t_0},\cdots, \bm y_{t-1})}$. We have the following explicit expression for $h(\bm\mu^*_t, \bm y_t)$, 
\begin{flalign}
    &h(\bm\mu^*_t, \bm y_t)\nn\\ &= \log\Big( \frac{1}{\sqrt{(2\pi)^K\det(\bm A\bm\Sigma^*\bm A^\top + \bm R_{\omega})}}\frac{1}{\sqrt{\det((\bm A\bm\Sigma^*\bm A^\top + \bm R_{\omega})^{-1} + \bm I)}}\Big) \nn\\&\hspace{0.2cm}-\frac{1}{2}\Big(((\bm A\bm\Sigma^*\bm A^\top + \bm R_{\omega} +\bm I)(\bm\mu_t^* - \bm\Sigma^*\bm y_t))^\top(\bm A\bm\Sigma^*\bm A^\top + \bm R_{\omega})^{-1}((\bm A\bm\Sigma^*\bm A^\top + \bm R_{\omega} +\bm I)(\bm\mu_t^* - \bm\Sigma^*\bm y_t))\nn\\&\hspace{0.2cm}+\bm y_{t}^\top \bm y_t - \big((\bm A\bm\Sigma^*\bm A^\top + \bm R_{\omega})^{-1}((\bm A\bm\Sigma^*\bm A^\top + \bm R_{\omega} +\bm I)(\bm\mu_t^* - \bm\Sigma^*\bm y_t)) + \bm y_t\big)^\top\nn\\&\hspace{0.2cm}\big((\bm A\bm\Sigma^*\bm A^\top + \bm R_{\omega})^{-1} + \bm I\big)^{-1} \big((\bm A\bm\Sigma^*\bm A^\top + \bm R_{\omega})^{-1}((\bm A\bm\Sigma^*\bm A^\top + \bm R_{\omega} +\bm I)(\bm\mu_t^* - \bm\Sigma^*\bm y_t)) + \bm y_t\big) \Big)
\end{flalign}
Since $\pi(\bm\mu^*_t), \pi(\bm y_t)$ are Gaussian distributions and $h(\bm\mu, \bm y)$ is a quadratic function of $\bm\mu, \bm y$, it can be easily verified that $\mathbb{E}_\pi[|h(\bm\mu, \bm y)|]<\infty$. We further note that $\mathbb{R}^{2K}$ is a state space with countably generated $\sigma$-algebra \cite{roberts2004general} and $\{\bm y_t, \bm\mu^*_t\}_{t={t_0}}^\infty$ is $\pi$-irreducible and aperiodic. Therefore, from the ergodic theorem of Markov chain \cite{roberts2004general,meyn2012markov}, we have that under $\mathbb{P}_{t_0}$,
\begin{flalign}
\lim_{t\rightarrow\infty}\frac{1}{t}\log p_{t_0}^*(\bm y_{t_0},\cdots, \bm y_{t_0+t-1}) = \mathbb{E}_\pi[h(\bm\mu, \bm y)]
\end{flalign}
almost surely. Moreover, the convergence result does not depend on the initial state of $\{\bm y_t, \bm\mu^*_t\}_{t={t_0}}^\infty$.

\textbf{Step 2.} We then show that $\lim_{t\rightarrow\infty}\frac{1}{t}\log p_{t_0}(\bm y_{t_0},\cdots, \bm y_{t_0+t-1}) = \mathbb{E}_\pi[h(\bm\mu, \bm y)]$ almost surely under $\mathbb{P}_{t_0}$ by showing that 
\begin{flalign}\label{eq:conzero}
&\lim_{t\rightarrow\infty}\frac{1}{t}\big(\log p_{t_0}(\bm y_{t_0},\cdots, \bm y_{t_0+t-1}) - \log p_{t_0}^*(\bm y_{t_0},\cdots, \bm y_{t_0+t-1})\big) = 0
\end{flalign}
almost surely under $\mathbb{P}_{t_0}$. 

We have that 
\begin{flalign}
&\log p_{t_0}(\bm y_{t_0},\cdots, \bm y_{t})\nn\\& = \log \frac{p_{t_0}(\bm y_{t_0},\cdots, \bm y_{t})}{p_{t_0}(\bm y_{t_0},\cdots, \bm y_{t-1})} + \log \frac{p_{t_0}(\bm y_{t_0},\cdots, \bm y_{t-1})}{p_{t_0}(\bm y_{t_0},\cdots, \bm y_{t-2})} + \cdots + \log\frac{p_{t_0}(\bm y_{t_0}, \bm y_{t_0+1})}{p_{t_0}(\bm y_{t_0})} + p_{t_0}(\bm y_{t_0})
\end{flalign}
and 
\begin{flalign}\label{eq:eachterm}
&\log \frac{p_{t_0}(\bm y_{t_0},\cdots, \bm y_t)}{p_{t_0}(\bm y_{t_0},\cdots, \bm y_{t-1})}\nn\\&= \log\Big( \frac{1}{\sqrt{(2\pi)^K\det(\bm A\bm\Sigma_{t-1}\bm A^\top + \bm R_{\omega})}}\frac{1}{\sqrt{\det((\bm A\bm\Sigma_{t-1}\bm A^\top + \bm R_{\omega})^{-1} + \bm I)}}\Big) \nn\\&\hspace{0.2cm}-\frac{1}{2}\Big((\bm A\bm\mu_{t-1})^\top(\bm A\bm\Sigma_{t-1}\bm A^\top + \bm R_{\omega})^{-1}(\bm A\bm\mu_{t-1})\nn\\&\hspace{0.2cm}+\bm y_{t}^\top \bm y_t - \big((\bm A\bm\Sigma_{t-1}\bm A^\top + \bm R_{\omega})^{-1}(\bm A\bm\mu_{t-1}) + \bm y_t\big)^\top\nn\\&\hspace{0.2cm}\big((\bm A\bm\Sigma_{t-1}\bm A^\top + \bm R_{\omega})^{-1} + \bm I\big)^{-1} \big((\bm A\bm\Sigma_{t-1}\bm A^\top + \bm R_{\omega})^{-1}(\bm A\bm\mu_{t-1}) + \bm y_t\big) \Big).
\end{flalign}

To show that $\lim_{t\rightarrow\infty}\frac{1}{t}\big(\log p_{t_0}(\bm y_{t_0},\cdots, \bm y_{t_0+t-1}) - \log p_{t_0}^*(\bm y_{t_0},\cdots, \bm y_{t_0+t-1})\big)=0$. It suffices to show the limit sum of each term in \eqref{eq:eachterm} converges. Here we provide the proof of $\lim_{t\rightarrow\infty}\frac{1}{t}\sum_{i=t_0}^{t_0+t-1}\Big((\bm A\bm\mu_i)^\top(\bm A \bm\Sigma_{i-1}\bm A^\top + \bm R_{\omega})^{-1}(\bm A\bm\mu_i) - (\bm A\bm\mu_i^*)^\top(\bm A \bm\Sigma^*\bm A^\top + \bm R_{\omega})^{-1}(\bm A\bm\mu_i^*)\Big) = 0$. The rest of the terms can be proved using the same techniques. Let 
\begin{flalign}
&(\bm A\bm\Sigma_{t}\bm A^\top + \bm R_{\omega} + \bm I)^{-1} = (\bm A\bm\Sigma^*\bm A^\top + \bm R_{\omega} + \bm I)^{-1} + \bm\Lambda_t^1, \nn\\&(\bm A\bm\Sigma_{t}\bm A^\top + \bm R_{\omega})(\bm A\bm\Sigma_{t}\bm A^\top + \bm R_{\omega} + \bm I)^{-1} = (\bm A\bm\Sigma_t\bm A^\top + \bm R_{\omega})(\bm A\bm\Sigma^*\bm A^\top + \bm R_{\omega} + \bm I)^{-1} + \bm\Lambda_t^2,\nn\\& (\bm A\bm\Sigma_{t}\bm A^\top + \bm R_{\omega})^{-1} = (\bm A\bm\Sigma^*\bm A^\top + \bm R_{\omega})^{-1} + \bm\Lambda_t^3.
\nn
\end{flalign}
Since $\bm\Sigma_t$ converges to $\bm\Sigma^*$ as $t\rightarrow\infty$ and $(\bm A\bm\Sigma_{t}\bm A^\top + \bm R_{\omega} + \bm I)^{-1}, (\bm A\bm\Sigma_t\bm A^\top + \bm R_{\omega})(\bm A\bm\Sigma_t\bm A^\top + \bm R_{\omega} + \bm I)^{-1}, (\bm A\bm\Sigma_{t}\bm A^\top + \bm R_{\omega})^{-1}$ are continuous in $\bm\Sigma_t$, we have that $\lim_{t\rightarrow\infty}\|\bm\Lambda_t^1\| = 0$, $\lim_{t\rightarrow\infty}\|\bm\Lambda_t^2\| = 0$ and $\lim_{t\rightarrow\infty}\|\bm\Lambda_t^3\| = 0$. Therefore, $\|\bm\Lambda_t^1\|, \|\bm\Lambda_t^2\|, \|\bm\Lambda_t^3\|$ are uniformly upper bounded for any $t$. Denote by $\|\bm\Lambda^1\|, \|\bm\Lambda^2\|, \|\bm\Lambda^3\|$ the upper bound of $\|\bm\Lambda_t^1\|, \|\bm\Lambda_t^2\|, \|\bm\Lambda_t^3\|$ respectively. Given the initial $\bm\mu_{t_0}$, from \eqref{eq:updaterule}, we have the following representation for $\bm\mu_t$:
\begin{flalign}
&\bm\mu_{t_0+1} = \bm\Lambda_{t_0+1}^1\bm A\bm\mu_{t_0}+ \bm\Lambda_{t_0+1}^2\bm y_{t_0+1} + \bm \mu_{t_0+1}^*,\nn\\
&\bm\mu_{t_0+2} = \big((\bm A\bm\Sigma^*\bm A^\top+\bm R_\omega+\bm I)^{-1}+\bm\Lambda_{t_0+2}^1\big)\bm A(\bm\Lambda_{t_0+1}^1\bm A\bm\mu_{t_0} + \bm\Lambda_{t_0+1}^2\bm y_{t_0+1} + \bm \mu_{t_0+1}^*) \nn\\&\hspace{0.2cm} +
\big((\bm A\bm\Sigma^*\bm A^\top+\bm R_\omega)(\bm A\bm\Sigma^*\bm A^\top+\bm R_\omega+\bm I)^{-1}+\bm\Lambda_{t_0+2}^2\big)\bm y_{t_0+2}\nn\\&= \big((\bm A\bm\Sigma^*\bm A^\top+\bm R_\omega+\bm I)^{-1}+\bm\Lambda_{t_0+2}^1\big)\bm A(\bm\Lambda_{t_0+1}^1\bm A\bm\mu_{t_0} + \bm\Lambda_{t_0+1}^2\bm y_{t_0+1})+\bm\Lambda_{t_0+2}^2\bm y_{t_0+2} + \bm\mu_{t_0+2}^*\nn\\&\hspace{3cm}\vdots\hspace{3cm} \vdots\hspace{3cm}\vdots\nn\\
&\bm\mu_{t_0+t-1} = \bm\Lambda_{t_0+t-1}^2\bm y_{t_0+t-1} + \big((\bm A\bm\Sigma^*\bm A^\top+\bm R_\omega+\bm I)^{-1}+\bm\Lambda_{t_0+t-1}^1\big)\bm A\bm\Lambda_{t_0+t-2}^2\bm y_{t_0+t-2} \nn\\&\hspace{0.2cm}+ \big((\bm A\bm\Sigma^*\bm A^\top+\bm R_\omega+\bm I)^{-1}+\bm\Lambda_{t_0+t-1}^1\big)\bm A\big((\bm A\bm\Sigma^*\bm A^\top+\bm R_\omega+\bm I)^{-1} +\bm\Lambda_{t_0+t-2}^1\big)\bm A\nn\\&\hspace{0.2cm}\bm\Lambda_{t_0+t-3}^2\bm y_{t_0+t-3}+\cdots+\prod_{i=t_0+2}^{t_0+t-1}\Big(\big((\bm A\bm\Sigma^*\bm A^\top+\bm R_\omega+\bm I)^{-1}+\bm\Lambda_{i}^1\big)\bm A\Big)(\bm\Lambda_{t_0+1}^1\bm A\bm\mu_{t_0} +\bm\Lambda_{t_0+1}^2\bm y_{t_0+1})\nn\\&\hspace{0.2cm} + \bm\mu_{t_0+t-1}^*.
\end{flalign}
Let $\bm Y_t = (\bm y_{t_0}, \bm y_{t_0+1},\cdots,\bm y_{t_0+t-1})$ and denote $\bm\mu_t$ as $F_t(\bm Y_t) + \bm\mu^*_t$. We then have that
\begin{flalign}
&\frac{1}{t}\sum_{i=t_0}^{t_0+t-1} \Big((\bm A\bm\mu_i)^\top\big((\bm A \bm\Sigma^*\bm A^\top + \bm R_{\omega})^{-1}+\bm\Lambda_i^3\big)(\bm A\bm\mu_i) - (\bm A\bm\mu_i^*)^\top(\bm A \bm\Sigma^*\bm A^\top + \bm R_{\omega})^{-1}(\bm A\bm\mu_i^*)\Big)\nn\\&= \frac{1}{t}\sum_{i=t_0}^{t_0+t-1}\Big(\big(\bm AF_i(\bm Y_i)+\bm A\bm\mu^*_i\big)^\top\big((\bm A \bm\Sigma^*\bm A^\top + \bm R_{\omega})^{-1}+\bm\Lambda_i^3\big)\nn\\&\hspace{0.2cm}\big(\bm AF_i(\bm Y_i)+\bm A\bm\mu^*_i\big)-(\bm A\bm\mu_i^*)^\top(\bm A \bm\Sigma^*\bm A^\top + \bm R_{\omega})^{-1}(\bm A\bm\mu_i^*)\Big)\nn\\&= \frac{1}{t}\sum_{i=t_0}^{t_0+t-1} \Big(\big(\bm AF_i(\bm Y_i)\big)^\top(\bm A \bm\Sigma^*\bm A^\top + \bm R_{\omega})^{-1}\big(\bm AF_i(\bm Y_i)\big)\nn\\&\hspace{0.2cm}+\big(\bm AF_i(\bm Y_i)\big)^\top\bm\Lambda_i^3\big(\bm AF_i(\bm Y_i)\big) + \big(\bm A\bm \mu^*_i\big)^\top\bm\Lambda_i^3\big(\bm A\bm\mu^*_i\big)\nn\\&\hspace{0.2cm} + 2\big(\bm AF_i(\bm Y_i)\big)^\top(\bm A \bm\Sigma^*\bm A^\top + \bm R_{\omega})^{-1}(\bm A\bm\mu_i^*) + 2\big(\bm AF_i(\bm Y_i)\big)^\top\bm\Lambda_i^3(\bm A\bm\mu_i^*)\Big).
\end{flalign}

In the following, we will show that 
\begin{flalign}
&\lim_{t\rightarrow\infty}\frac{1}{t}\sum_{i=t_0}^{t_0+t-1} \big(\bm AF_i(\bm Y_i)\big)^\top (\bm A \bm\Sigma^*\bm A^\top+\bm R_{\omega})^{-1} \big(\bm AF_i(\bm Y_i)\big)=0\nn
\end{flalign}
and 
\begin{flalign}
\lim_{t\rightarrow\infty}\frac{1}{t}\sum_{i=t_0}^{t_0+t-1}\big(\bm AF_i(\bm Y_i)\big)^\top(\bm A \bm\Sigma^*\bm A^\top + \bm R_{\omega})^{-1}(\bm A\bm\mu_i^*)=0. \nn
\end{flalign}
The convergence of the rest of terms can be proved using the same techniques. To simplify the notation, we set $\bm \mu_{t_0} = \bm 0$. The proof for any arbitrary $\bm\mu_{t_0}$ can also be derived similarly. For $\big(\bm AF_i(\bm Y_i)\big)^\top (\bm A \bm\Sigma^*\bm A^\top+\bm R_{\omega})^{-1} \big(\bm AF_i(\bm Y_i)\big)$, $i = t_0, \cdots, t$, we have that 
\begin{flalign}
&\big(\bm AF_{t_0+1}(\bm Y_{t_0+1})\big)^\top (\bm A \bm\Sigma^*\bm A^\top+\bm R_{\omega})^{-1} \big(\bm AF_{t_0+1}(\bm Y_{t_0+1})\big)\nn\\& = (\bm\Lambda_{t_0+1}^2\bm y_{t_0+1})^\top\bm A^\top(\bm A \bm\Sigma^*\bm A^\top+\bm R_{\omega})^{-1}\bm A\bm\Lambda_{t_0+1}^2\bm y_{t_0+1},\nn\\
& \big(\bm AF_{t_0+2}(\bm Y_{t_0+2})\big)^\top(\bm A \bm\Sigma^*\bm A^\top + \bm R_{\omega})^{-1}\big(\bm AF_{t_0+2}(\bm Y_{t_0+2})\big)\nn\\&= \Big(\big((\bm A\bm\Sigma^*\bm A^\top+\bm R_\omega+\bm I)^{-1}+\bm\Lambda_{t_0+2}^1\big)\bm A(\bm\Lambda_{t_0+1}^2\bm y_{t_0+1}) + \bm \Lambda_{t_0+2}^2\bm y_{t_0+2}\Big)^\top\bm A^\top(\bm A \bm\Sigma^*\bm A^\top+\bm R_{\omega})^{-1}\bm A \nn\\&\hspace{0.2cm}\Big(\big((\bm A\bm\Sigma^*\bm A^\top+\bm R_\omega+\bm I)^{-1}+\bm\Lambda_{t_0+2}^1\big)\bm A(\bm\Lambda_{t_0+1}^2\bm y_{t_0+1}) + \bm \Lambda_{t_0+2}^2\bm y_{t_0+2}\Big),\nn\\
&\big(\bm AF_{t_0+3}(\bm Y_{t_0+3})\big)^\top(\bm A \bm\Sigma^*\bm A^\top + \bm R_{\omega})^{-1}\big(\bm AF_{t_0+3}(\bm Y_{t_0+3})\big)\nn\\&= \Big(\big((\bm A\bm\Sigma^*\bm A^\top+\bm R_\omega+\bm I)^{-1}+\bm\Lambda_{t_0+3}^1\big)\bm A\big((\bm A\bm\Sigma^*\bm A^\top+\bm R_\omega+\bm I)^{-1}+\bm\Lambda_{t_0+2}^1\big)\bm A(\bm\Lambda_{t_0+1}^2\bm y_{t_0+1})\nn\\&\hspace{0.2cm} +\big((\bm A\bm\Sigma^*\bm A^\top+\bm R_\omega+\bm I)^{-1}+\bm\Lambda_{t_0+3}^1\big)\bm A(\bm\Lambda_{t_0+2}^2\bm y_{t_0+2}) + \bm \Lambda_{t_0+3}^2\bm y_{t_0+3}\Big)^\top\bm A^\top(\bm A \bm\Sigma^*\bm A^\top+\bm R_{\omega})^{-1}\bm A \nn\\&\hspace{0.2cm}\Big(\big((\bm A\bm\Sigma^*\bm A^\top+\bm R_\omega+\bm I)^{-1}+\bm\Lambda_{t_0+3}^1\big)\bm A\big((\bm A\bm\Sigma^*\bm A^\top+\bm R_\omega+\bm I)^{-1}+\bm\Lambda_{t_0+2}^1\big)\bm A(\bm\Lambda_{t_0+1}^2\bm y_{t_0+1})\nn\\&\hspace{0.2cm} +\big((\bm A\bm\Sigma^*\bm A^\top+\bm R_\omega+\bm I)^{-1}+\bm\Lambda_{t_0+3}^1\big)\bm A(\bm\Lambda_{t_0+2}^2\bm y_{t_0+2}) + \bm \Lambda_{t_0+3}^2\bm y_{t_0+3}\Big)\nn\\&\hspace{3cm}\vdots\hspace{3cm} \vdots\hspace{3cm} \vdots
\end{flalign}

We note that there exists a $0<\delta<1$ such that for any $i\geq t_0$, $\big\|\big((\bm A\bm\Sigma^*\bm A^\top+\bm R_\omega+\bm I)^{-1}+\bm\Lambda_{i}^1\big)\bm A\big\| = \big\|\big(\bm A\bm\Sigma_{i}\bm A^\top + \bm R_{\omega} + \bm I\big)^{-1}\bm A\big\| \leq \big\|\big(\bm A\bm\Sigma_{i}\bm A^\top + \bm R_{\omega} + \bm I\big)^{-1}\big\|\|\bm A\| \leq (1-\delta)$. Let $B=1-\delta$ and $C = \big\|\bm A^\top (\bm A \bm\Sigma^*\bm A^\top+\bm R_{\omega})^{-1}\bm A\big\|$. We then have that 
\begin{flalign}
&\lim_{t\rightarrow\infty}\frac{1}{t}\sum_{i=t_0}^{t_0+t-1} \Big\|\Big(\big(\bm AF_i(\bm Y_i)\big)^\top (\bm A \bm\Sigma^*\bm A^\top+\bm R_{\omega})^{-1} \big(\bm AF_i(\bm Y_i)\big)\Big\| \nn\\& \leq \lim_{t\rightarrow\infty}\frac{1}{t}\sum_{i=t_0}^{t_0+t-1}\big\|F_i(\bm Y_i)\big\|^2 \big\|\bm A^\top (\bm A \bm\Sigma^*\bm A^\top+\bm R_{\omega})^{-1}\bm A\big\|\nn\\&\leq \lim_{t\rightarrow\infty}\frac{1}{t}\sum_{i=t_0}^{t_0+t-1} C\big\|F_i(\bm Y_i)\big\|^2 \nn\\&\leq \lim_{t\rightarrow\infty}\frac{1}{t}\sum_{i=t_0}^{t_0+t-1} C \bigg(\sum_{j=t_0}^i\ B^{i-j}\|\bm\Lambda^{2}_{j}\|\|\bm y_{j}\|\bigg)^2 \nn\\&= \lim_{t\rightarrow\infty}\frac{1}{t}\sum_{i=t_0}^{t_0+t-1}\sum_{j=t_0}^{i}\sum_{k=t_0}^{i}C B^{2i-j-k}\|\bm\Lambda^2_j\|\|\bm y_j\|\|\bm\Lambda^2_k\|\|\bm y_k\|\nn\\&\leq \lim_{t\rightarrow\infty}\frac{1}{t}\sum_{i=t_0}^{t_0+t-1}\sum_{j=t_0}^{i}\sum_{k=t_0}^{i}CB^{2i-j-k} \big(\|\bm\Lambda^2_j\|^2\|\bm y_j\|^2 + \|\bm\Lambda^2_k\|^2\|\bm y_k\|^2\big), 
\end{flalign}
where the first inequality is due to the submultiplicative of operator norm, and the third inequality is from the explicit expression of $F_i(\bm Y_i)$ and the triangle inequality and the submultiplicative of operator norm. We then consider the coefficient of the sum of all terms containing $\|\bm\Lambda^2_{t_0}\|^2\|\bm y_{t_0}\|^2$, denoted by $Co(\|\bm\Lambda^2_{t_0}\|^2\|\bm y_{t_0}\|^2)$. We have that 
\begin{flalign}
Co(\|\bm\Lambda^2_{t_0}\|^2\|\bm y_{t_0}\|^2) &= \lim_{t\rightarrow\infty}\sum_{i=t_0}^{t_0+t-1}\sum_{j=t_0}^{i}2CB^{2i-t_0-j} \nn\\&\leq\lim_{t\rightarrow\infty}\sum_{i=t_0}^{t_0+t-1} \frac{2CB^{i-t_0}}{1-B}\nn\\&\leq \frac{2C}{(1-B)^2}.
\end{flalign}

Similarly, we can show that the coefficient of $\|\bm\Lambda^2_{t_0+1}\|^2\|\bm y_{t_0+1}\|^2, \cdots, \|\bm\Lambda^2_{t_0+t-1}\|^2\|\bm y_{t_0+t-1}\|^2$ are not larger than $\frac{2C}{(1-B)^2}$. Since $\lim_{t\rightarrow\infty}\|\bm\Lambda_t^2\|^2 = 0$, for any $\epsilon>0$, there exists an integer $T$ such that for any $t>T$, $\|\bm\Lambda_t^2\|^2 < \epsilon$. We then have that 
\begin{flalign}\label{eq:sumofy}
    &\lim_{t\rightarrow\infty}\frac{1}{t}\sum_{i=t_0}^{t_0+t-1} \big(\bm AF_i(\bm Y_i)\big)^\top (\bm A \bm\Sigma^*\bm A^\top+\bm R_{\omega})^{-1} \big(\bm AF_i(\bm Y_i)\big)\nn\\& \leq \lim_{t\rightarrow\infty}\frac{1}{t}2C\frac{\|\bm\Lambda^2_{t_0}\|^2\|\bm y_{t_0}\|^2 + \cdots + \|\bm\Lambda^2_{t_0+t-1}\|^2\|\bm y_{t_0+t-1}\|^2}{(1-B)^2}\nn\\& \leq\lim_{t\rightarrow\infty}\frac{1}{t}2C\bigg(\frac{\|\bm\Lambda^2_{t_0}\|^2\|\bm y_{t_0}\|^2 +\cdots+\|\bm\Lambda^2_T\|^2\|\bm y_T\|^2)}{(1-B)^2} +\frac{\epsilon\big(\|\bm y_{T+1}\|^2 + \cdots +\|\bm y_{t_0+t-1}\|^2\big)}{(1-B)^2} \bigg)\nn\\&\leq \frac{2\epsilon C\mathbb{E}_\pi[\|\bm y\|^2]}{(1-B)^2},
\end{flalign}
where the last inequality is from the ergodic theorem of Markov chain \cite{roberts2004general,meyn2012markov}. 

Since $\mathbb{E}_\pi[\|\bm y\|^2]$ is bounded, by letting $\epsilon\rightarrow 0$, we have that 
\begin{flalign}
    &\lim_{t\rightarrow\infty}\frac{1}{t}\sum_{i=t_0}^{t_0+t-1} \big\|\big(\bm AF_i(\bm Y_i)\big)^\top (\bm A \bm\Sigma^*\bm A^\top+\bm R_{\omega})^{-1} \big(\bm AF_i(\bm Y_i)\big)\big\| = 0.
\end{flalign}
Therefore, we have that 
\begin{flalign}
    &\lim_{t\rightarrow\infty}\frac{1}{t}\sum_{i=t_0}^{t_0+t-1}\big(\bm AF_i(\bm Y_i)\big)^\top (\bm A \bm\Sigma^*\bm A^\top + \bm R_{\omega})^{-1} \big(\bm AF_i(\bm Y_i)\big)=\bm 0.
\end{flalign}

We then consider $\lim_{t\rightarrow\infty}\frac{1}{t}\sum_{i=t_0}^{t_0+t-1}\big(\bm AF_i(\bm Y_i)\big)^\top(\bm A \bm\Sigma^*\bm A^\top + \bm R_{\omega})^{-1}(\bm A\bm\mu_i^*)$. We have that 
\begin{flalign}
&\lim_{t\rightarrow\infty}\frac{1}{t}\sum_{i=t_0}^{t_0+t-1}\big\|\big(\bm AF_i(\bm Y_i)\big)^\top(\bm A \bm\Sigma^*\bm A^\top + \bm R_{\omega})^{-1}(\bm A\bm\mu_i^*)\big\|\nn\\&\leq \lim_{t\rightarrow\infty}\frac{1}{t}\sum_{i=t_0}^{t_0+t-1} \big\|F_i(\bm Y_i)\big\| \big\|\bm A^\top (\bm A \bm\Sigma^*\bm A^\top+\bm R_{\omega})^{-1}\bm A\big\|\big\|\bm\mu^*_i\big\|\nn\\&\leq\lim_{t\rightarrow\infty}\frac{1}{t}\sum_{i=t_0}^{t_0+t-1} C\big\|F_i(\bm Y_i)\big\|\big\|\bm\mu^*_i\big\|\nn\\& = \lim_{t\rightarrow\infty}\frac{1}{t}\sum_{i=t_0}^{t_0+t-1} \sum_{j=t_0}^i C B^{i-j}\|\bm\Lambda^2_j\|\|\bm y_j\| \|\bm\mu^*_i\| \nn\\&\leq \lim_{t\rightarrow\infty}\frac{1}{t}\sum_{i=t_0}^{t_0+t-1} \sum_{j=t_0}^i CB^{i-j} \|\bm\Lambda^2_j\|\big(\|\bm y_j\|^2+\|\bm\mu^*_i\|^2\big).
\end{flalign}
It suffices to show that 
\begin{flalign}
    \lim_{t\rightarrow\infty}\frac{1}{t}\sum_{i=t_0}^{t_0+t-1} \sum_{j=t_0}^i CB^{i-j} \|\bm\Lambda^2_j\|\|\bm y_j\|^2 = 0
\end{flalign}
and 
\begin{flalign}
\lim_{t\rightarrow\infty}\frac{1}{t}\sum_{i=t_0}^{t_0+t-1} \sum_{j=t_0}^i CB^{i-j} \|\bm\Lambda^2_j\|\|\bm\mu^*_i\|^2=0.
\end{flalign}
We have that 
\begin{flalign}
&\lim_{t\rightarrow\infty}\frac{1}{t}\sum_{i=t_0}^{t_0+t-1} \sum_{j=t_0}^i CB^{i-j} \|\bm\Lambda^2_j\|\|\bm y_j\|^2 
\nn\\&= \lim_{t\rightarrow\infty}\frac{1}{t}\sum_{j=t_0}^{t_0+t-1} \sum_{i=j}^{t_0+t-1} CB^{i-j} \|\bm\Lambda^2_j\|\|\bm y_j\|^2
\nn\\&\leq \lim_{t\rightarrow\infty}\frac{1}{t}\sum_{j=t_0}^{t_0+t-1}\frac{C}{1-B}\|\bm\Lambda^2_j\|\|\bm y_j\|^2.
\end{flalign}
For any $\epsilon>0$, there exists an integer $T$ such that for any $t>T$, $\|\bm\Lambda^2_t\|<\epsilon$. It then follows that
\begin{flalign}
&\lim_{t\rightarrow\infty}\frac{1}{t}\sum_{j=t_0}^{t_0+t-1}\frac{C}{1-B}\|\bm\Lambda^2_j\|\|\bm y_j\|^2\nn\\&\leq \lim_{t\rightarrow\infty}\frac{1}{t}C\Big(\frac{\|\bm\Lambda^2_{t_0}\|\|\bm y_{t_0}\|^2 + \cdots+\|\bm\Lambda^2_T\|\|\bm y_T\|^2 }{1-B}+\frac{\epsilon \|\bm y_{T+1}\|^2 +\cdots + \epsilon \|\bm y_{t_0+t-1}\|^2}{1-B}\Big)\nn\\&\leq \frac{\epsilon C\mathbb{E}_\pi[\|\bm y\|^2]}{1-B}.
\end{flalign}
Since $\mathbb{E}_\pi[\|\bm y\|^2]$ is bounded, by letting $\epsilon\rightarrow 0$, we have that $\lim_{t\rightarrow\infty}\frac{1}{t}\sum_{j=t_0}^{t_0+t-1}\frac{C}{1-B}\|\bm\Lambda^2_j\|\|\bm y_j\|^2=0$.

Consider $\lim_{t\rightarrow\infty}\frac{1}{t}\sum_{i=t_0}^{t_0+t-1} \sum_{j=t_0}^i CB^{i-j} \|\bm\Lambda^2_j\|\|\bm\mu^*_i\|^2$. We have that $\lim_{t\rightarrow\infty} B^t = 0$ and $\lim_{t\rightarrow\infty}\|\bm\Lambda_t^2\| = 0$. Therefore, for any $\epsilon>0$, there exists $S$ such that for $s>S$, $\|\bm\Lambda_{s-1}^2\|<\epsilon$ and $B^{s-1}<\epsilon$. Let $T=2S$. For any $t>t_0 + T$, we have that
\begin{flalign}
&\sum_{j=t_0}^t CB^{t-j} \|\bm\Lambda^2_{j}\|\nn\\& = C\big(\|\bm\Lambda^2_{t}\| + B\|\bm\Lambda^2_{t-1}\| + \cdots + B^{t-t_0}\|\bm\Lambda_{t_0}^2\|\big)\nn\\&\leq C\big(\epsilon + B\epsilon + \cdots + B^{\lfloor\frac{t-t_0}{2}\rfloor}\epsilon + B^{\lfloor\frac{t-t_0}{2}\rfloor+1}\|\bm\Lambda_{t-\lfloor\frac{t-t_0}{2}\rfloor-1}^2\| + \cdots + B^{t-t_0}\|\bm\Lambda_{t_0}^2\| \big)\nn\\& \leq C\Big (\frac{\epsilon}{1-B} + \frac{B^{\lfloor\frac{t-t_0}{2}\rfloor+1}}{1-B}\|\bm\Lambda^2\|\Big) \nn\\&\leq \frac{C(1+\|\bm\Lambda^2\|)}{1-B}\epsilon,
\end{flalign}
where for the second inequality, we compute the sum of the first half and the second half respectively and use the fact that $\|\bm\Lambda_t^2\|\leq \|\bm\Lambda^2\|$ for any $t$. Since $\frac{C(1+\|\bm\Lambda^2\|)}{1-B}$ is bounded, the coefficient of $\|\bm\mu^*_t\|$ will converge to zero as $t\rightarrow\infty$. Therefore, following the same steps as in \eqref{eq:sumofy}, we have that $\lim_{t\rightarrow\infty}\frac{1}{t}\sum_{i=t_0}^{t_0+t-1} \sum_{j=t_0}^i CB^{i-j} \|\bm\Lambda^2_j\|\|\bm\mu^*_i\|^2 =0$. Therefore, we have that 
\begin{flalign}
&\lim_{t\rightarrow\infty}\frac{1}{t}\sum_{i=t_0}^{t_0+t-1} \Big((\bm A\bm\mu_i)^\top\big((\bm A \bm\Sigma^*\bm A^\top + \bm R_{\omega})^{-1}+\bm\Lambda_i^3\big)(\bm A\bm\mu_i) - (\bm A\bm\mu_i^*)^\top(\bm A \bm\Sigma^*\bm A^\top + \bm R_{\omega})^{-1}(\bm A\bm\mu_i^*)\Big) = 0,
\end{flalign}
and thus 
\begin{flalign}
&\lim_{t\rightarrow\infty}\frac{1}{t}\big(\log p_{t_0}(\bm y_{t_0},\cdots, \bm y_{t_0+t-1})- \log p_{t_0}^*(\bm y_{t_0},\cdots, \bm y_{t_0+t-1})\big) = 0.
\end{flalign}

\textbf{Step 3.} Since the observations are independent before the change point, we have that $\lim_{t\rightarrow\infty}\frac{1}{t}\log p_\infty(\bm y_{t_0}, \cdots, \bm y_{t_0+t-1}) = \lim_{t\rightarrow\infty}\frac{1}{t}\sum_{i=1}^{t_0+t-1} \log p_\infty(\bm y_i)$. Since $\log p_\infty(\bm y)$ is a quadratic function of $\bm y$, we have that $\mathbb{E}_\pi[|\log p_\infty(\bm y)|]<\infty$. From the ergodic theorem of Markov chain \cite{roberts2004general,meyn2012markov}, we have that under $\mathbb{P}_{t_0}$,
\begin{flalign}
\lim_{t\rightarrow\infty}\frac{1}{t}\sum_{i={t_0}}^{t_0+t-1} \log p_\infty(\bm y_i) = \mathbb{E}_{\pi}[\log p_\infty(\bm y)]
\end{flalign}
almost surely.

It then follows that under $\mathbb{P}_{t_0}$,
\begin{flalign}\label{eq:llras}
&\lim_{t\rightarrow\infty}\frac{1}{t}\log\frac{p_{t_0}(\bm y_{t_0},\cdots, \bm y_{t_0+t-1})}{p_\infty(\bm y_{t_0},\cdots, \bm y_{t_0+t-1})}\nn\\& = \lim_{t\rightarrow\infty}\frac{1}{t}\Big(\log p_{t_0}(\bm y_{t_0},\cdots, \bm y_{t_0+t-1}) - \sum_{i=t_0}^{t_0+t-1} \log p_\infty(\bm y_i)\Big)\nn\\& = \mathbb{E}_\pi[h(\bm\mu, \bm y)] - \mathbb{E}_\pi[\log p_\infty(\bm y)]
\end{flalign}
almost surely.

Let 
\begin{align}
    \mathcal{K} = \mathbb{E}_\pi[h(\bm\mu, \bm y)] - \mathbb{E}_\pi[\log p_\infty(\bm y)].
\end{align}
We then have that for any initial state $\bm y_{t_0}, \bm\mu_{t_0}$, $\lim_{t\rightarrow\infty}\frac{1}{t}\log\frac{p_{t_0}(\bm y_{t_0},\cdots, \bm y_{t_0+t-1})}{p_\infty(\bm y_{t_0},\cdots, \bm y_{t_0+t-1})} = \mathcal{K}$ under $\mathbb{P}_{t_0}$ almost surely. 

From \eqref{eq:ycondition}, we have that given $\bm \mu^*_t$, the distribution of $\bm y_t$ under $\mathbb{P}_{t_0}$ can be fully specified, denoted by $p_{t_0}^*(\bm y_t|\bm \mu^*_t)$. Therefore, we have that $\pi(\bm\mu, \bm y) = \pi(\bm\mu)p_{t_0}^*(\bm y|\bm\mu)$.
Note that $h(\bm\mu, \bm y) = \log p_{t_0}^*(\bm y|\bm \mu)$. We then have that
\begin{flalign}
    \mathcal{K} &= \mathbb{E}_\pi\Big[
    \log\frac{p_{t_0}^*(\bm y|\bm \mu)}{p_\infty(\bm y)}\Big] \nn\\&= \mathbb{E}_\pi\Big[
    \log\frac{\pi(\bm\mu)p_{t_0}^*(\bm y|\bm \mu)}{\pi(\bm \mu)p_\infty(\bm y)}\Big]\nn\\& = D\big(\pi(\bm\mu)p_{t_0}^*(\bm y|\bm \mu)\|\pi(\bm \mu)p_\infty(\bm y)\big)\nn\\&\geq 0,
\end{flalign}
where $D(\cdot\|\cdot)$ denotes the KL-Divergence between two distributions and the equality holds if and only if $\pi(\bm\mu)p_{t_0}^*(\bm y|\bm \mu) \neq \pi(\bm \mu)p_\infty(\bm y)$. From \eqref{eq:ycondition}, we have that $\pi(\bm\mu)p_{t_0}^*(\bm y|\bm \mu) \neq \pi(\bm \mu)p_\infty(\bm y)$. Therefore, we have that $\mathcal{K}>0$.

Therefore, for any $\eta>0$,
\begin{flalign}
\lim_{t\rightarrow\infty}&\sup_{t_0\geq 1}\text{esssup}\mathbb{P}_{t_0}\Big\{\max_{k\leq t}\sum_{i=t_0}^{t_0+k-1}\log\frac{p_{t_0}(\bm y_i|\bm y_{t_0}, \cdots, \bm y_{i-1})}{p_\infty(\bm y_i)} \mathcal{K}(1+\eta)t\big|\bm y_1, \cdots, \bm y_{t_0-1}\Big\} = 0.
\end{flalign}
Then \eqref{eq:waddup} follows from \cite[Theorem 1]{lai1998information}.
\end{proof}

\section{Proof of Theorem \ref{theorem:waddup}}\label{sec:theorem4}
\begin{proof}
Let $\sigma_0 = 0$ and define the stopping times 
\begin{flalign}
\sigma_{m+1} = \inf\Big\{t>\sigma_m: \sum^t_{i=\sigma_m+1}\log\frac{p_1(\bm y_i|\bm y_1,\cdots, \bm y_{i-1})}{p_\infty(\bm y_i|\bm y_1,\cdots, \bm y_{i-1})}\leq 0\Big\},\ \text{for}\ m\geq 0.
\end{flalign}
We have that 
\begin{flalign}
&\mathbb E_{\infty}\bigg[\prod_{i=\sigma_m+1}^{t+1}\frac{p_1(\bm y_i|\bm y_1,\cdots, \bm y_{i-1})}{p_\infty(\bm y_i|\bm y_1,\cdots, \bm y_{i-1})}\ \text{for some}\ t > \sigma_m \Big|\mathcal{F}_t\bigg] \nn\\&=\prod_{i=\sigma_m+1}^{t}\frac{p_1(\bm y_i|\bm y_1,\cdots, \bm y_{i-1})}{p_\infty(\bm y_i|\bm y_1,\cdots, \bm y_{i-1})}\times\mathbb E_{\infty}\bigg[\frac{p_1(\bm y_{t+1}|\bm y_1,\cdots, \bm y_{t})}{p_\infty(\bm y_{t+1}|\bm y_1,\cdots, \bm y_{t})}\Big|\mathcal{F}_t\bigg]\nn\\&= \prod_{i=\sigma_m+1}^{t}\frac{p_1(\bm y_i|\bm y_1,\cdots, \bm y_{i-1})}{p_\infty(\bm y_i|\bm y_1,\cdots, \bm y_{i-1})}\times\int p_1(\bm y_{t+1}|\bm y_1,\cdots, \bm y_{t})d\bm y_{t+1}\nn\\& = \prod_{i=\sigma_m+1}^{t}\frac{p_1(\bm y_i|\bm y_1,\cdots, \bm y_{i-1})}{p_\infty(\bm y_i|\bm y_1,\cdots, \bm y_{i-1})}.
\end{flalign}
Therefore, $\{\prod_{i=\sigma_m+1}^{t}\frac{p_1(\bm y_i|\bm y_1,\cdots, \bm y_{i-1})}{p_\infty(\bm y_i|\bm y_1,\cdots, \bm y_{i-1})}, \mathcal{F}_t, t>\sigma_m\}$ is a martingale under the pre-change distribution with mean 1. Therefore, from Doob's submartingale inequality\cite{williams_1991} and the optional sampling theorem\cite{williams_1991}, we have that
\begin{flalign}
&\mathbb P_\infty\Big\{\sum_{i=\sigma_m+1}^t \log \frac{p_1(\bm y_i|\bm y_1,\cdots, \bm y_{i-1})}{p_\infty(\bm y_i|\bm y_1,\cdots, \bm y_{i-1})} \geq c\ \text{for some}\ t > \sigma_m\big|\mathcal{F}_{\sigma_m}\Big\}\leq e^{-c}.
\end{flalign}

Let $M = \inf\{m\geq 0: \sigma_m<\infty\ \text{and}\ \sum_{i=\sigma_m+1}^t \log\frac{p_1(\bm y_i|\bm y_1,\cdots, \bm y_{i-1})}{p_\infty(\bm y_i|\bm y_1,\cdots, \bm y_{i-1})}\geq c\ \text{for some}\ t > \sigma_m\}$. We have that 
\begin{flalign}
&\mathbb P_\infty(M\geq m+1|\mathcal{F}_{\sigma_m})\nn\\& =\mathbb P_\infty\Big\{\sum_{i=\sigma_m+1}^t \log \frac{p_1(\bm y_i|\bm y_1,\cdots, \bm y_{i-1})}{p_\infty(\bm y_i|\bm y_1,\cdots, \bm y_{i-1})} < c\ \text{for all}\ t> \sigma_m\big|\mathcal{F}_{\sigma_m}\Big\}\nn\\&\geq 1-e^{-c}.
\end{flalign}
We then have that 
\begin{flalign}
\mathbb P_\infty (M>m) &= \mathbb E_\infty \big[\mathbb P_\infty(M\geq m+1|\mathcal{F}_{\sigma_m})\mathbbm{1}_{\{M\geq m\}}\big]\nn\\&\geq (1-e^{-c})\mathbb P_\infty (M>m-1)\nn\\&\geq (1-e^{-c})^2\mathbb P_\infty (M>m-2)\nn\\&\geq (1-e^{-c})^m\mathbb P_\infty (M>0)\nn\\&\geq (1-e^{-c})^m.
\end{flalign}
It then follows that 
\begin{flalign}
\mathbb E_\infty[\tau_c^*] &\geq \mathbb E_\infty[M]\geq \sum_{m=0}^\infty \mathbb P_\infty (M>m)\nn\\&\geq \sum_{m=0}^\infty(1-e^{-c})^m = e^c.
\end{flalign}
Let $c=\log\gamma$, we have that $\mathbb E_\infty[\tau_c^*] \geq \gamma$.

Define $h_t(\bm\mu, \bm y)$ by replacing $\bm \Sigma^*$ with $\bm \Sigma_t$ in $h(\bm\mu, \bm y)$. We then have that $\log L_t = \sum_{i=1}^t h_i(\bm \mu_i, \bm y_i)-\log p_\infty(\bm y_i)$. From \eqref{eq:llras}, we have that for $0<\delta<1$ and any initial $\bm\mu_t, \bm y_t$,
\begin{flalign}
\lim_{n\rightarrow\infty}\sup_{t\geq t_0\geq 1}\text{esssup}&\mathbb{P}_{t_0}\Big\{n^{-1}\sum_{i=t}^{t+n-1} h_i(\bm y_i,\bm \mu_i)-\log p_\infty(\bm y_i) <\mathcal{K}-\delta|\bm y_1, \cdots, \bm y_{t-1}\Big\} = 0.
\end{flalign}
This implies that 
\begin{flalign}
\sup_{t\geq t_0\geq 1}\text{esssup}&\mathbb{P}_{t_0}\Big\{\sum_{i=t}^{t+n_c-1} h_i(\bm y_i,\bm \mu_i)-\log p_\infty(\bm y_i)<c|\bm y_1, \cdots, \bm y_{t-1}\Big\}\leq \delta
\end{flalign}
for all large $c$, where $n_c$ is the largest integer $\leq (1-\delta)^{-1}\mathcal{K}^{-1}c$.
We then have that for any $t_0\geq 1$ and $l\geq 1$,
\begin{flalign}
&\text{esssup} \mathbb{P}_{t_0}\{\tau_c^* - t_0 >ln_c|\mathcal{F}_{t_0-1} \}\nn\\&\leq \text{esssup} \mathbb{P}_{t_0} \Big\{\sum_{i=t_0+(j-1)n_c}^{t_0+jn_c-1} h_i(\bm y_i,\bm \mu_i)-\log p_\infty(\bm y_i)<c\ \text{for all}\ 1\leq j\leq l |\mathcal{F}_{t_0-1}\Big\}\nn\\&\leq \delta^l.
\end{flalign}
Therefore, 
\begin{flalign}
\sup_{t_0\geq 1}\text{esssup}\mathbb{E}_{t_0}[n_c^{-1}(\tau_c^*-t_0)^+|\mathcal{F}_{t_0-1}]\leq \sum_{l=0}^\infty \delta^l  = (1-\delta)^{-1}.\nn
\end{flalign}
Let $c = \log \gamma$ and $\gamma\rightarrow \infty$. Since $n_c\sim (1-\delta)^{-1}\mathcal{K}^{-1}c$, we have that 
\begin{flalign}
\sup_{t_0\geq 1}\text{esssup}\mathbb{E}_{t_0}[(\tau_c^*-t_0)^+|\bm  y_1,\cdots,\bm y_{t_0-1}]\leq \frac{\log\gamma}{\mathcal{K}}(1+o(1)).\nn
\end{flalign}
This completes the proof.
\end{proof}

\section{Proof of Theorem \ref{theorem:ogdarl}}\label{sec:proofarl}
\begin{proof}
Define $p_{\widehat{\theta}_{i}}(\bm y_i|\bm y_1,\cdots, \bm y_{i-1})$ by replacing $\bm A, \bm R_\omega, \bm \mu_i, \bm\Sigma_i$ in $p_1(\bm y_i|\bm y_1,\cdots, \bm y_{i-1})$ with 
$\widehat{\bm A}_i, \widehat{\bm R}_{\omega, i}, \widehat{\bm\mu}_i, \widehat{\bm\Sigma}_i$. Note that the OGA-CuSum can be equivalently written as
\begin{flalign}
    \tau_{\text{OGA}} = \inf\Big\{t: \max_{1\leq k\leq t }\sum^t_{i=k}\log\frac{p_{\widehat{\theta}_{i}}(\bm y_i|\bm y_1,\cdots, \bm y_{i-1})}{p_\infty(\bm y_i|\bm y_1,\cdots, \bm y_{i-1})}\geq c\Big\}.
\end{flalign}
Let $\sigma_0 = 0$ and define the stopping times 
\begin{flalign}
\sigma_{m+1} = \inf\Big\{t>\sigma_m: \sum^t_{i=\sigma_m+1}\log\frac{p_{\widehat{\theta}_{i}}(\bm y_i|\bm y_1,\cdots, \bm y_{i-1})}{p_\infty(\bm y_i|\bm y_1,\cdots, \bm y_{i-1})}\leq 0\Big\}, \ \text{for}\ m\geq 0.
\end{flalign}
We have that 
\begin{flalign}
&\mathbb E_{\infty}\bigg[\prod_{i=\sigma_m+1}^{t+1}\frac{p_{\widehat{\theta}_{i}}(\bm y_i|\bm y_1,\cdots, \bm y_{i-1})}{p_\infty(\bm y_i|\bm y_1,\cdots, \bm y_{i-1})}\ \text{for some}\ t>\sigma_m\Big|\mathcal{F}_t\bigg] \nn\\&=\prod_{i=\sigma_m+1}^{t}\frac{p_{\widehat{\theta}_{i}}(\bm y_i|\bm y_1,\cdots, \bm y_{i-1})}{p_\infty(\bm y_i|\bm y_1,\cdots, \bm y_{i-1})}\times\mathbb E_{\infty}\bigg[\frac{p_{\widehat{\theta}_{t+1}}(\bm y_{t+1}|\bm y_1,\cdots, \bm y_{t})}{p_\infty(\bm y_{t+1}|\bm y_1,\cdots, \bm y_{t})}\Big|\mathcal{F}_t\bigg]\nn\\&= \prod_{i=\sigma_m+1}^{t}\frac{p_{\widehat{\theta}_{i}}(\bm y_i|\bm y_1,\cdots, \bm y_{i-1})}{p_\infty(\bm y_i|\bm y_1,\cdots, \bm y_{i-1})}\times\int p_{\widehat{\theta}_{t+1}}(\bm y_{t+1}|\bm y_1,\cdots, \bm y_{t})d\bm y_{t+1}\nn\\& = \prod_{i=\sigma_m+1}^{t}\frac{p_{\widehat{\theta}_{i}}(\bm y_i|\bm y_1,\cdots, \bm y_{i-1})}{p_\infty(\bm y_i|\bm y_1,\cdots, \bm y_{i-1})}.
\end{flalign}
Therefore, $\{\prod_{i=\sigma_m+1}^{t}\frac{p_{\widehat{\theta}_{i}}(\bm y_i|\bm y_1,\cdots, \bm y_{i-1})}{p_\infty(\bm y_i|\bm y_1,\cdots, \bm y_{i-1})}, \mathcal{F}_t, t>\sigma_m\}$ is a martingale under the pre-change distribution with mean 1. Therefore, from Doob's submartingale inequality\cite{williams_1991} and the optional sampling theorem\cite{williams_1991}, we have that
\begin{flalign}
&\mathbb P_\infty\Big\{\sum_{i=\sigma_m+1}^t \log \frac{p_{\widehat{\theta}_{i}}(\bm y_i|\bm y_1,\cdots, \bm y_{i-1})}{p_\infty(\bm y_i|\bm y_1,\cdots, \bm y_{i-1})} \geq c\ \text{for some}\ t > \sigma_m\big|\mathcal{F}_{\sigma_m}\Big\}\leq e^{-c}.
\end{flalign}
Theorem \ref{theorem:ogdarl} can then be proved following the same techniques as in the proof of Theorem \ref{theorem:waddup}.
\end{proof}

\newpage
\bibliographystyle{ieeetr}
\bibliography{QCD}
\end{document}